\newcommand{\siml}{{\cal S}}
\newcommand{\env}{\mathcal Z}
\newcommand{\ot}{\textsf{OT}}
\newcommand{\bc}{\textsf{BC}}
\newcommand{\onecc}{\textsf{1CC}}
\newcommand{\twocc}{\textsf{2CC}}
\newcommand{\imaxacc}{I_{\max}^{\mathrm{acc}}}
\newcommand{\imax}{I_{\max}}
\renewcommand{\succ}{P_{\mathrm{succ}}}
\newcommand{\adapt}{P^{\mathrm{A}}_{\mathrm{succ}}}
\newcommand{\nonadapt}{P^{\mathrm{NA}}_{\mathrm{succ}}}
\newcommand{\guess}{P_{\mathrm{guess}}}
\def\Ball{B^{\delta}}
\def\u01{U_{0,1}}
\begin{document}

\title{Adaptive Versus Non-Adaptive Strategies in the Quantum Setting
  with Applications\thanks{\copyright IACR 2016. This article is the final
    version submitted by the authors to the IACR and to
    Springer-Verlag on June 3, 2016. 
  }}

\author{Frédéric Dupuis\inst{2}\and Serge Fehr\inst{1}\and Philippe
  Lamontagne\inst{3}\and Louis Salvail\inst{3}}

\institute{CWI, Amsterdam, The Netherlands \and Faculty of
  Informatics, Masaryk University, Brno, Czech Republic \and
  Université de Montréal (DIRO), Montréal, Canada}

\maketitle
\begin{abstract}
  We prove a general relation between {\em adaptive} and {\em
    non-adap\-tive} strategies in the quantum setting, i.e., between
  strategies where the adversary can or cannot adaptively base its
  action on some auxiliary quantum side information. Our relation
  holds in a very general setting, and is applicable as long as we can
  control the bit-size of the side information, or, more generally,
  its ``information content''. Since adaptivity is notoriously
  difficult to handle in the analysis of (quantum) cryptographic
  protocols, this gives us a very powerful tool: as long as we have
  enough control over the side information, it is sufficient to
  restrict ourselves to non-adaptive attacks.

  We demonstrate the usefulness of this methodology with two
  examples. The first is a quantum bit commitment scheme based on {\em
    1-bit cut-and-choose}. Since bit commitment implies oblivious
  transfer (in the quantum setting), and oblivious transfer is
  universal for two-party computation, this implies the universality
  of $1$-bit cut-and-choose, and thus solves the main open problem
  of~\cite{fkszz13}. The second example is a quantum bit commitment
  scheme proposed in 1993 by Brassard {\em et al}. It was originally
  suggested as an unconditionally secure scheme, back when this was
  thought to be possible. We partly restore the scheme by proving it
  secure in (a variant of) the bounded quantum storage model.

  In both examples, the fact that the adversary holds quantum side
  information obstructs a direct analysis of the scheme, and we
  circumvent it by analyzing a non-adaptive version, which can be done
  by means of known techniques, and applying our main result.
\end{abstract}

\section{Introduction}
\label{sec:intro}

\subsubsection{Adaptive Versus Non-Adaptive Attacks. }

We consider attacks on cryptographic schemes, and we compare adaptive
versus non-adaptive strategies for the adversary. In our context, a
strategy is {\em adaptive} if the adversary's action can depend on
some auxiliary side information, and it is {\em non-adaptive} if the
adversary has no access to any such side information. Non-adaptive
strategies are typically much easier to analyze than adaptive ones.

Adaptive strategies are clearly more powerful than non-adaptive ones,
but this advantage is limited by the amount and quality of the
side-information available to the attacker. In the classical case,
this can be made precise by the following simple argument.  If the
side information consists of a classical $n$-bit string, then
adaptivity increases the adversary's success probability in breaking
the scheme by at most a factor of $2^n$. Indeed, a particular
non-adaptive strategy is to try to guess the $n$-bit side information
and then apply the best adaptive strategy. Since the guess will be
correct with probability at least $2^{-n}$, it follows that $\nonadapt
\geq 2^{-n} \adapt$, and thus $\adapt \leq 2^n \nonadapt$, where
$\adapt$ and $\nonadapt$ respectively denote the optimal adaptive and
non-adaptive success probabilities for the adversary to break the
scheme. Even though there is an exponential loss, this is a very
powerful relation between adaptive and non-adaptive strategies as it
applies very generally, and it provides a non-trivial bound as long as
we can control the size of the side information, and the non-adaptive
success probability is small enough.

\subsubsection{Our Technical Result. }

In this work, we consider the case where the side information (and the
cryptographic scheme as a whole) may be {\em quantum}. A natural
question is whether the same (or a similar) relation holds between
adaptive and non-adaptive quantum strategies. The quantum equivalent
to guessing the side information would be to emulate the $n$-qubit
quantum side information by the completely mixed state
$\frac{\id_A}{2^n}$.  Since it always holds that $\rho_{AB}\leq
2^{2n}\frac{\id_A}{2^n}\otimes \rho_B$, we immediately obtain a
similar relation $\adapt \leq 2^{2n} \nonadapt$, but with an
additional factor of $2$ in the exponent. The bound is tight for
certain choices of $\rho_{AB}$, and thus this additional loss is
unavoidable in general; this seems to mostly answer the above
question.

In this work, we show that this is actually not yet the end of the
story. Our main technical result consists of a more refined
treatment\,---\,and analysis\,---\, of the relation between adaptive
and non-adaptive quantum strategies.  We show that in a well-defined
and rather general context, we can actually bound $\adapt$ as $$\adapt
\leq 2^{\imaxacc(B;A)} \nonadapt \enspace ,$$ where $\imaxacc(B;A)$ is
a new (quantum) information measure that is upper bounded by the
number of qubits of $A$. As such, we not only recover the classical
relation $\adapt \leq 2^n \nonadapt$ in the considered context, but we
actually improve on it.

In more detail, we consider an abstract ``game'', specified by an
arbitrary bipartite quantum state $\rho_{AB}$, of which the adversary
Alice and a challenger Bob hold the respective registers $A$ and $B$,
and by an arbitrary family $\{E^j\}_{j \in \cal J}$ of binary-outcome
POVMs acting on register $B$. The game is played as follows: Alice
chooses an index $j$, communicates it to Bob, and Bob measures his
state $B$ using the POVM $E^j = \{E_0^j,E_1^j\}$ specified by
Alice. Alice wins the game if Bob's measurement outcome is~$1$. In the
adaptive version of the game, Alice can choose the index $j$ by
performing a measurement on $A$; in the non-adaptive version, she has
to decide upon $j$ without resorting to $A$.  As we will see, this
game covers a large class of quantum cryptographic schemes, where
Bob's binary measurement outcome specifies whether Alice succeeded in
breaking the scheme.

Our main result shows that in any such game it holds that $\adapt \leq
2^n \nonadapt$ where $n = \hmax{A}$, i.e., the number of qubits of
$A$. Actually, as already mentioned, we show a more general and
stronger bound $\adapt \leq 2^{\imaxacc(B;A)} \nonadapt$ that also
applies if we have no bound on the number of qubits of $A$, but we
have some control over its ``information content'' $\imaxacc(B;A)$,
which is a new information measure that we introduce and show to be
upper bounded by $\hmax{A}$.

To give a first indication of the usefulness of our result, we observe
that it easily provides a lower-bound on the quantity, \emph{or
  quality}, of entanglement (as measured by $\imaxacc(B;A)$) that a
dishonest committer needs in order to carry out the standard
attack~\cite{PhysRevLett.78.3414} on a quantum bit commitment
scheme. Let Alice be the committer and Bob the receiver in a bit
commitment scheme in which the opening phase consists of Alice
announcing a classical string $j$ and Bob applying a verification
described by POVM $\{E_{\mathrm{accept}}^j,
E_{\mathrm{reject}}^j\}$. In the standard attack, Alice always commits
to 0 while purifying her actions and applies an operation on her
register if she wants to change her commitment to~$1$.  If we let
$\rho_{AB}$ be the state of Bob's register $B$ that corresponds to a
commitment to $0$, then the probability that a memoryless Alice
successfully changes her commitment to $1$ is $\nonadapt=
\max_j\trace{E_{\mathrm{accept}}^j \rho_{AB}}$ where the maximum is
over all $j$ that open~$1$. If Alice holds a register $A$ entangled
with $B$, our main result implies that $\imaxacc(B;A)$ must be
proportional to $-\log\nonadapt$ for Alice to have a constant
probability of changing her commitment.

But the real potential lies in the observation that adaptivity is
notoriously difficult to handle in the analysis of cryptographic
protocols, and as such our result provides a very powerful tool: as
long as we have enough control over the side information, it is
sufficient to restrict ourselves to non-adaptive attacks.

\subsubsection{Applications. }

We demonstrate the usefulness of this methodology by proving the
security of two commitment schemes. In both examples, the fact that
the adversary holds quantum side information obstructs a direct
analysis of the scheme, and we circumvent it by analyzing a
non-adaptive version and applying our general result.

\paragraph{One-bit cut-and-choose is universal for two-party computation. }

As a first example, we propose and prove secure a quantum bit
commitment scheme that uses an ideal {\em 1-bit cut-and-choose}
primitive $\onecc$ (see~Fig.~\ref{fig:cc} in Sect.~\ref{sec:onecc}) as
a black box. Since bit commitment ($\bc$) implies oblivious transfer
($\ot$) in the quantum setting~\cite{bbcs91,c94,u09}, and oblivious
transfer is universal for two-party computation, this implies the
universality of $\onecc$ and thus completes the zero/xor/one law
proposed in~\cite{fkszz13}. Indeed, it was shown in~\cite{fkszz13}
that in the information-theoretic quantum setting, every primitive is
either trivial (zero), universal (one), or can be used to implement an
XOR\,---\,{\em except} that there was one missing piece in their
characterization: it excluded $\onecc$ (and any primitive that implies
$\onecc$ but not~$\twocc$). How $\onecc$ fits into the landscape was
left as an open problem in~\cite{fkszz13}; we resolve it here.
    
\paragraph{The BCJL bit commitment scheme in (a variant of) the
  bounded quantum storage model. }
As a second application, we consider a general class of
non-interactive commitment schemes and we show that for any such
scheme, security against an adversary with no quantum memory {\em at
  all} implies security in a slightly strengthened version of the
standard bounded quantum storage model\footnote{Beyond bounding the
  adversary's quantum memory, we also restrict its measurements to be
  projective; this can be justified by the fact that to actually
  implement a non-projective measurement, additional quantum memory is
  needed. }, with a corresponding loss in the error parameter.%
\footnote{We have already shown above how to argue for the standard
  attack~\cite{PhysRevLett.78.3414} against quantum bit commitment
  schemes; taking care of {\em arbitrary} attacks is more involved. }

As a concrete example scheme, we consider the classic BCJL scheme that
was proposed in 1993 by Brassard {\em et al.}~\cite{366851} as a
candidate for an unconditionally-secure scheme\,---\,back when this
was thought to be possible\,---\,but until now has resisted any
rigorous \emph{positive} security analysis.  Our methodology of
relating adaptive to non-adaptive security allows us to prove it
secure in (a variant of) the bounded quantum storage model.

\section{Preliminaries}
\label{sec:prelims}

\subsection{Basic Notation}

For any string $x = (x_1,\ldots,x_n) \in \bool^n$ and any subset
$t=\{t_1,\dots t_k\}\subseteq [n]$, we write $x_t$ for the substring
$x_t = (x_{t_1},\dots,x_{t_k}) \in \bool^{|t|}$. The $n$-bit all-zero
string is denoted as $0^n$. The Hamming distance between between two
strings $x,y\in \bool^n$ is defined as $d(x,y)= \sum_{i=1}^n x_i\oplus
y_i$.  For $\delta>0$ and $x\in \bool^n$, $B^\delta(x)$ denotes the
set of all $n$ bit strings at Hamming distance at most $\delta n$
from~$x$.  We denote by $\lg{(\cdot)}$ the logarithm with respect to
base $2$.  It is well known that the set $B^\delta(x)$ contains at
most $2^{nh(\delta)}$ strings where $h(\delta)=
-\delta\lg(\delta)-(1-\delta)\lg(1-\delta)$ is the binary entropy
function.

Ideal cryptographic \emph{functionalities} (or \emph{primitives}) are
referenced by their name written in sans-serif font. They are fully
described by their input/output behaviour (see, e.g., functionality
$\onecc$ described in Fig.~\ref{fig:cc} in
Sect.~\ref{sec:onecc}). Cryptographic \emph{protocols} have their
names written in small capitals with a primitive name in superscript
if the protocol has black-box access to this primitive (e.g. protocol
$\textsc{bc}^\onecc$ in Sect.~\ref{sec:onecc}).

\subsection{Quantum States and More}

We assume familiarity with the basic concepts of quantum information;
we merely fix notation and terminology here.  We label quantum
registers by capital letters $A, B$ etc. and their corresponding
Hilbert spaces are respectively denoted by $\hilbert_A, \hilbert_B$
etc.  We say that a quantum register $A$ is ``empty'' if
$\dim(\hilbert_A) = 1$.  The state of a quantum register is specified
by a density operator $\rho$, a positive semidefinite trace-$1$
operator.  We typically write $\rho_A$ for the state of $A$, etc.  The
set of density operators for register $A$ is denoted ${\cal
  D}(\hilbert_A)$.  We write $X \geq 0$ to express that the operator
$X$ is positive semidefinite, and $Y \geq X$ to express that $Y-X$ is
positive semidefinite.

We measure the distance between two states $\rho$ and $\sigma$ in
terms of their \emph{trace distance} $D(\rho, \sigma):= \frac
12\pnorm{\rho-\sigma}{1}$, where \smash{$\pnorm{X}{1}:=
  \trace{\sqrt{X^\dagger X}}$} is the \emph{trace norm}. We say that
$\rho$ and $\sigma$ are \emph{$\epsilon$-close} if $D(\rho,\sigma)\leq
\epsilon$, and we call them \emph{indistinguishable} if their trace
distance is negligible (in the security parameter).

The \emph{computational} (or rectilinear) basis for a single qubit
quantum register is denoted by $\{\ket 0_+, \ket 1_+\}$, and the
\emph{diagonal} basis by $\{\ket 0_\times, \ket 1_\times\}$.  Recall
that $\ket 0_\times = \frac 1{\sqrt 2} (\ket 0_++\ket 1_+)$ and $\ket
1_\times = \frac 1{\sqrt 2} (\ket 0_+-\ket 1_+)$.  For any $x\in
\bool^n$ and $\theta \in \{+,\times \}^n$, we set $\ket x_\theta :=
\bigotimes_{i=1}^n \ket {x_i}_{\theta_i}$.  In the following, we will
view and represent any sequence of diagonal and computational bases by
a bit string $\theta\in\{0,1\}^n$, where $\theta_i=0$ represents the
computational basis and $\theta_i=1$ the diagonal basis.  In other
words, for $b\in\{0,1\}$, $\ket{b}_0:=\ket{b}_+$ and $\ket{b}_1 :=
\ket{b}_{\times}$. And for $\theta,x\in\{0,1\}^n$, we define
$\ket{x}_\theta := \bigotimes_{i=1}^n \ket{x_i}_{\theta_i}$.

Operations on quantum registers are modeled as completely-positive
trace-preserving (CPTP) maps. To indicate that a CPTP map
$\mathcal{E}$ takes inputs in $A$ and outputs to $B$, we use subscript
$A \rightarrow B$. If $\mathcal E_{A\rightarrow B}$ is a CPTP map
acting on register $A$, we slightly abuse notation and write $\mathcal
E(\rho_{AC})$ instead of $\mathcal E\otimes \id_C(\rho_{AC})$ where
$\id_C$ is the CPTP map that leaves register $C$ unchanged. A
\emph{measurement} on a quantum register $A$, producing a measurement
outcome $X$, is a CPTP map $\mathcal{E}_{A \rightarrow X}$ of the form
\[ \mathcal{E}(\rho_A) = \sum_{x\in \cal X} \trace{E_x \rho_A}
  \proj{x}_X\enspace, \]
where $\{\ket{x}\}$ a basis of $\hilbert_X$ and
$E = \{ E_x \}_{x \in \cal X}$ is a POVM, i.e., a collection of
positive semidefinite operators satisfying
$\sum_{x\in \cal X} E_x= \id$.

The {\em spectral norm} of an operator $X$ is defined as $\|X\|:=
\max_{\ket u}{\|X\ket u\|}$, where the maximum is over all normalized
vectors $\ket u$, and an operator is called an {\em orthogonal
  projector} if $X^\dagger = X$ and $X^2 = X$.  The following was
shown in~\cite{dfss05}.

\begin{lemma}\label{lem:normineq}
  For any two orthogonal projectors $X$ and $Y$: $\|X+Y\|\leq
  1+\|XY\|$. 
\end{lemma}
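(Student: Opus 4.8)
The plan is to use the variational characterization of the spectral norm. Since $X$ and $Y$ are orthogonal projectors they are Hermitian and positive semidefinite, so $M := X+Y$ is Hermitian and positive semidefinite; hence $\|X+Y\|$ equals its largest eigenvalue, which I call $\lambda$. Fix a normalized eigenvector $\ket u$ with $(X+Y)\ket u = \lambda\ket u$. The strategy is to convert this single eigenvalue equation into a lower bound of the form $\|XY\| \geq \lambda - 1$, which rearranges to the claim.

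First I would extract the projector relations. Multiplying $(X+Y)\ket u = \lambda \ket u$ on the left by $X$ and using $X^2 = X$ gives $XY\ket u = (\lambda-1)X\ket u$; symmetrically $YX\ket u = (\lambda-1)Y\ket u$. The key observation is then that $\ket v := X\ket u$ is an eigenvector of the operator $XYX$: indeed $XYX\ket v = XYX^2\ket u = XYX\ket u = (\lambda-1)XY\ket u = (\lambda-1)^2 X\ket u = (\lambda-1)^2\ket v$, where I used $X^2 = X$ together with the two relations just derived. Provided $\ket v \neq 0$, this shows $\|XYX\| \geq (\lambda-1)^2$.

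To finish I would relate $\|XYX\|$ to $\|XY\|$. Since $Y$ is a projector, $Y = Y^\dagger Y$, so $XYX = (YX)^\dagger(YX)$, whence $\|XYX\| = \|YX\|^2$. Combining this with the eigenvalue bound yields $\|YX\| \geq |\lambda-1| \geq \lambda - 1$, and because $(XY)^\dagger = YX$ we have $\|XY\| = \|YX\|$, giving $\lambda \leq 1 + \|XY\|$ as desired. The only loose end is the degenerate case $X\ket u = 0$, where the eigenvalue equation collapses to $Y\ket u = \lambda\ket u$; since $Y$ is a projector its eigenvalues lie in $\{0,1\}$, so $\lambda \leq 1 \leq 1 + \|XY\|$ and the bound holds trivially.

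The main obstacle is not the final algebra but identifying the right quantity to bound: naive attempts via the triangle inequality, or via $\langle u|(X+Y)|u\rangle = \|X\ket u\|^2 + \|Y\ket u\|^2$, only give $\|X+Y\| \leq 2$ and never bring in $\|XY\|$ with the correct coefficient. The insight that makes the proof work is to feed the top eigenvector of $X+Y$ into $XYX = (YX)^\dagger(YX)$, so that the gap $\lambda - 1$ reappears squared and can be read off directly as a lower bound on $\|YX\|^2$.
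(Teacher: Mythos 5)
Your proof is correct. Note that the paper does not actually prove Lemma~\ref{lem:normineq}; it imports it from~\cite{dfss05} with no argument given, so there is no in-paper proof to compare against. Your argument is the standard (and, as far as the cited reference goes, essentially the original) one: take a normalized top eigenvector $\ket u$ of the Hermitian positive semidefinite operator $X+Y$, derive $XY\ket u = (\lambda-1)X\ket u$ and $YX\ket u = (\lambda-1)Y\ket u$ from idempotency, feed $\ket v = X\ket u$ into $XYX = (YX)^\dagger(YX)$ to get $\|YX\|^2 = \|XYX\| \geq (\lambda-1)^2$, and handle the degenerate branch $X\ket u = 0$ separately via the fact that projector eigenvalues lie in $\{0,1\}$ --- all of which you do correctly, including the absolute-value step $\|YX\| \geq |\lambda - 1| \geq \lambda - 1$ that makes the $\lambda \leq 1$ regime harmless. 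One small caveat worth being aware of: you use that the spectral norm of $X+Y$ is attained by an eigenvector, which is valid in the finite-dimensional setting of this paper but would require approximate eigenvectors (with an $\epsilon$ carried through each identity) in general Hilbert spaces. A more structural alternative you might compare against is Jordan's two-projector lemma, which block-diagonalizes $X$ and $Y$ into $1\times 1$ and $2\times 2$ blocks; on a $2\times 2$ block with principal angle $\theta$ one computes $\|X+Y\| = 1+\cos\theta$ and $\|XY\| = \cos\theta$, so the bound is in fact tight block by block. That route buys geometric insight and tightness, whereas your eigenvector computation is shorter and entirely elementary.
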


\subsection{Entropy and Privacy Amplification}\label{sec:uncertrel}
In the following, the two notions of entropy that we will be dealing
with are the min-entropy and the zero-entropy of a quantum
register. They are defined as follows:
\begin{definition}\label{def:min-maxentropy}
  The \emph{min-entropy} of a bipartite quantum state $\rho_{AB}$
  relative to register $B$ is the largest number $\hmin{A|B}_{\rho}$
  such that there exists a $\sigma_B\in {\cal D}(\hilbert_B)$,
  $$2^{-\hmin{A|B}_{\rho}}\cdot \id_A\otimes \sigma_B\geq \rho_{AB}  \enspace.$$
 The \emph{zero-entropy} of a state $\rho_A$ is defined as
  $${\hmax{A}_\rho} = \lg{\left( \rank{\rho_A}\right)}
  \enspace.$$ We write $\hmin{A|B}$ and $\hmax{A}$ when the state of
  the registers is clear from the context.
\end{definition}
The min-entropy has the following operational
interpretation~\cite{min-max-entropy}. Let $\rho_{XB}$ be a so-called
cq-state, i.e., of the from $\rho_{XB} = \sum_x P_X(x)\proj x_X
\otimes \rho^x_B$. Then $\guess(X|B)= 2^{-\hmin{X|B}_\rho}$ where
$\guess(X|B)$ is the probability of guessing the value of the
classical random variable $X$, maximized over all POVMs on $B$.

Let $\mathcal G_n$ be a family of hash functions $g:
\{0,1\}^n\to\{0,1\}$ with a binary output.  The family $\mathcal G_n$
is said to be \emph{two-universal} if for any $x, y\in \bool^n$ with
$x\neq y$ and $G\in_R\mathcal G_n$, $$\Pr{\left(G(x)=G(y)\right)}\leq
\frac 12\enspace.$$

Privacy amplification against quantum side information, in case of
hash functions with a binary-output, can be stated as follows:

\begin{theorem}[Privacy amplification~\cite{RK04}]\label{thm:privacyamp}
  Let $\mathcal G_n$ be a two-universal family of hash functions $g:
  \{0,1\}^n\to\{0,1\}$ with a binary output.  Furthermore, let
  $\rho_{XE} = \sum_{x \in \bool^n} P_X(x) \proj{x}_X \otimes
  \rho_E^x$ be an arbitrary cq-state, and let
  \[ \rho_{YGXE} := \frac{1}{|\mathcal{G}_n|} \sum_{g \in
      \mathcal{G}_n} \sum_{x \in \bool^n} P_X(x) \proj{g(x)}_Y \otimes
    \proj{g}_G \otimes \proj{x}_X \otimes \rho_E^x \] be the state
  obtained by choosing a random $g$ in $\mathcal{G}_n$, applying $g$
  to the value stored in $X$, and storing the result in register $Y$.
  Then,
  \[ D\biggl(\rho_{YGE}, \frac{\id_{Y}}2\otimes \rho_{GE}\biggr) \leq
    \frac 12 \cdot 2^{-\frac 12 (\hmin{X|E}-1)}\enspace. \]
\end{theorem}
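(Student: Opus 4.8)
The plan is to follow the standard route for the Leftover Hash Lemma against quantum side information: replace the trace distance by a $\sigma_E$-weighted Hilbert--Schmidt ($2$-norm) distance, in which the averaging over the hash function can be resolved by a collision-probability computation, and then invoke two-universality. Since $G$ is classical and uniform and the choice of $g$ acts only on $X$ (leaving the $E$-marginal equal to $\rho_E$), I would first decompose
\[ D\Bigl(\rho_{YGE},\tfrac{\id_Y}2\otimes\rho_{GE}\Bigr) = \frac{1}{|\mathcal G_n|}\sum_{g\in\mathcal G_n} D\Bigl(\rho_{YE}^g,\tfrac{\id_Y}2\otimes\rho_E\Bigr), \]
where $\rho_{YE}^g = \sum_{x}P_X(x)\proj{g(x)}_Y\otimes\rho_E^x$. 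This reduces the statement to controlling the average over $g$ of the per-function distances to uniform.

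Next I would fix the reference state $\sigma_E\in{\cal D}(\hilbert_E)$ that realizes the min-entropy, i.e. $2^{-\hmin{X|E}}\id_X\otimes\sigma_E\geq\rho_{XE}$, and apply the norm inequality that bounds the trace distance from uniform by the $\sigma_E$-weighted $2$-norm. Writing $\Delta^g := \rho_{YE}^g-\frac{\id_Y}2\otimes\rho_E$ and abbreviating the weighting $\sigma_E^{-1/4}(\cdot)\sigma_E^{-1/4}$ (acting as identity on $Y$), this inequality reads $\pnorm{\Delta^g}{1}\le\sqrt2\,\pnorm{\sigma_E^{-1/4}\Delta^g\sigma_E^{-1/4}}{2}$, the factor $\sqrt2 = \sqrt{\dim\hilbert_Y}$ arising from a Cauchy--Schwarz argument in the $\sigma_E$-weighted inner product. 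Combining this with the decomposition above and Jensen's inequality to pull the average inside the square root gives
\[ D\Bigl(\rho_{YGE},\tfrac{\id_Y}2\otimes\rho_{GE}\Bigr) \le \frac{\sqrt2}2\Bigl(\frac{1}{|\mathcal G_n|}\sum_{g}\pnorm{\sigma_E^{-1/4}\Delta^g\sigma_E^{-1/4}}{2}^2\Bigr)^{1/2}. \]

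The third step is to expand $\pnorm{\sigma_E^{-1/4}\Delta^g\sigma_E^{-1/4}}{2}^2 = \trace{(\sigma_E^{-1/4}\Delta^g\sigma_E^{-1/4})^2}$. Since $X$, $G$ and $Y = g(X)$ are all classical, the square together with the cross terms against the uniform part collapse, upon averaging over $g$, to
\[ \sum_{x,x'}\Bigl(\frac{1}{|\mathcal G_n|}\sum_{g}[g(x)=g(x')]\Bigr)P_X(x)P_X(x')\,T_{x,x'} \;-\; \tfrac12\sum_{x,x'}P_X(x)P_X(x')\,T_{x,x'}, \]
where $T_{x,x'} := \trace{\sigma_E^{-1/2}\rho_E^x\sigma_E^{-1/2}\rho_E^{x'}}\ge 0$ (each $T_{x,x'}$ is the trace of a product of two positive operators). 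For $x=x'$ the averaged indicator equals $1$, whereas two-universality gives $\frac{1}{|\mathcal G_n|}\sum_g[g(x)=g(x')]\le\frac12$ for $x\neq x'$; hence the off-diagonal contributions are dominated by the subtracted uniform term, and what survives is at most the diagonal collision sum $\sum_x P_X(x)^2\,T_{x,x}$.

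Finally I would bound the diagonal term using the min-entropy condition: conjugating $P_X(x)\rho_E^x\le 2^{-\hmin{X|E}}\sigma_E$ by $\sigma_E^{-1/2}$ yields $P_X(x)\,\sigma_E^{-1/2}\rho_E^x\sigma_E^{-1/2}\le 2^{-\hmin{X|E}}\id$, so that $\sum_x P_X(x)^2\,T_{x,x} = \sum_x P_X(x)\trace{\bigl(P_X(x)\sigma_E^{-1/2}\rho_E^x\sigma_E^{-1/2}\bigr)\rho_E^x}\le 2^{-\hmin{X|E}}\sum_x P_X(x) = 2^{-\hmin{X|E}}$. Substituting into the bound gives $D\le\frac{\sqrt2}2\cdot 2^{-\hmin{X|E}/2} = \frac12\cdot 2^{-\frac12(\hmin{X|E}-1)}$, as claimed. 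The step I expect to be the main obstacle is the norm inequality of the second paragraph: showing that the quantum trace distance is controlled by the $\sigma_E$-weighted $2$-norm with only the factor $\sqrt{\dim\hilbert_Y}=\sqrt2$, rather than a factor growing with $\dim\hilbert_E$, is exactly where the quantum side information must be handled, and it relies essentially on the operator inequality $2^{-\hmin{X|E}}\id_X\otimes\sigma_E\ge\rho_{XE}$ defining the reference state.
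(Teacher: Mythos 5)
The paper does not actually prove this theorem: it is imported verbatim from Renner and König~\cite{RK04}, so there is no in-paper proof to compare against. Your reconstruction is the standard argument from that source, and it is correct including the constant: decomposing over the classical register $G$, bounding the trace distance by the $\sigma_E$-weighted Hilbert--Schmidt norm via Cauchy--Schwarz (the factor being $\sqrt{\trace{\id_Y\otimes\sigma_E}}=\sqrt{2}$ for one output bit), averaging the collision computation over $g$ so that two-universality kills the off-diagonal terms against the subtracted uniform part, and bounding the diagonal term through the blockwise consequence $P_X(x)\,\rho_E^x\leq 2^{-\hmin{X|E}}\sigma_E$ of the min-entropy condition. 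The only minor point worth adding is that $\sigma_E$ need not be invertible, so $\sigma_E^{-1/4}$ should be read as a generalized inverse; this is harmless because the operator inequality forces $\mathrm{supp}(\rho_E^x)\subseteq\mathrm{supp}(\sigma_E)$ whenever $P_X(x)>0$.
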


\section{Main Result}
\label{sec:mainresult}

We consider an abstract game between two parties Alice and Bob. The
game is specified by a joint state $\rho_{AB}$, shared between Alice
and Bob who hold respective registers $A$ and $B$, and by a non-empty
finite family ${\bf E} = \{E^j\}_{j\in \mathcal J}$ of binary-outcome
POVMs $E^j = \{E^j_0,E^j_1\}$ acting on $B$. An execution of the game
works as follows: Alice announces an index $j \in \mathcal J$ to Bob,
and Bob measures register $B$ of the state $\rho_{AB}$ using the POVM
$E^j$ specified by Alice's choice of $j$. Alice {\em wins} the game if
the measurement outcome is $1$. We distinguish between an {\em
  adaptive} and a {\em non-adaptive} Alice. An {\em adaptive} Alice
can obtain $j$ by performing a measurement on her register $A$ of
$\rho_{AB}$; on the other hand, an {\em non-adaptive} Alice has to
produce $j$ from scratch, i.e., without accessing $A$.  This motivates
the following formal definitions.

\begin{definition}
  Let $\rho_{AB}$ be a bipartite quantum state, and let ${\bf E} =
  \{E^j\}_{j\in \mathcal J}$ be a non-empty finite family of
  binary-outcome POVMs $E^j = \{E^j_0,E^j_1\}$ acting on $B$.  Then,
  we define
$$
\succ(\rho_{AB},{\bf E}) := \max_{\{F_j\}_{j}}\sum_{j\in \mathcal J}
\tr\Bigl(\bigl(F_j\otimes E_1^j\bigr)\rho_{AB}\Bigr) \enspace ,
$$
where the maximum is over all POVMs $\{F_j\}_{j\in \mathcal J}$ acting
on $A$.  We call \linebreak$\succ(\rho_{AB},{\bf E})$ the
\emph{adaptive success probability}, and we call $\succ(\rho_{B},{\bf
  E})$ the \emph{non-adaptive success probability}, where the latter
is naturally understood by considering an ``empty'' $A$, and it equals
$$
\succ(\rho_{B},{\bf E}) = \max_{j\in \cal J}\tr\bigl(E_1^j
\rho_{B}\bigr) \enspace .
$$
If $\rho_{AB}$ and $\bf E$ are clear from the context, we write
$\adapt$ and $\nonadapt$ instead of $\succ(\rho_{AB},{\bf E})$ and
$\succ(\rho_{B},{\bf E})$.
\end{definition}

As a matter of fact, for the sake of generality, we consider a setting
with an additional quantum register $A'$ to which both the adaptive
and the non-adaptive Alice have access to, but, as above only the
adaptive Alice has access to $A$. In that sense, we will compare an
adaptive with a {\em semi-adaptive} Alice.  Formally, we will consider
a tripartite state $\rho_{AA'B}$ and relate $\succ(\rho_{AA'B},{\bf
  E})$ to $\succ(\rho_{A'B},{\bf E})$. Obviously, the special case of
an ``empty'' $A'$ will then provide a relation between $\adapt$ and
$\nonadapt$.

We now introduce a new measure of (quantum) information
$\imaxacc(B;A|A')_\rho$, which will relate the adaptive to the non- or
semi-adaptive success probability in our main theorem. In its
unconditional form $\imaxacc(B;A)_\rho$, it is the accessible version
of the max-information $\imax(B;A)_\rho$ introduced
in~\cite{berta-christ-renn-reverse}; this means that it is the amount
of max-information that can be accessed via measurements on Alice's
share.
 
\begin{definition} \label{def:hmaxacc} Let $\rho_{AA'B}$ be a
  tripartite quantum state. Then, we define
  \linebreak$\imaxacc(B;A|A')_{\rho}$ as the smallest real number such
  that, for every measurement $\mathcal{M}_{AA' \rightarrow X}$ there
  exists a measurement $\mathcal{N}_{A' \rightarrow X}$ such that
    \[ \mathcal{M}(\rho_{AA'B}) \leq 2^{\imaxacc(B;A|A')_{\rho}}
\mathcal{N}(\rho_{A'B}) \enspace. \]
The unconditional version $\imaxacc(B;A)_{\rho}$ is naturally defined
by considering $A'$ to be ``empty''; the above condition then coincides with
$$\mathcal M(\rho_{AB})\leq 2^{\imaxacc(B;A)_\rho}\sigma_X\otimes
\rho_B\enspace ,$$ for some normalized density matrix $\sigma_X \in
{\cal D}(\hilbert_X)$, which can be interpreted as the outcome of a
measurement $\mathcal N_{\mathbb C\rightarrow X}$ on an ``empty''
register.
\end{definition}

We are now ready to state and prove our main result.

\begin{theorem} \label{thm:forcedresult} Let $\rho_{AA'B}$ be a
  tripartite quantum state, and let ${\bf E} = \{E^j\}_{j\in \mathcal
    J}$ be a non-empty finite family of binary-outcome POVMs $E^j$
  acting on $B$. Then, we have that
\[ 
  \succ(\rho_{AA'B},{\bf E}) \leq 2^{\imaxacc(B;A|A')_{\rho}}
  \succ(\rho_{A'B},{\bf E}) \enspace.
 \]
\end{theorem}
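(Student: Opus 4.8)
The plan is to unfold the definition of the adaptive success probability and immediately apply the defining property of $\imaxacc(B;A|A')$ to the adversary's optimal measurement, after which a non-adaptive strategy falls out for free. Concretely, the adaptive success probability is a maximum over POVMs $\{F_j\}_j$ on $A$ of $\sum_j \tr((F_j \otimes E_1^j)\rho_{AA'B})$. The key observation is that such a POVM, applied to $A$ (or $AA'$), is exactly a measurement $\mathcal{M}_{AA'\to X}$ producing a classical outcome $j \in \mathcal J$ in register $X$; more precisely, I would regard the adaptive strategy as a measurement $\mathcal{M}$ that reads $AA'$ and records its guess for $j$ in a classical register $X$ indexed by $\mathcal J$. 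Writing the measured state $\mathcal{M}(\rho_{AA'B}) = \sum_j \proj{j}_X \otimes \sigma^j_B$ for appropriate subnormalized $\sigma^j_B$ on $B$, the adaptive winning probability is precisely $\sum_j \tr(E_1^j \sigma^j_B)$, where the $E^j_1$ now acts only on $B$ and the index $j$ is dictated by the classical label $X$.

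Next I would invoke Definition~\ref{def:hmaxacc}: for the measurement $\mathcal{M}$ just constructed, there exists a measurement $\mathcal{N}_{A'\to X}$ with $\mathcal{M}(\rho_{AA'B}) \leq 2^{\imaxacc(B;A|A')_\rho}\,\mathcal{N}(\rho_{A'B})$. Since the operator $\sum_j \proj{j}_X \otimes E_1^j$ (acting on $XB$) is positive semidefinite, the operator inequality $Y \geq X \Rightarrow \tr(ZY) \geq \tr(ZX)$ for $Z \geq 0$ lets me pass the inequality through this observable. This yields
\[
\sum_j \tr\bigl(E_1^j\,\sigma^j_B\bigr) \;\leq\; 2^{\imaxacc(B;A|A')_\rho}\,\tr\Bigl(\bigl(\textstyle\sum_j \proj{j}_X \otimes E_1^j\bigr)\,\mathcal{N}(\rho_{A'B})\Bigr).
\]
Writing $\mathcal{N}(\rho_{A'B}) = \sum_j \proj{j}_X \otimes \tau^j_B$ for subnormalized $\tau^j_B$ on $B$, the right-hand trace collapses to $\sum_j \tr(E_1^j \tau^j_B)$, which is exactly the value achieved by the \emph{semi-adaptive} strategy that measures only $A'$ (via $\mathcal{N}$) to pick $j$. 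Hence this quantity is at most $\succ(\rho_{A'B},{\bf E})$, and taking the maximum over the original adaptive POVMs $\{F_j\}$ gives the theorem.

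The step I expect to require the most care is the bookkeeping that identifies an arbitrary POVM $\{F_j\}_j$ on $A$ (the object in the definition of $\succ$) with a measurement $\mathcal{M}_{AA'\to X}$ in the sense of Definition~\ref{def:hmaxacc}, and likewise the reverse identification of $\mathcal{N}_{A'\to X}$ with a valid semi-adaptive strategy. One must check that the classical outcome register $X$ is allowed to range over $\mathcal J$, that $\mathcal{M}$ may ignore $A'$ (so that it genuinely corresponds to measuring $A$ alone, as $\succ(\rho_{AA'B},{\bf E})$ demands), and that the measurement $\mathcal{N}$ guaranteed by the definition indeed only touches $A'$ and thus defines an admissible non-/semi-adaptive Alice. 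The inequality-preservation under $\tr(Z\,\cdot)$ for $Z\geq 0$ is routine, but it is worth stating explicitly since it is the mechanism that converts the operator inequality of $\imaxacc$ into the scalar bound on success probabilities. Once these identifications are made cleanly, the unconditional special case (empty $A'$) follows immediately, recovering the relation between $\adapt$ and $\nonadapt$.
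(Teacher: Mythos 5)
Your proposal is correct and is essentially the paper's own proof: the paper also fixes the optimal POVM as a measurement $\mathcal{M}_{AA'\to J}$, invokes Definition~\ref{def:hmaxacc} to obtain $\mathcal{N}_{A'\to J}$, and converts the operator inequality into the scalar bound by applying the completely positive map $\sigma_{JB}\mapsto\sum_j\trace{(\proj{j}\otimes E_1^j)\sigma_{JB}}$, which is exactly your step of tracing against the positive operator $\sum_j\proj{j}_X\otimes E_1^j$. The only cosmetic difference is that the paper phrases this as applying a CP map to both sides rather than as monotonicity of $\trace{Z\,\cdot}$ for $Z\geq 0$, and note that in the semi-adaptive setting the maximization in $\succ(\rho_{AA'B},{\bf E})$ is over POVMs on all of $AA'$, so your worry about $\mathcal{M}$ needing to ignore $A'$ is moot.
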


By considering an ``empty'' $A'$, we immediately obtain the following.
\begin{corollary}\label{cor:forcedresult} 
  Let $\rho_{AB}$ be a bipartite quantum state, and let ${\bf E} =
  \{E^j\}_{j\in \mathcal J}$ be as above. Then,
\[ 
\adapt \leq 2^{\imaxacc(B;A)_{\rho}} \nonadapt \enspace.
 \]
\end{corollary}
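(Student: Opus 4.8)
The plan is to obtain the corollary as the immediate specialization of Theorem~\ref{thm:forcedresult} to the case where the auxiliary register $A'$ is ``empty,'' i.e.\ $\dim(\hilbert_{A'}) = 1$, exactly as announced in the remark preceding the statement. All of the substance already resides in the theorem, so the work reduces to checking that, under this specialization, each of its three ingredients collapses to the corresponding unconditional quantity in the corollary. I would carry this out by treating the two success probabilities and the information measure in turn.

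For the success probabilities, I would first note that since $A'$ carries a unique one-dimensional state, $\rho_{AA'B}$ is canonically identified with $\rho_{AB}$, and any POVM on the joint register $AA'$ is, up to this trivial tensor factor, a POVM on $A$ alone; hence the optimization defining $\succ(\rho_{AA'B},{\bf E})$ is literally the one defining $\adapt = \succ(\rho_{AB},{\bf E})$. On the semi-adaptive side, with $A'$ trivial the reduced state $\rho_{A'B}$ equals $\rho_B$, and a POVM $\{F_j\}$ on a one-dimensional register is simply a probability distribution $\{p_j\}$ over $\mathcal J$; thus $\succ(\rho_{A'B},{\bf E})$ maximizes $\sum_j p_j\,\tr(E_1^j\rho_B)$ over all such distributions. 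Since this linear objective is maximized at a vertex of the simplex, it equals $\max_j \tr(E_1^j\rho_B) = \succ(\rho_B,{\bf E}) = \nonadapt$, matching the definition of the non-adaptive success probability.

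For the information measure, the reduction is by construction: Definition~\ref{def:hmaxacc} already introduces $\imaxacc(B;A)_\rho$ precisely as $\imaxacc(B;A|A')_\rho$ with an empty $A'$, the measurement $\mathcal N_{A'\to X}$ degenerating to a fixed output state $\sigma_X$. Substituting these three identifications into the inequality of Theorem~\ref{thm:forcedresult} then yields $\adapt \leq 2^{\imaxacc(B;A)_\rho}\nonadapt$ directly. I expect no genuine difficulty here; the only step warranting a moment's care is the semi-adaptive one, namely confirming that maximizing over POVMs on a trivial register really does recover the ``$\max$ over $j$'' form of $\nonadapt$ rather than some strictly larger convex combination\,---\,which the argument above settles.
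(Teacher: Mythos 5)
Your proposal is correct and takes essentially the same route as the paper, which obtains Corollary~\ref{cor:forcedresult} simply as the immediate specialization of Theorem~\ref{thm:forcedresult} to an ``empty'' $A'$. The routine verifications you spell out\,---\,that a POVM on a one-dimensional register is a probability distribution over $\mathcal J$ whose linear objective is maximized at a vertex, recovering $\nonadapt = \max_j \tr(E_1^j\rho_B)$, and that $\imaxacc(B;A)_\rho$ is by definition the conditional measure with trivial $A'$\,---\,are exactly the identifications the paper leaves implicit (the vertex observation is already built into its definition of the non-adaptive success probability).
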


\begin{proof}[of Theorem~\ref{thm:forcedresult}] Let $\{F_j\}_{j \in
    \cal J}$ be an arbitrary POVM acting on $AA'$, and let
  $\mathcal{M}_{AA' \rightarrow J}$ be the corresponding measurement
  $\mathcal{M}(\sigma_{AA'}) = \sum_j \trace{F_j \sigma} \proj{j}$.
  We define the map
$$
\mathcal{E}_{JB \rightarrow \mathbb C}(\sigma_{JB}) := \sum_j
\trace{(\proj{j} \otimes E_1^j) \sigma_{JB}}\enspace,
$$ 
which is completely positive (but not trace-preserving in
general). From the definition of $\imaxacc$, we know that there exists
a measurement $\mathcal{N}_{A' \rightarrow J}$, i.e., a CPTP map of
the form $\mathcal{N}(\sigma_{A'}) = \sum_j \trace{F'_j \sigma}
\proj{j}$ for a POVM $\{F'_j\}_{j \in \cal J}$ acting on $A'$, such
that
$$
     \mathcal{M}(\rho_{AA'B}) \leq 2^{\imaxacc(B;A|A')_{\rho}}
\mathcal{N}(\rho_{A'B}) \enspace. 
$$
Applying $\mathcal{E}$ on both sides gives
$$
(\mathcal{E} \circ \mathcal{M})(\rho_{AA'B}) \leq
2^{\imaxacc(B;A|A')_{\rho}} (\mathcal{E} \circ
\mathcal{N})(\rho_{A'B})\enspace,
$$
and expanding both sides using the definitions of $\mathcal{E}$,
$\mathcal{M}$ and $\cal N$ gives 
\begin{align*}
  \sum_j \trace{(F_j \otimes E_1^j) \rho_{AA'B}} &\leq
  2^{\imaxacc(B;A|A')_{\rho}} \sum_j \trace{(F'_j \otimes E_1^j) \rho_{A'B}} \\&\leq
  2^{\imaxacc(B;A|A')_{\rho}} \succ(\rho_{A'B},{\bf E}) \enspace. 
\end{align*}
This yields the theorem statement, since the left-hand side equals to
\linebreak$\succ(\rho_{AA'B},{\bf E})$ when maximized over the choice
of the POVM $\{F_j\}_{j \in \cal J}$.\qed
\end{proof}
By the following proposition, we see that
Corollary~\ref{cor:forcedresult} implies a direct generalization of
the classical bound, which ensures that giving access to $n$ bits
increases the success probability by at most~$2^n$, to qubits.


\begin{proposition}\label{lem:imaxlesshzero}
  For any $\rho_{AB}$, we have that $\imaxacc(B;A)_{\rho} \leq H_0(A)_{\rho}$.
\end{proposition}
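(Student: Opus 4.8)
The plan is to unfold Definition~\ref{def:hmaxacc} for the unconditional quantity and verify the defining inequality for an arbitrary measurement. Writing $n=H_0(A)_\rho=\lg(\rank{\rho_A})$, I must produce, for every measurement $\mathcal{M}_{A\rightarrow X}$ with associated POVM $\{M_x\}$, a density operator $\sigma_X$ with $\mathcal{M}(\rho_{AB})\leq 2^n\,\sigma_X\otimes\rho_B$. Since $\mathcal{M}(\rho_{AB})=\sum_x\proj{x}_X\otimes\tilde{\rho}_B^x$ with $\tilde{\rho}_B^x:=\tr_A\bigl((M_x\otimes\id_B)\rho_{AB}\bigr)$ is block-diagonal in register $X$, I would look for a diagonal $\sigma_X=\sum_x p_x\proj{x}_X$; then the target inequality holds as soon as it holds blockwise, i.e.\ as soon as $\tilde{\rho}_B^x\leq 2^n p_x\,\rho_B$ for some probability vector $(p_x)_x$.

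The key step is the operator inequality $\tilde{\rho}_B^x\leq\|M_x\|\cdot\rho_B$. I would obtain it by noting that the linear map $\Lambda\colon N\mapsto\tr_A\bigl((N\otimes\id_B)\rho_{AB}\bigr)$ is positive: for $N\geq 0$ and any $\ket{\phi}_B$ we have $\bra{\phi}\Lambda(N)\ket{\phi}=\tr\bigl((N\otimes\proj{\phi}_B)\rho_{AB}\bigr)\geq 0$, being the trace of a product of two positive operators. A positive linear map is monotone, so feeding it the inequality $M_x\leq\|M_x\|\,\id_A$ yields $\tilde{\rho}_B^x=\Lambda(M_x)\leq\|M_x\|\,\Lambda(\id_A)=\|M_x\|\,\rho_B$. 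Since $M_x\geq 0$ gives $\|M_x\|\leq\tr(M_x)$, this further weakens to $\tilde{\rho}_B^x\leq\tr(M_x)\,\rho_B$.

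It then remains to control $\sum_x\tr(M_x)$, and this is where I expect the only real obstacle\,---\,and the reason the exponent is $n$ rather than $2n$\,---\,to lie. For a POVM on the full space $\hilbert_A$ one only gets $\sum_x\tr(M_x)=\dim(\hilbert_A)$, which can be far larger than $2^n=\rank{\rho_A}$. I would remove this slack by replacing each $M_x$ with $M_x':=\Pi_A M_x\Pi_A$, where $\Pi_A$ projects onto the support of $\rho_A$. Because $\rho_{AB}$ is supported on $\mathrm{supp}(\rho_A)\otimes\hilbert_B$, a short computation using cyclicity of the trace shows $\tr_A\bigl((M_x'\otimes\id_B)\rho_{AB}\bigr)=\tilde{\rho}_B^x$, so the post-measurement operators are unchanged; meanwhile $\sum_x M_x'=\Pi_A$, hence $\sum_x\tr(M_x')=\rank{\rho_A}=2^n$.

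Putting the pieces together, I would set $\sigma_X:=2^{-n}\sum_x\tr(M_x')\proj{x}_X$, which is a legitimate density operator because its weights sum to $1$. Combining the blockwise bounds $\tilde{\rho}_B^x\leq\tr(M_x')\,\rho_B=2^n\bigl(2^{-n}\tr(M_x')\bigr)\rho_B$ over all $x$ then gives $\mathcal{M}(\rho_{AB})\leq 2^n\,\sigma_X\otimes\rho_B$. As $\mathcal{M}$ was arbitrary, Definition~\ref{def:hmaxacc} yields $\imaxacc(B;A)_\rho\leq n=H_0(A)_\rho$. The two points needing care are the positivity (hence monotonicity) of $\Lambda$ and the verification that the support restriction leaves $\tilde{\rho}_B^x$ intact; both are routine once the set-up is in place.
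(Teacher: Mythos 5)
Your proof is correct, and it reaches the bound by a genuinely different route than the paper. The paper's proof purifies $\rho_{AB}$ to $\ket{\psi}_{ABR}$, writes $\ket{\psi} = (\id_A\otimes V)\ket{\Phi}_{A\bar{A}}$ for the unnormalized maximally entangled state $\ket{\Phi}$, bounds each block of $(\mathcal{M}\otimes\id)(\Phi)$ by $\id_{\bar{A}}$, and then conjugates by $V$ and traces out $R$, using $VV^{\dagger}=\psi_{BR}$. Your argument stays on $\rho_{AB}$ itself: you derive the blockwise bound $\trace[A]{(M'_x\otimes\id_B)\rho_{AB}} \leq \tr(M'_x)\,\rho_B$ from positivity (hence operator monotonicity) of the map $\Lambda\colon N\mapsto \trace[A]{(N\otimes\id_B)\rho_{AB}}$ applied to $M'_x \leq \tr(M'_x)\,\id_A$, and you close with the trace count $\sum_x \tr(M'_x)=\rank{\rho_A}=2^{H_0(A)}$ after projecting the POVM onto the support of $\rho_A$. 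The two proofs secretly construct the same witness $\sigma_X$: in the paper's argument, $\trace[A]{(M_x\otimes\id)\Phi}$ is the transpose of $M_x$ on the support, so $\lambda_x$ is likewise proportional to $\tr(\Pi_A M_x \Pi_A)$. What your route buys is elementarity and precision: no purification is needed, mere positivity of $\Lambda$ (not complete positivity) suffices, and your explicit replacement $M_x\mapsto \Pi_A M_x\Pi_A$ cleanly handles a point the paper leaves implicit\,---\,for the paper's $\{\lambda_x\}$ to be a probability distribution and for $\trace{\Phi}=2^{\hmax{A}}$ to hold, the sum in $\ket{\Phi}$ must run over a basis of the support of $\rho_A$, whereas the text literally takes $\bar{A}$ ``of the same dimension as $A$'', which would give $\trace{\Phi}=\dim(\hilbert_A)$. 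What the paper's route buys is a reusable template: transferring an operator inequality on the maximally entangled state through $V$, turning $\id_{\bar{A}}$ into $\psi_{BR}$ and hence into $\rho_B$ after tracing out $R$, is a standard purification trick that adapts readily to related max-information quantities.
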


\begin{proof}
  Let $\ket{\psi}_{ABR}$ be a purification of $\rho_{AB}$ and let
  $\mathcal{M}_{A\rightarrow X}$ be a measurement on $A$. Since
  $\ket{\psi}$ is also a purification of $\rho_{A}$, there exists a
  linear operator $V_{\bar{A} \rightarrow BR}$ from a register
  $\bar{A}$ of the same dimension as $A$ into $BR$ such that
  $\ket{\psi}_{ABR} = (\id_A \otimes V) \ket{\Phi}_{A\bar{A}}$,
  with $\ket{\Phi} = \sum_i \ket{i}_{A} \otimes
  \ket{i}_{\bar{A}}$. Now, first note that
  \[ 2^{-H_0(A)} (\mathcal{M} \otimes \id)(\Phi_{A\bar{A}}) =
    \sum_x \lambda_x \proj{x}_X \otimes \omega_{\bar{A}}^x \leq
    \sum_x \lambda_x \proj{x}_X \otimes \id_{\bar{A}}\enspace, \]
  where $\{ \lambda_x \}$ is a probability distribution, and each
  $\omega_{\bar{A}}^x$ is normalized because
  $\trace{\Phi}=2^{\hmax{A}}$. Multiplying both sides of the
  inequality by $2^{H_0(A)}$ and conjugating by $V$, we get
  \[ (\mathcal{M} \otimes \id)(\proj{\psi}) \leq 2^{H_0(A)}
    \sum_x \lambda_x \proj{x} \otimes VV^{\dagger}\enspace. \]
  Using the fact that
  $VV^{\dagger} = \psi_{BR}:=\trace[A]{\proj \psi}$, this yields
  \[ (\mathcal{M} \otimes \id)(\proj{\psi}) \leq 2^{H_0(A)}
    \sum_x \lambda_x \proj{x} \otimes \psi_{BR}\enspace. \] Tracing out $R$ on
  both sides and defining $\sigma_X = \sum_x \lambda_x \proj{x}$ then
  yields
\[ (\mathcal{M} \otimes \id)(\rho_{AB}) \leq 2^{H_0(A)}
  \sigma_X \otimes \rho_{B} \enspace , \]
which proves the claim.  \qed
\end{proof}
One might naively expect that also the conditional version
$\imaxacc(B;A|A')_{\rho}$ is upper bounded by $H_0(A)_{\rho}$,
implying a corresponding statement for a {\em semi-adaptive} Alice:
giving access to $n$ {\em additional} qubits increases the success
probability by at most~$2^n$. However, this is not true, as the
following example illustrates. Let register $B$ contain two random
classical bits, and let $A$ and $A'$ be two qubit registers,
containing one of the four Bell states, and which one it is, is
determined by the two classical bits. Alice's goal is to guess the two
bits. Clearly, $A'$ alone is useless, and thus a semi-adaptive Alice
having access to $A'$ has a guessing probability of at most
$\frac14$. On the other hand, adaptive Alice can guess them with
certainty by doing a Bell measurement on $AA'$.

However, Proposition~\ref{lem:imaxlesshzero} does generalize to the
conditional version in case of a {\em classical} $A'$. 

\begin{proposition}\label{prop:classical-side-info}
For any state $\rho_{ZAB}$ with classical $Z$: $$\imaxacc(B;A|Z)_{\rho}
\leq \max_z \imaxacc(B;A)_{\rho^z} \leq H_0(A)_{\rho}\enspace .$$
\end{proposition}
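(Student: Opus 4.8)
The plan is to prove the two inequalities separately, the first being the substantive one. Since $Z$ is classical, I write $\rho_{ZAB} = \sum_z P_Z(z)\,\proj{z}_Z \otimes \rho^z_{AB}$, and I denote by $\rho^z_{AB}$ the corresponding conditional states. To bound $\imaxacc(B;A|Z)_\rho$, I must exhibit, for an arbitrary measurement $\mathcal{M}_{AZ \rightarrow X}$, a measurement $\mathcal{N}_{Z\rightarrow X}$ satisfying $\mathcal{M}(\rho_{ZAB}) \leq 2^{\alpha}\,\mathcal{N}(\rho_{ZB})$ with $\alpha := \max_z \imaxacc(B;A)_{\rho^z}$, exactly as demanded by Definition~\ref{def:hmaxacc}.

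The key step exploits the classicality of $Z$. For any POVM $\{E_x\}$ on $AZ$ defining $\mathcal{M}$, only its diagonal blocks $E_x^{z,z}$ (acting on $A$) contribute when $\mathcal M$ is applied to $\rho_{ZAB}$, since the off-diagonal blocks are annihilated by the block-diagonal structure of $\rho_{ZAB}$ in the $Z$-basis. Hence $\mathcal{M}$ factors as ``read $z$, then apply the measurement $\mathcal{M}^z_{A\rightarrow X}$ with POVM $\{E_x^{z,z}\}_x$'', giving $\mathcal{M}(\rho_{ZAB}) = \sum_z P_Z(z)\,\mathcal{M}^z(\rho^z_{AB})$. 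Applying the unconditional measure $\imaxacc(B;A)_{\rho^z}$ to each $\mathcal{M}^z$ yields a witnessing state $\sigma^z_X$ with $\mathcal{M}^z(\rho^z_{AB}) \leq 2^{\imaxacc(B;A)_{\rho^z}}\,\sigma^z_X\otimes\rho^z_B \leq 2^{\alpha}\,\sigma^z_X\otimes\rho^z_B$, the last step using $\sigma^z_X\otimes\rho^z_B\geq 0$. I then let $\mathcal{N}_{Z\rightarrow X}$ be the measure-and-prepare channel that reads $z$ and outputs $\sigma^z_X$; since each $\sigma^z_X$ is itself the output of a measurement on the empty register and hence diagonal, $\mathcal N$ is a genuine measurement (its POVM is $\{F'_x\}_x$ with $F'_x = \sum_z \bra{x}\sigma^z_X\ket{x}\,\proj{z}_Z$, which is positive and sums to $\id_Z$). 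Multiplying the per-$z$ inequalities by $P_Z(z)\geq 0$ and summing preserves the operator ordering and gives exactly $\mathcal{M}(\rho_{ZAB}) \leq 2^{\alpha}\,\mathcal{N}(\rho_{ZB})$. I expect the main obstacle to be precisely this reduction: justifying that only the diagonal blocks of $\mathcal M$ matter, and that the assembled $\mathcal{N}$ is a legitimate measurement rather than merely a general channel.

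For the second inequality, I invoke Proposition~\ref{lem:imaxlesshzero} conditional-state-wise to get $\imaxacc(B;A)_{\rho^z}\leq H_0(A)_{\rho^z}$ for every $z$. It then remains to note $H_0(A)_{\rho^z}\leq H_0(A)_\rho$: since $\rho_A = \sum_z P_Z(z)\,\rho^z_A$ is a sum of positive semidefinite operators, the support of each $\rho^z_A$ (for $z$ with $P_Z(z)>0$) is contained in that of $\rho_A$, so $\rank(\rho^z_A)\leq\rank(\rho_A)$ and hence $H_0(A)_{\rho^z}\leq H_0(A)_\rho$. Taking the maximum over $z$ chains the two bounds into $\max_z\imaxacc(B;A)_{\rho^z}\leq H_0(A)_\rho$, completing the argument.
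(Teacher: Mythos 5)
Your proof is correct and follows essentially the same route as the paper's: decompose $\rho_{ZAB}$ over $z$, apply the per-$z$ unconditional bound $\imaxacc(B;A)_{\rho^z} \leq \max_z \imaxacc(B;A)_{\rho^z}$, reassemble the per-$z$ witnesses into a single measurement on $Z$, and obtain the second inequality from Proposition~\ref{lem:imaxlesshzero} applied to each $\rho^z$ together with $H_0(A)_{\rho^z}\leq H_0(A)_{\rho}$. The only difference is explicitness: where the paper asserts $\imaxacc(B;A|Z)_{\proj{z}\otimes\rho^z}=\imaxacc(B;A)_{\rho^z}$ and the existence of a patched measurement $\mathcal{N}$ with $\mathcal{N}(\proj{z})=\mathcal{N}^z(\proj{z})$, you spell out the underlying diagonal-block reduction and the explicit POVM $\{F'_x\}_x$ with $F'_x=\sum_z \bra{x}\sigma^z_X\ket{x}\proj{z}_Z$, which the paper leaves implicit.
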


An additional property of $\imaxacc$ is that quantum operations that
are in tensor product form on registers $A$ and $B$ cannot increase
the max-accessible-information.
\begin{proposition}\label{prop:cptpmaps}
  Let $\mathcal E_{AB\rightarrow A'B'}$ be a CPTP map of the form
  $\mathcal E= \mathcal E^A\otimes \mathcal
  E^B$. Then $$\imaxacc(B';A')_{\mathcal E(\rho)} \leq
  \imaxacc(B;A)_\rho\enspace .$$
\end{proposition}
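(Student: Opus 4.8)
The plan is to reduce any measurement on the output register $A'$ to a measurement on the input register $A$, invoke the defining inequality of $\imaxacc(B;A)_\rho$, and then transport its $B$-part forward through $\mathcal E^B$. Write $\rho'_{A'B'} := \mathcal E(\rho_{AB}) = (\mathcal E^A\otimes \mathcal E^B)(\rho_{AB})$. To bound $\imaxacc(B';A')_{\rho'}$ I must exhibit, for every measurement $\mathcal M'_{A'\rightarrow X}$, a normalized $\sigma_X$ satisfying $\mathcal M'(\rho'_{A'B'}) \leq 2^{\imaxacc(B;A)_\rho}\,\sigma_X\otimes \rho'_{B'}$.

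First I would fix an arbitrary measurement $\mathcal M'_{A'\rightarrow X}$ and form its pullback $\mathcal M_{A\rightarrow X} := \mathcal M'\circ \mathcal E^A$, which is again a legitimate measurement (a CPTP map from $A$ with classical output $X$, being a composition of CPTP maps). Because $\mathcal E^A$ and $\mathcal E^B$ act on disjoint registers they commute, and $\mathcal M'$ touches only $A'$; hence $\mathcal M'(\rho'_{A'B'}) = \mathcal E^B\bigl(\mathcal M(\rho_{AB})\bigr)$, where on the right $\mathcal E^B$ acts on the $B$-register of the $XB$-state $\mathcal M(\rho_{AB})$.

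Next I would apply the definition of $\imaxacc(B;A)_\rho$ to the pullback $\mathcal M$: there is a normalized density matrix $\sigma_X$ with $\mathcal M(\rho_{AB}) \leq 2^{\imaxacc(B;A)_\rho}\,\sigma_X\otimes \rho_B$. Since $\mathcal E^B$ is completely positive it is in particular positive, so it preserves this operator inequality, and applying it to both sides gives $\mathcal E^B(\mathcal M(\rho_{AB})) \leq 2^{\imaxacc(B;A)_\rho}\,\sigma_X\otimes \mathcal E^B(\rho_B)$, using that $\mathcal E^B$ acts trivially on the $X$-factor. Finally I would identify $\mathcal E^B(\rho_B) = \rho'_{B'}$, which holds because $\mathcal E^A$ is trace-preserving, so $\trace[A']{\mathcal E^A(\rho_{AB})} = \rho_B$ and therefore $\rho'_{B'} = \mathcal E^B(\rho_B)$. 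Chaining the two displays yields $\mathcal M'(\rho'_{A'B'}) \leq 2^{\imaxacc(B;A)_\rho}\,\sigma_X\otimes \rho'_{B'}$, and since $\mathcal M'$ was arbitrary and $\sigma_X$ is a normalized state, this is exactly the defining condition for $\imaxacc(B';A')_{\rho'}\leq \imaxacc(B;A)_\rho$.

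The only real care needed\,---\,the would-be obstacle\,---\,is the bookkeeping that lets $\mathcal E^B$ preserve and factorize the inequality: one must check that \emph{positivity} of $\mathcal E^B$ already suffices to carry the operator ordering (no trace-preservation on $B$ is invoked here, only complete positivity), that $\mathcal E^B$ commutes with $\mathcal M'$ and leaves the $X$-factor untouched, and that the reduced output state is genuinely $\mathcal E^B(\rho_B)$. The tensor-product structure $\mathcal E=\mathcal E^A\otimes \mathcal E^B$ is precisely what licenses pulling $\mathcal M'$ back through $\mathcal E^A$ on the $A$-side while independently pushing the bound forward through $\mathcal E^B$ on the $B$-side; without this product form the pullback would entangle the two sides and the argument would break.
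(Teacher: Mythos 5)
Your proof is correct and follows essentially the same route as the paper's: you pull the measurement back through $\mathcal E^A$ (where the paper makes this explicit by computing the pulled-back POVM $F'_x = \sum_k E_k^\dagger F_x E_k$ from the Kraus operators, you invoke the equivalent abstract fact that a measurement composed with a CPTP map is again a measurement) and push the defining inequality of $\imaxacc(B;A)_\rho$ forward through $\mathcal E^B$, which the paper dispatches with a one-line commutation remark while you spell out the positivity argument and the identification $\rho'_{B'}=\mathcal E^B(\rho_B)$. There is no gap; if anything, your handling of the $\mathcal E^B$ side is more explicit than the paper's.
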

The proofs the two previous results can be found in
Appendix~\ref{sec:na-analysis}.

\section{Application 1: $\onecc$ is Universal}
\label{sec:onecc}

\subsection{Background}

It is a well-known fact that information-theoretically secure
two-party computation is impossible without assumptions. As a result,
one of the natural questions that arises is: what are the minimal
assumptions required to achieve it? One way to attack this question is
to try to identify the simplest cryptographic primitives which, when
made available in a black-box way to the two parties, allow them to
perform arbitrary two-party computations. We then say that such a
primitive is ``universal''. Perhaps the best known such primitive is
one-out-of-two oblivious transfer (\ot), which has been shown to be
universal by Kilian~\cite{k88}. Since then, the power of various
primitives for two-party computation has been studied in much more
detail \cite{k91,k00,mpr10,km11,mpr12,kraschewski-these}. Recently, it
has been shown in~\cite{mpr10} that every non-trivial two-party
primitive (i.e.~any primitive that cannot be done from scratch without
assumptions) can be used as a black-box to implement one of four basic
primitives: oblivious transfer ($\ot$), bit commitment ($\bc$), an XOR
between Alice's and Bob's inputs, or a primitive called
\emph{cut-and-choose} (\textsf{CC}) as depicted in Fig.~\ref{fig:cc}.

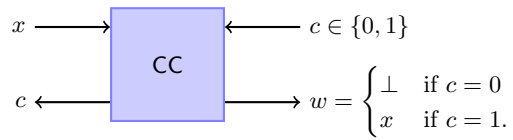
\begin{figure}[h]
    \centering
    \begin{tikzpicture}[thick]
        \node[draw=blue!50, fill=blue!20, minimum height=1.5cm, minimum width=1.5cm] (box) {$\sf{CC}$};

        \coordinate (inputlevel) at ([yshift=.5cm] box.center);
        \coordinate (outputlevel) at ([yshift=-.5cm] box.center);

        \draw 
            (box.west |- inputlevel)    ++(-1,0)    node[left] (x) {$x$}
            (box.east |- inputlevel)    ++(1,0)     node[right] (c) {$c \in \{0,1\}$}
            (box.west |- outputlevel)   ++(-1,0)    node[left] (c2) {$c$}
            (box.east |- outputlevel)   ++(1,0)     node[right] (w) {$w = \begin{cases} \bot & \text{if $c=0$}\\ x & \text{if $c=1$.} \end{cases}$}
            (x) edge[->] (box.west |- inputlevel)
            (c) edge[->] (box.east |- inputlevel)
            (c2) edge[<-] (box.west |- outputlevel)
            (w) edge[<-] (box.east |- outputlevel)
            ;
    \end{tikzpicture}
    \caption{The cut-and-choose functionality. The one-bit and two-bit versions of the functionality refer to the length of $x$. One player chooses $x$, and the other player chooses whether he wants to see $x$ or not. The first player then learns the choice that was made.}
    \label{fig:cc}
\end{figure}

Interestingly, this picture becomes considerably simpler when we
consider quantum protocols. First, $\bc$ can be used to implement
$\ot$~\cite{bbcs91,c94,u09} and is therefore universal. Furthermore,
as was shown in~\cite{fkszz13}, even a 2-bit cut-and-choose (\twocc)
is universal in the quantum setting, giving rise to what they call a
zero/xor/one law: every primitive is either trivial (zero), universal
(one), or can be used to implement an XOR. However, there was one
missing piece in this characterization: it applies to all
functionalities except those that are sufficient to implement 1-bit
cut-and-choose (\onecc), but not \twocc. In this section, we resolve
this issue by showing that \onecc{} is universal. We do this by
presenting a quantum protocol for bit commitment that uses \onecc{} as
a black box, and we prove its security using our adaptive to
non-adaptive reduction.

\subsection{The Protocol}
\label{sec:oneccprotocol}

\begin{figure}[h]
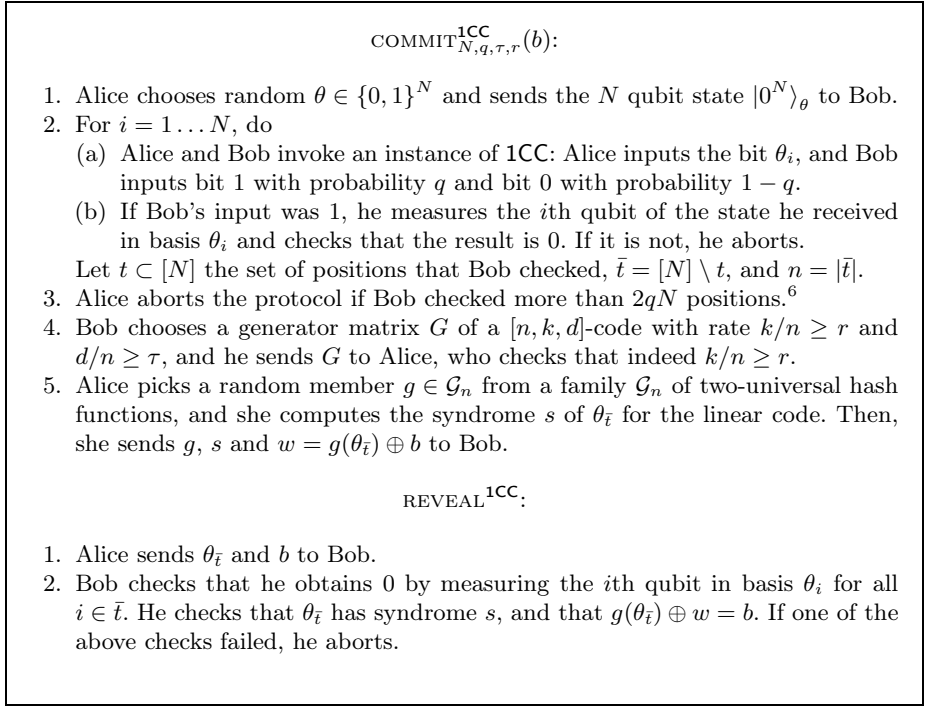

  \begin{framed}
    \begin{center}
      $\textsc{commit}_{N,q,\tau,r}^{\onecc}(b)$:
    \end{center}
    \begin{enumerate}
    \item Alice chooses random $\theta\in \{0,1\}^{N}$ and
      sends the $N$ qubit state $\ket {0^N}_{\theta}$ to Bob. 
    \item For $i=1\dots N$, do \label{st:sampling}
      \begin{enumerate}
      \item Alice and Bob invoke an instance of $\onecc$: Alice inputs
        the bit $\theta_i$, and Bob inputs bit $1$ with probability
        $q$ and bit $0$ with probability $1-q$.
      \item If Bob's input was $1$, he measures the $i$th qubit of the
        state he received in basis $\theta_i$ and checks that the
        result is 0. If it is not, he aborts.
      \end{enumerate}
      Let $t \subset [N]$ the set of positions that Bob checked, $\bar
      t = [N]\setminus t$, and $n = |\bar{t}|$.
    \item Alice aborts the protocol if Bob checked more than $2qN$ 
      positions.\footnotemark
    \item Bob chooses a generator matrix $G$ of a $[n,k,d]$-code with
      rate $k/n \geq r$ and $d/n \geq \tau$, and he sends $G$ to
      Alice, who checks that indeed $k/n \geq r$.
    \item Alice picks a random member $g\in \mathcal G_n$ from a
      family $\mathcal G_n$ of two-universal hash functions, and she
      computes the syndrome $s$ of $\theta_{\bar t}$ for the linear
      code. Then, she sends $g$, $s$ and $w = g(\theta_{\bar t})\oplus
      b$ to Bob.
    \end{enumerate}
    \begin{center}
      $\textsc{reveal}^{\onecc}$:
    \end{center}

  \begin{enumerate}
  \item Alice sends $\theta_{\bar t}$ and $b$ to Bob.
  \item Bob checks that he obtains 0 by measuring the $i$th qubit in
    basis $\theta_i$ for all $i\in \bar t$. He checks that $\theta_{\bar
      t}$ has syndrome $s$, and that $g(\theta_{\bar t})\oplus
    w=b$. If one of the above checks failed, he aborts.
  \end{enumerate}
\end{framed}
\caption{Bit commitment protocol $\textsc{bc}^\onecc$ based on the
  1-bit cut-and-choose primitive.}
\label{fig:commit1cc}
\end{figure}

The protocol is given in Fig.~\ref{fig:commit1cc}, where Alice is the
committer and Bob the receiver. The protocol is parameterized by $N
\in \N$, which acts as security parameter, and by constants $q,\tau$
and $r$, where $q,\tau > 0$ are small and $r < 1$ is close to $1$.
Intuitively, our bit commitment protocol uses the $\onecc$ primitive
to ensure that the state Alice sends to Bob is close to what it is
supposed to be: $\ket{0^N}_\theta$ for some randomly chosen but fixed
basis $\theta$. Indeed, the $\onecc$ primitive allows Bob to sample a
small random subset of the qubits and check for correctness on that
subset; if the state looks correct on this subset, we expect that it
cannot be too far off on the unchecked part.

Note that our protocol uses the B92~\cite{PhysRevLett.68.3121}
encoding ($\{ \ket 0_+, \ket 0_\times\}$), rather than the more common
BB84 encoding. This allows us to get away with a {\em one}-bit
cut-and-choose functionality; with the BB84 encoding, Alice would have
to ``commit'' to {\em two} bits: the basis and the measurement
outcome.

We use the quantum sampling framework of Bouman and
Fehr~\cite{bouman-fehr} to analyze the checking procedure of the
protocol. Actually, we use the {\em adaptive} version
of~\cite{fkszz13}, which deals with an Alice that can decide on the
next basis adaptively depending on what Bob has asked to see so far.
On the other hand, to deal with Bob choosing his sample subset
adaptively depending on what he has seen so far, we require the sample
subset to be rather small, so that we can then apply union bound over
all possible choices.

\footnotetext{If Bob is honest, this eventuality
  only occurs with probability less than $2\exp(-2q^2N)$ according to
  Hoeffding's inequality.}  

\subsection{Security Proofs}
\label{sec:seca-1cc}

We use the standard notion of hiding for a (quantum) bit commitment scheme. 

\begin{definition}[Hiding]\label{def:security}
  A bit-commitment scheme is $\epsilon$-\emph{hiding} if, for any
  dishonest receiver Bob, his state $\rho_0$ corresponding to a
  commitment to $b = 0$ and his state $\rho_{1}$ corresponding to a
  commitment to $b = 1$ satisfy $D(\rho_0,\rho_1) \leq
  \epsilon$.
\end{definition}

Since the proof that our protocol is hiding uses a standard approach,
we only briefly sketch it.
\begin{theorem}\label{thm:secbob1cc}
  Protocol {\normalfont $\textsc{commit}_{N,q,\tau,r}^{\onecc}$} is
  $2^{-\frac{1}{2} N(\lg(1/\gamma)-2q-(1-r))}$-hiding, where $\gamma =
  \cos^2(\pi/8) \approx 0.85$ (and hence $\lg(1/\gamma) \approx
  0.23$).
\end{theorem}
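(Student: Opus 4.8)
The plan is to reduce the hiding property to a privacy-amplification statement and then to lower-bound the relevant min-entropy. The key observation is that the only part of a dishonest Bob's view that depends on the committed bit $b$ is the message $w = g(\theta_{\bar t})\oplus b$ sent in the last step of the commit phase; the unchecked qubits, the syndrome $s$, the hash function $g$, and everything Bob learns through $\onecc$ are all independent of $b$. Hence $b$ is one-time padded by the bit $g(\theta_{\bar t})$, and it suffices to show that $g(\theta_{\bar t})$ is close to uniform and independent of the rest of Bob's view. Writing $E$ for Bob's quantum side information about $\theta_{\bar t}$, this will follow from Theorem~\ref{thm:privacyamp} once we establish a good lower bound on $\hmin{\theta_{\bar t}|E,s}$.

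The heart of the argument is the per-qubit min-entropy of the B92 encoding. For a single unchecked position $i$, Bob holds $\ket{0}_{\theta_i}$, i.e.\ $\ket 0_+$ or $\ket 0_\times$ according to the uniformly random bit $\theta_i$. The optimal guessing probability for these two equiprobable pure states is $\guess(\theta_i|E_i)=\frac12\bigl(1+\frac1{\sqrt2}\bigr)=\cos^2(\pi/8)=\gamma$, so each unchecked qubit carries $\lg(1/\gamma)$ bits of min-entropy about $\theta_i$. Since the qubits are prepared independently and the honest-Alice state on $\bar t$ is a tensor product, additivity of min-entropy over product cq-states gives $\hmin{\theta_{\bar t}|E}=n\lg(1/\gamma)$, where $n=|\bar t|$.

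It then remains to account for the two effects that erode this entropy. First, I condition on the realized sample set $t$: this is legitimate because $\onecc$ reveals to Alice exactly which positions Bob queried, so a dishonest Bob is forced to have $|t|\le 2qN$ on pain of abort, giving $n\ge N(1-2q)$; conditioning on this classical register and invoking joint convexity of the trace distance lets me prove the bound for each fixed $t$ separately. Second, the syndrome $s$ of the $[n,k,d]$-code reveals at most $n-k\le n(1-r)$ bits, so by the chain rule $\hmin{\theta_{\bar t}|E,s}\ge n\bigl(\lg(1/\gamma)-(1-r)\bigr)$. Combining with $n\ge N(1-2q)$ and the bound $\lg(1/\gamma)\le 1$ yields $\hmin{\theta_{\bar t}|E,s}\ge N\bigl(\lg(1/\gamma)-2q-(1-r)\bigr)$. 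Feeding this into Theorem~\ref{thm:privacyamp} shows that $g(\theta_{\bar t})$ is $\epsilon$-close to a uniform bit independent of $(E,s,g)$ with $\epsilon$ of the claimed order, and since flipping this bit interchanges $\rho_0$ and $\rho_1$ while fixing the ideal (uniform) state, we obtain $D(\rho_0,\rho_1)\le 2\epsilon$, matching the stated bound up to the constant in the exponent.

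The step I expect to be the main obstacle is justifying the min-entropy bound against a fully malicious receiver rather than against the honest Bob described by the protocol: I must argue that a dishonest Bob's adaptive choices of which positions to check cannot raise his information about $\theta_{\bar t}$ above the $n\lg(1/\gamma)$ bound, and that nothing he does inside the ideal $\onecc$ calls leaks more than $\theta_t$. Conditioning on the classical transcript $(t,\theta_t)$ and exploiting the independence of the individual $\theta_i$ is what makes this go through, and it is the part that requires genuine care; by contrast the single-qubit computation of $\gamma$ and the syndrome bookkeeping are routine.
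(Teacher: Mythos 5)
Your high-level plan (one-time pad on $g(\theta_{\bar t})$, lower-bound $\hmin{\theta_{\bar t}|E}$, lose $n-k$ via the chain rule for the syndrome, finish with Theorem~\ref{thm:privacyamp} and the triangle inequality) coincides with the paper's. But the step you yourself flag as the main obstacle is where your argument genuinely breaks: the claim that, after conditioning on the transcript $(t,\theta_t)$, the unchecked positions retain the pristine product structure with per-position guessing probability $\gamma$, so that $\hmin{\theta_{\bar t}|E} = n\lg(1/\gamma)$ for each fixed $t$. This is false against a dishonest Bob, because $t$ is chosen adaptively as a function of measurements on the qubits, and conditioning on an adaptively selected event skews the per-position posterior. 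Concretely: Bob measures a qubit in the computational basis; with probability $1/4$ he obtains outcome $1$, which certifies $\theta_i = \times$ (since $\ket{0}_+$ never yields outcome $1$). If he leaves exactly such positions unchecked and spends his checking budget on the others, then conditioned on the realized $t$ his guessing probability on some positions of $\theta_{\bar t}$ is $1$, not $\gamma$ — the per-transcript min-entropy bound $n\lg(1/\gamma)$ simply does not hold, and no appeal to the independence of the $\theta_i$ repairs it, because the selection bias lives in the correlation between $t$ and Bob's side information $E$, not in correlations among the $\theta_i$.

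The paper circumvents this with a reduction rather than a conditioning argument: any strategy that guesses $\theta_{\bar t}$ after the sampling phase with probability $p$ yields a strategy that guesses the \emph{entire} string $\theta$ right after step~1, by simulating the $\onecc$ calls internally and uniformly guessing the at most $2qN$ bits $\theta_i$ that Alice would have provided, succeeding with probability at least $p\cdot 2^{-2qN}$. Since the optimal probability of guessing $\theta$ from the product state $\ket{0^N}_\theta$ is exactly $\gamma^N$ (here your single-qubit computation of $\gamma$ \emph{is} used, with genuine additivity, because at that point nothing adaptive has happened), one gets $p \leq \gamma^N 2^{2qN}$, i.e.\ $\hmin{\theta_{\bar t}|E} \geq N(\lg(1/\gamma)-2q)$ as an average statement over the whole experiment — which is all that privacy amplification needs. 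Note that this bound is weaker than your claimed $n\lg(1/\gamma) \geq N(1-2q)\lg(1/\gamma)$; the extra loss, $2^{2qN}$ in place of $(1/\gamma)^{2qN}$, is exactly the price the paper pays to handle adaptivity, and the fact that your bound comes out strictly stronger should itself have been a warning sign. Your syndrome chain-rule step and the final privacy-amplification step are fine and match the paper; the proof is incomplete only (but essentially) at the min-entropy bound.
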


\begin{proof}[sketch]
  We need to argue that there is sufficient min-entropy in
  $\theta_{\bar t}$ for Bob; then, privacy amplification does the
  job. This means that we have to show that Bob has small success
  probability in guessing~$\theta_{\bar t}$. What makes the argument
  slightly non-trivial is that Bob can choose $t$ depending on the
  qubits $\ket {0^N}_{\theta}$. Note that since Alice aborts in case
  $|t| > 2qN$, we may assume that $|t| \leq 2qN$.

  It is a straightforward calculation to show that Bob's success
  probability in guessing~$\theta$ right after step~1 of the protocol,
  i.e., when given the qubits $\ket {0^N}_{\theta}$, is $\gamma^N$,
  where $\gamma = \cos^2(\pi/8) \approx 0.85$. From this it then
  follows that right after step~\ref{st:sampling}, Bob's success
  probability in guessing~$\theta_{\bar t}$ is at most $\gamma^N \cdot
  2^{2qN}$: if it was larger, then he could guess $\theta$ right after
  step~1 with probability larger than $\gamma^N$ by simulating the
  sampling and guessing the $|t| \leq 2qN$ bits $\theta_i$ that Alice
  provides. It follows that right after step~\ref{st:sampling}, Bob's
  min-entropy in $\theta_{\bar t}$ is $N(\lg(1/\gamma) - 2q)$.
  Finally, by the chain rule for min-entropy, Bob's min-entropy in
  $\theta_{\bar t}$ when additionally given the syndrome $s$ is
  $N\bigl(\lg(1/\gamma) - 2q\bigr) - (n-k) = N\bigl(\lg(1/\gamma) -
  2q\bigr) - n(1-k/n) \geq N\bigl(\lg(1/\gamma) - 2q -
  (1-r)\bigr)$. The statement then directly follows from privacy
  amplification (Theorem~\ref{thm:privacyamp}) and the triangle
  inequality.\qed
\end{proof}

As for the binding property of our commitment scheme, as we will show,
we achieve a strong notion of security that not only guarantees the
existence of a bit to which Alice is bound in that she cannot reveal
the other bit, but this bit is actually {\em universally extractable}
from the classical information held by Bob together with the inputs to
the \onecc:

\begin{definition}[Universally Extractable]\label{def:binding1cc}
  A bit-commitment scheme (in the {\normalfont $\onecc$}-hybrid model)
  is $\epsilon$-\emph{universally extractable} if there exists a
  function $c$ that acts on the classical information {\normalfont
    $view_{Bob,\onecc}$} held by Bob and {\normalfont $\onecc$} after
  the commit phase, so that for any pure commit and open strategy for
  dishonest Alice, she has probability at most $\epsilon$ of
  successfully unveiling the bit {\normalfont
    $1-c(view_{Bob,\onecc})$}.
\end{definition}

Our strategy for proving the binding property for our protocol is as
follows. First, we show that due to the checking part, the (joint)
state after the commit phase is of a restricted form. Then, we show
that, based on this restriction on the (joint) state, a {\em
  non-adaptive} Alice who has no access to her quantum state, cannot
open to the ``wrong'' bit. And finally, we apply our main result to
conclude security against a general (adaptive) Alice.

The following lemma follows immediately from (the adaptive version of)
Bouman and Fehr's quantum sampling
framework~\cite{bouman-fehr,fkszz13}.  Informally, it states that if
Bob did not abort during sampling, then the post-sampling state of
Bob's register is close to the correct state, up to a few errors.  In
other words, after the commit phase, Bob's state is a superposition of
strings close to $0^n$ in the basis specified by $\theta_{\bar t}$.

\begin{lemma}
\label{lem:sampling}
Consider an arbitrary pure strategy for Alice in protocol\linebreak
{\normalfont $\textsc{commit}_{N,q,\tau,r}^{\onecc}$} . Let
$\rho_{AB}$ be the joint quantum state at the end of the commit phase,
conditioned (and thus dependent) on $t, \theta, g, w$ and $s$. Then,
for any $\delta>0$, on average over the choices of $t,\theta, g, w$
and $s$, the state $\rho_{AB}$ is $\epsilon$-close to an ``ideal
state'' $\tilde\rho_{AB}$ (which is also dependent on $t, \theta$
etc.) with the property that the conditional state of
$\tilde\rho_{AB}$ conditioned on Bob not aborting is pure and of the
form
 \begin{equation}
   \ket{ \phi_{AB}} =
   \sum_{y\in \Ball(0^n)} \alpha_y\ket{\xi^y}_A\ket y_{\theta_{\bar
       t}} \label{smallsup}
 \end{equation}
 where $\ket{\xi^y}$ are arbitrary states on Alice's register and
 $\epsilon \leq \sqrt{4\exp(-q^2\delta^2N/8)}$.
\end{lemma}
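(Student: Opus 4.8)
The plan is to reduce the statement to the quantum sampling framework of Bouman and Fehr~\cite{bouman-fehr}, as adapted in~\cite{fkszz13}, and to account for the adaptivity on both sides (Alice choosing bases and Bob choosing his sample set). First I would set up the sampling problem: Bob's checking procedure in step~\ref{st:sampling} is an instance of sampling from the string $\theta$ via the $\onecc$ primitive, where Bob selects a random subset $t$ (each position included independently with probability $q$) and, on the checked positions, verifies that measuring the $i$th qubit in basis $\theta_i$ yields $0$. The quantum sampling framework tells us that passing this test certifies that, on the unchecked part $\bar t$, the joint state is close to a superposition of strings in the basis $\theta_{\bar t}$ whose relative Hamming weight is small, i.e.\ supported on $\Ball(0^n)$. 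The key translation is to phrase the protocol's checking step as a ``$\delta$-sampling strategy'' in the sense of~\cite{bouman-fehr} and read off its error probability.

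The central quantitative step is to invoke the error bound of the sampling framework. For the subset-sampling strategy where each position is included independently with probability~$q$, the classical error probability of the associated statistical test is of order $\exp(-q^2\delta^2 N/8)$ by a Hoeffding/Chernoff-type estimate, and the framework converts this classical error $\epsilon_{\mathrm{cl}}$ into a quantum closeness $\epsilon \le \sqrt{\epsilon_{\mathrm{cl}}}$ between the real state and an ideal state that lies in the ``good'' support~\eqref{smallsup}. This is where the bound $\epsilon \le \sqrt{4\exp(-q^2\delta^2N/8)}$ comes from. I would state the ideal state $\tilde\rho_{AB}$ as the one obtained by projecting (the purification of) the real state onto the span of $\{\ket{y}_{\theta_{\bar t}} : y \in \Ball(0^n)\}$ on Bob's register, and argue that conditioned on no abort it has exactly the form~\eqref{smallsup}, with $\ket{\xi^y}_A$ absorbing everything on Alice's side.

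The main obstacle is adaptivity, which enters in two independent places and must be handled by two different devices. On Alice's side, she may choose each basis $\theta_i$ adaptively, depending on which positions Bob has asked to open so far; this is precisely why the \emph{adaptive} version of the sampling framework from~\cite{fkszz13} is needed rather than the original static one, and I would cite that adaptive analysis to cover it. On Bob's side, the difficulty is that a dishonest Bob chooses his sample set $t$ adaptively depending on the qubits he has seen, so the sampling guarantee for a \emph{fixed} subset does not directly apply. The remedy, as flagged in the protocol description, is that the sample set is forced to be small ($|t| \le 2qN$), so I would take a union bound over all admissible choices of $t$; keeping the sample small ensures the number of such subsets is subexponential enough that the union bound does not swamp the exponentially small per-subset error. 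Verifying that the union-bound loss is absorbed into the stated $\epsilon$ (possibly by a mild adjustment of constants hidden in the $\delta$ and $q$ dependence) is the technically delicate part, but since Lemma~\ref{lem:sampling} only claims closeness \emph{on average} over $t,\theta,g,w,s$, the averaging further smooths this out and the claimed bound follows by combining the per-instance sampling error with the averaging.
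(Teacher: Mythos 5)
Your proposal is correct and takes essentially the same route as the paper, which gives no standalone proof but states that the lemma ``follows immediately from (the adaptive version of) Bouman and Fehr's quantum sampling framework''~\cite{bouman-fehr,fkszz13}. Your sketch---casting the checking step as a sampling strategy with per-position probability $q$, converting the Hoeffding-type classical error into the quantum closeness bound via a square root, defining $\tilde\rho_{AB}$ by projecting onto the span of $\{\ket{y}_{\theta_{\bar t}} : y \in \Ball(0^n)\}$, handling Alice's adaptive basis choices through the adaptive version of the framework from~\cite{fkszz13}, and absorbing the adaptivity of the sample choice by a union bound over the small ($|t|\leq 2qN$) subsets---is precisely the argument the paper delegates to those references and to its own remarks in Sect.~\ref{sec:oneccprotocol}.
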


The following lemma implies that after the commit phase, if Alice and
Bob share a state of the form of~(\ref{smallsup}), then a non-adaptive
Alice is bound to a fixed bit which is defined by some string
$\theta'$. 
\begin{lemma}\label{prop:NA1cc}
  For any $t, \theta$ and $s$ there exists
  $\theta'$ with syndrome $s$ such that for every
  $\theta''\neq \theta'$ with syndrome $s$, and for every state $\ket{
    \phi_{AB}}$ of the form of~(\ref{smallsup}),
$$
\tr\bigl((\I \otimes \proj 0_{\theta''}) \phi_{AB}\bigr) \leq
2^{-\frac d2+nh(\delta)} \enspace .
$$
\end{lemma}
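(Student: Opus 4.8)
The plan is to turn the acceptance probability into an inner-product computation and then to play the minimum distance of the code against the support condition coming from~(\ref{smallsup}). First I would fix the ``true'' basis $\theta'$: among all strings of syndrome $s$, let $\theta'$ be one that \emph{minimizes} the Hamming distance $d(\theta',\theta_{\bar t})$ to the actual encoding basis $\theta_{\bar t}$, with ties broken arbitrarily. This is my candidate for the bit to which Alice is bound, and by construction it has syndrome $s$.

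Next I would compute the overlap $\langle 0^n|_{\theta''}\ket{y}_{\theta_{\bar t}}$ qubit by qubit. Writing $\Delta := \{i : \theta''_i \neq (\theta_{\bar t})_i\}$, every position where the two bases agree forces $y_i=0$ and contributes a factor $1$, while every position in $\Delta$ contributes a factor of magnitude $1/\sqrt2$ regardless of $y_i$. Hence the overlap vanishes unless $\mathrm{supp}(y)\subseteq\Delta$, in which case its magnitude is exactly $2^{-|\Delta|/2}$. Substituting~(\ref{smallsup}) into $\tr((\I\otimes\proj 0_{\theta''})\phi_{AB}) = \|(\I\otimes\proj 0_{\theta''})\ket{\phi_{AB}}\|^2$ and applying the triangle inequality to the resulting sum of vectors (taking the $\ket{\xi^y}$ to be unit vectors, so that normalization of $\ket{\phi_{AB}}$ gives $\sum_y|\alpha_y|^2\le 1$) yields
\[
\tr\bigl((\I\otimes\proj 0_{\theta''})\phi_{AB}\bigr) \;\le\; 2^{-|\Delta|}\Bigl(\sum_{y\in S}|\alpha_y|\Bigr)^2 ,
\]
where $S := \{y\in\Ball(0^n) : \mathrm{supp}(y)\subseteq\Delta\}$. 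A single Cauchy--Schwarz step then bounds the bracket by $|S|\le|\Ball(0^n)|\le 2^{nh(\delta)}$.

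It remains to lower bound $|\Delta| = d(\theta'',\theta_{\bar t})$, and this is where the code enters. Since $\theta''\oplus\theta'$ is a \emph{nonzero} codeword (both strings have syndrome $s$ and $\theta''\neq\theta'$), we get $d(\theta'',\theta')\ge d$. Combining this with the minimality of $\theta'$ gives two bounds: on one hand $d(\theta'',\theta_{\bar t})\ge d(\theta',\theta_{\bar t})$, and on the other hand the triangle inequality gives $d(\theta'',\theta_{\bar t})\ge d(\theta'',\theta')-d(\theta',\theta_{\bar t})\ge d - d(\theta',\theta_{\bar t})$. Whichever of $d(\theta',\theta_{\bar t})$ and $d-d(\theta',\theta_{\bar t})$ is larger is at least $d/2$, so $|\Delta|\ge d/2$, and therefore $\tr\le 2^{-d/2+nh(\delta)}$, as claimed.

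The main obstacle, and the reason the exponent is $d/2$ rather than $d$, is that a dishonest Alice may announce a syndrome $s$ that is \emph{inconsistent} with the actual basis $\theta_{\bar t}$; we therefore cannot simply set $\theta'=\theta_{\bar t}$ and read off $d(\theta'',\theta_{\bar t})\ge d$. Choosing $\theta'$ as the closest syndrome-$s$ string and absorbing the resulting factor-two loss through the triangle inequality is exactly what circumvents this, and it amounts to a minimum-distance-decoding argument. A secondary point to treat carefully is that the $\ket{\xi^y}$ are arbitrary and need not be orthogonal, which is why I route the norm estimate through the triangle inequality on vector norms rather than relying on orthogonality of the $A$-register components.
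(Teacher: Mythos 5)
Your proof is correct and takes essentially the same route as the paper's: the same minimum-distance choice of $\theta'$ (yielding $d(\theta'',\theta_{\bar t})\geq d/2$ for every other syndrome-$s$ string via the triangle inequality), followed by the same triangle-inequality-plus-Cauchy--Schwarz estimate using $|\Ball(0^n)|\leq 2^{nh(\delta)}$. The only cosmetic difference is that you bound the norm of the projected pure state directly, whereas the paper bounds the matrix elements of Bob's reduced operator $\phi_B$; both handle the possibly non-orthogonal $\ket{\xi^y}$ in the same way.
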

\begin{proof}
  Let $\theta'\in\bool^n$ be the string with syndrome $s$ closest to
  $\theta_{\bar t}$ (in Hamming distance). Then, since the set of
  strings with a fixed syndrome form an error correcting code of
  distance $d$, every other $\theta''\in \bool^n$ of syndrome $s$ is
  at distance at least $d/2$ from $\theta_{\bar t}$. Bob's reduced
  density operator of state~(\ref{smallsup}) is $\phi_{B}=
  \sum_{y,y'\in \Ball(0^n)}
  \alpha_y\alpha_{y'}^*\braket{\xi_{y'}}{\xi_{y}} \ketbra
  {y}{y'}_{\theta_{\bar t}}$.  Using the fact that $d(\theta_{\bar
    t},\theta'')\geq d/2$ for every $\theta''\neq \theta'$ (and hence
  $|\trace{\proj{0}_{\theta''} \ketbra{y}{y'}_{\theta_{\bar{t}}}}|
  \leq 2^{-\frac{d}{2}}$) and the triangle inequality, we get:
  \begin{align*}
    \trace{\proj 0_{\theta''} \phi_{B}} 
    &\leq 2^{-\frac{d}{2}} \sum_{y,y' \in \Ball(0^n)} \left|
      \alpha_y \alpha_{y'}^{*} \braket{\xi_{y'}}{\xi_{y}} \right|\\ 
    &\leq 2^{-\frac{d}{2}} \sum_{y,y' \in \Ball(0^n)} |\alpha_y| |\alpha_{y'}^{*}|\\
    &= 2^{-\frac{d}{2}} \bigg( \sum_y |\alpha_y| \bigg)^2\\
    &\leq 2^{-\frac{d}{2} + nh(\delta)}\enspace,
  \end{align*}
  where the last inequality is argued by viewing
  $\sum_{y} |\alpha_y|$ as inner product of the vectors
  $\sum_y |\alpha_y|\ket y$ and $\sum_y\ket y$, and applying the 
  Cauchy-Schwarz inequality. \qed
\end{proof}

We are now ready to prove that the scheme is universally extractable:

\begin{theorem}
  For any $\delta > 0$, {\normalfont
    $\textsc{commit}_{N,q,\tau,r}^{\onecc}$} is $\epsilon$-universally
  extractable with
  $$
  \epsilon \leq 2^{-N(1-2q)(\tau/2-2h(\delta))} + \sqrt{4\exp(-q^2\delta^2N/8)} \enspace.
  $$ 
\end{theorem}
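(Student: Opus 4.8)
The plan is to combine the three ingredients that the preceding results have set up: the sampling lemma (Lemma~\ref{lem:sampling}), the non-adaptive binding bound (Lemma~\ref{prop:NA1cc}), and the adaptive-to-non-adaptive reduction of Corollary~\ref{cor:forcedresult}. First I would fix the extraction function $c$ demanded by Definition~\ref{def:binding1cc}: on input $view_{Bob,\onecc}$, which determines $t$, $s$, $g$ and $w$, let $\theta'$ be the string with syndrome $s$ that is closest to $\theta_{\bar t}$ as in the proof of Lemma~\ref{prop:NA1cc}, and set $c(view_{Bob,\onecc}) := g(\theta')\oplus w$. The claim to establish is then that for any pure Alice strategy, the probability of successfully unveiling the bit $1-c(view_{Bob,\onecc})$ is at most~$\epsilon$.

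Next I would set up the game for the main theorem. To unveil $1-c$, Alice must in the reveal phase announce some $\theta''$ with syndrome $s$ satisfying $g(\theta'')\oplus w = 1-c = g(\theta')\oplus w\oplus 1$, hence $g(\theta'')\neq g(\theta')$, which forces $\theta''\neq\theta'$; Bob then accepts only if measuring in basis $\theta_{\bar t}$ yields $0$, i.e. the relevant POVM element is $E^{\theta''}_{\mathrm{accept}} = \proj 0_{\theta''}$. So I take the family ${\bf E}=\{E^{\theta''}\}$ indexed by the valid openings $\theta''$ of the wrong bit, with $E^{\theta''}_1 = \proj 0_{\theta''}$, acting on Bob's register $B$. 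The register $A$ is Alice's quantum side information. Then $\adapt = \succ(\rho_{AB},{\bf E})$ is exactly Alice's (adaptive) cheating probability, and by Corollary~\ref{cor:forcedresult} together with Proposition~\ref{lem:imaxlesshzero} we have $\adapt \leq 2^{H_0(A)}\nonadapt$. Here the non-adaptive success probability is bounded by Lemma~\ref{prop:NA1cc}: $\nonadapt = \max_{\theta''\neq\theta'} \tr(\proj 0_{\theta''}\phi_{AB}) \leq 2^{-d/2 + nh(\delta)}$ for the ideal state $\ket{\phi_{AB}}$ of the form~(\ref{smallsup}).

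The remaining task is bookkeeping on the parameters. The number of qubits of $A$ appearing in $H_0(A)$ is the relevant obstacle: the reduction gives a factor $2^{H_0(A)}$, and a priori $A$ could be huge, which would swamp the tiny $\nonadapt$. The key observation is that by Lemma~\ref{lem:sampling} the ideal state~(\ref{smallsup}) has $A$-register supported on the vectors $\ket{\xi^y}$ for $y\in\Ball(0^n)$, so effectively $H_0(A)\leq \lg|\Ball(0^n)| \leq nh(\delta)$; it is this restriction on the \emph{support}, not on Alice's physical memory, that controls the reduction. Plugging in $d/n\geq\tau$ and $n = |\bar t| \geq N(1-2q)$ (since $|t|\leq 2qN$), I get, on the ideal state,
$$
\adapt \leq 2^{nh(\delta)}\cdot 2^{-d/2+nh(\delta)} = 2^{-d/2 + 2nh(\delta)} \leq 2^{-N(1-2q)(\tau/2 - 2h(\delta))}\enspace.
$$
Finally I would pass from the ideal state $\tilde\rho_{AB}$ back to the real state $\rho_{AB}$: since the two are $\epsilon'$-close on average with $\epsilon' \leq \sqrt{4\exp(-q^2\delta^2N/8)}$ by Lemma~\ref{lem:sampling}, and acceptance probabilities differ by at most the trace distance, the real cheating probability is at most the bound above plus $\sqrt{4\exp(-q^2\delta^2N/8)}$, giving the claimed $\epsilon$. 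The main subtlety I expect is justifying that $H_0(A)$ may legitimately be replaced by the log-support $nh(\delta)$ of the ideal state when applying the reduction, and ensuring the averaging over $t,\theta,g,w,s$ in Lemma~\ref{lem:sampling} interacts correctly with the worst-case $\max$ in the definition of universal extractability.
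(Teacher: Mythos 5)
Your proposal is correct and follows essentially the same route as the paper's own proof: the same extractor $c(t,\theta,g,s,w)=g(\theta')\oplus w$, the same POVM family $\{E^{\theta''}\}$ over wrong-bit openings with $E^{\theta''}_1=\proj{0}_{\theta''}$, the bound $\hmax{A}\leq nh(\delta)$ from the rank of $\phi_A$ for the ideal state~(\ref{smallsup}), Corollary~\ref{cor:forcedresult} combined with Proposition~\ref{lem:imaxlesshzero}, Lemma~\ref{prop:NA1cc} for the non-adaptive bound, and the trace-distance correction from Lemma~\ref{lem:sampling}, with the same parameter bookkeeping $d\geq\tau n$ and $n\geq N(1-2q)$. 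The two subtleties you flag at the end are handled in the paper exactly as you anticipate, so nothing further is needed.
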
 

\begin{proof}
  We need to show the existence of a binary-valued function $c(\theta,
  t, g, w,s)$ as required by Definition~\ref{def:binding1cc}, i.e.,
  such that for any commit strategy, there is no opening strategy that
  allows Alice to unveil $\bar c$, except with small probability.
  We define this function as
  $c(t,\theta,g,s,w):= g(\theta')\oplus w$ where $\theta'$ is as in
  Lemma~\ref{prop:NA1cc}, depending on $t,\theta$ and $s$ only. 

  Now, consider an arbitrary pure strategy for Alice in protocol
  $\textsc{commit}^\onecc$.  Let $\theta, g, w$ and $s$ be the values
  chosen by Alice during the commit phase and let $\rho_{AB}$ be the
  joint state of Alice and Bob after the commit phase.  Fix $\delta >
  0$ and consider the states $\tilde\rho_{AB}$ and $\ket{\phi_{AB}}$
  as promised by Lemma~\ref{lem:sampling}.  Recall that $\rho_{AB}$ is
  $\epsilon$-close to $\tilde\rho_{AB}$ (on average over $\theta, g,
  w$ and $s$, and for $\epsilon \leq \sqrt{4\exp(-q^2\delta^2N/8)}$),
  and $\tilde\rho_{AB}$ is a mixture of Bob aborting in the commit
  phase and of $\ket{\phi_{AB}}$; therefore, we may assume that Alice
  and Bob share the pure state $\phi_{AB} = \proj{\phi_{AB}}$ instead
  of $\rho_{AB}$ by taking into account the probability at most
  $\epsilon$ that the two states behave differently.

  Let $\mathcal B$ be the set of strings $\theta''$ with syndrome $s$
  such that $g(\theta'')\oplus w=\bar c$ and let ${\bf
    E}=\{\{E_0^{\theta''}, E_1^{\theta''}\}\}_{\theta''\in \mathcal
    B}$ be the family of POVMs that correspond to Bob's verification
  measurement when Alice announces $\theta''$, i.e. where
  $E_1^{\theta''}= \proj 0_{\theta''}$ and $E_0^{\theta''}= \id- \proj
  0_{\theta''}$. Then, Alice's probability of successfully unveiling
  bit $\bar c$ equals $\succ(\phi_{AB}, {\bf E})$ as defined in
  Sect.~\ref{sec:mainresult}. In order to apply
  Corollary~\ref{cor:forcedresult}, we must first control the size of
  the side-information that Alice holds. By looking at the definition
  of $\ket{\phi_{AB}}$ in~(\ref{smallsup}), we notice that it is a
  superposition of at most $|\Ball(0^n)|\leq 2^{nh(\delta)}$
  terms. Therefore, the rank of $\phi_A$ is at most $2^{nh(\delta)}$
  and $\hmax{A}\leq nh(\delta)$. We can now bound Alice's probability
  of opening $\bar c$:
  $$ \succ(\phi_{AB}, {\bf E})\leq 2^{\hmax{A}} \succ(\phi_{B},
  {\bf E})\leq 2^{-\frac d2+2nh(\delta)} \leq
  2^{-n(\tau/2-2h(\delta))}$$ where the first inequality follows from
  Corollary~\ref{cor:forcedresult} and
  Proposition~\ref{lem:imaxlesshzero}, and the second from the bound
  on $\hmax{A}$ and from Lemma~\ref{prop:NA1cc}.
  \qed
\end{proof}


Regarding the choice of parameters $q,\tau$ and $r$, and the choice of
the code, we note that the Gilbert-Varshamov bound guarantees that the
code defined by a random binary $n\times(n-rn)$ generator matrix $G$
has minimal distance $d \geq \tau n$, except with negligible
probability, as long as $r< 1-h(\tau)$.  On the other hand, for the
hiding property, we need that $r > 1 - 0.23 + 2q$. As such, as long as
$h(\tau) < 0.23-2q$, there exists a suitable rate $r$ and a suitable
generator matrix $G$, so that our scheme offers statistical security
against both parties.

\subsection{Universality of $\onecc$}\label{sec:UC}

By using our $\onecc$-based bit commitment scheme $\textsc{bc}^{\onecc}$ in the standard
construction for obtaining $\ot$ from $\bc$ in the quantum
setting~\cite{bbcs91,c94}, we can conclude that $\onecc$ implies $\ot$ in the
quantum setting, and since $\ot$ is universal we thus immediately
obtain the universality of $\onecc$. However, strictly speaking, this
does not solve the open problem of~\cite{fkszz13} yet.  The caveat is
that~\cite{fkszz13} asks about the universality of $\onecc$ in the
{\em UC security model}~\cite{u09}, in other words, whether $\onecc$ is
``universally-composable universal''. So, to truly solve the open
problem of~\cite{fkszz13} we still need to argue {\em UC security} of the
resulting $\ot$ scheme, for instance by arguing that our scheme $\textsc{bc}^{\onecc}$ is UC secure. 

UC-security of $\textsc{bc}^{\onecc}$ against malicious Alice follows
immediately from our binding criterion
(Definition~\ref{def:binding1cc}); after the commit phase, Alice is
bound to a bit that can be extracted in a black-box way from the
classical information held by Bob and the $\onecc$
functionality. Thus, a simulator can extract that bit from malicious
Alice and input it into the ideal commitment functionality, and since
Alice is bound to this bit, this ideal-world attack is
indistinguishable from the real-world attack.

However, it is not clear if $\textsc{bc}^{\onecc}$ is UC-secure
against malicious Bob. The problem is that it is unclear whether it is
{\em universally equivocable}, which is a stronger notion than the
standard hiding property (Definition~\ref{def:security}).

Nevertheless, we {\em can} still obtain a UC-secure $\ot$ scheme in
the $\onecc$-hybrid model, and so solve the open problem
of~\cite{fkszz13}. For that, we slightly modify the standard
$\bc$-based $\ot$ scheme~\cite{bbcs91,c94} with $\bc$ instantiated by
$\textsc{bc}^{\onecc}$ as follows: for every BB84 qubit that the
receiver is meant to measure, he commits to the basis using
$\textsc{bc}^{\onecc}$, but he uses the $\onecc$-functionality
{\em directly} to ``commit'' to the measurement outcome, i.e., he
inputs the measurement outcome into $\onecc$\,---\,and if the sender
asks $\onecc$ to reveal it, the receiver also unveils the accompanying
basis by opening the corresponding commitment.

Definition~\ref{def:binding1cc} ensures universal extractability of
the committed bases and thus of the receiver's input. This implies
UC-security against dishonest receiver.  In order to argue UC-security
against dishonest sender, we consider a simulator that acts like the
honest receiver, i.e., chooses random bases and commits to them, but
only measures those positions that the sender wants to
see\,---\,because the simulator controls the $\onecc$-functionality he
can do that. Then, once he has learned the sender's choices for the
bases, he can measure all (remaining) qubits in the correct basis, and
thus reconstruct {\em both} messages and send them to the ideal $\ot$
functionality.  The full details of the proof are in
Appendix~\ref{sec:ucsec}.

\section{Application 2: \\On the Security of \textsc{bcjl} Commitment
  Scheme }
\label{sec:bqsm}

In this section, we show that for a wide class of bit-commitment
schemes, the binding property of the scheme in (a slightly
strengthened version of) the {\em bounded-quantum-storage model}
reduces to its binding property against a dishonest committer that has
{\em no quantum memory at all}. We then demonstrate the usefulness of
this on the example of the \textsc{bcjl} commitment
scheme~\cite{366851}.

\subsection{Setting Up the Stage}

The class of schemes to which our reduction applies consists of the
schemes that are non-interactive: all communication goes from Alice,
the committer, to Bob, the verifier. Furthermore, we require that
Bob's verification be ``projective'' in the following sense.

\begin{definition}
  We say that a bit-commitment scheme is \emph{non-interactive and
    with projective verification}, if it is of the following form.
  \begin{description}\setlength{\parskip}{0.5ex}
  \item{\em Commit:} Alice sends a classical message
    $x$ and a quantum register $B$ to Bob.
  \item{\em Opening to $b$:} Alice sends a classical opening
    $y_b$ to Bob, and Bob applies a binary-outcome
    projective measurement $\{\mathbb V_{x,y_b}, \id-\mathbb
    V_{x,y_b}\}$ to register $B$.
  \end{description}
\end{definition}
Since $x$ is fixed after the commit phase, we tend to leave the
dependency of $\mathbb V_{x,y_b}$ from $x$ implicit and write $\mathbb
V_{y_b}$ instead. Also, to keep language simple, we will just speak of
a {\em non-interactive} bit-commitment scheme and drop the {\em
  projective verification} part in the terminology.

We consider the security\,---\,more precisely: the binding
property\,---\,of such bit-commitment schemes in a slightly
strengthened version of the bounded-quantum-storage
model~\cite{dfss05}, where we bound the quantum memory of Alice, but
we also restrict her measurement (for producing $y_b$ in the opening
phase) to be {\em projective}.  This restriction on Alice's
measurement is well justified since a general non-projective
measurement requires additional quantum storage in the form of an
ancilla to be performed coherently. From a technical perspective, this
restriction (as well as the restriction on Bob's verification) is a
byproduct of our proof technique, which requires the measurement
operator describing the (joint) opening procedure to be repeatable;
avoiding it is an open question.%
\footnote{The standard technique (using Naimark's dilation theorem)
  does not work here. }

Formally, we capture the binding property as follows in this variation
of the bounded-quantum-storage model.%

\begin{definition}[Binding]\label{def:noninteractive}
  A non-interactive bit commitment scheme is called
  \emph{$\epsilon$-binding against $q$-quantum-memory-bounded} (or
  {\em $q$-QMB} for short) {\em projective adversaries} if, for all
  states $\rho_{AB}\in \mathcal D(\hilbert_A\otimes \hilbert_B)$ with
  $\dim(\hilbert_A)\leq 2^q$ and for all classical messages $x$,
  $$ P^A_0(\rho_{AB}) + P^A_1(\rho_{AB}) \leq 1+ \epsilon $$
  where
  $$P_b^{A}(\rho_{AB}) := \max_{ \{\mathbb F_{y_b}\}_{y_b}}\sum_{y_b}
  \trace{(\mathbb F_{y_b}\otimes \mathbb V_{x,y_b})\rho_{AB}} $$
  is the probability of successfully opening bit $b$, maximized over
  all {\em projective} measurements $\{\mathbb F_{y_b}\}_{y_b}$. 
\end{definition}
In case $q = 0$, where the above requirement reduces to 
$$ 
P^{NA}_0(\rho_{AB}) + P^{NA}_1(\rho_{AB}) \leq 1+ \epsilon 
\quad\text{with}\quad
P_b^{NA}(\rho_{AB}) :=\max_{y_b}\trace{\mathbb V_{x,y_b} \rho_{B}}
$$
and $\rho_B= \trace[A]{\rho_{AB}}$, we also speak of
$\epsilon$-binding against \emph{non-adaptive} adversaries.

\subsubsection{On the binding criterion for non-interactive commitment
  schemes}
\label{sec:bindingdiscussion}

Binding criteria analogous to the one specified in
Definition~\ref{def:noninteractive} have traditionally been weak
notions of security against dishonest committers for quantum
commitment schemes, as opposed to criteria that are more in the spirit
of a bit that cannot be opened by the adversary.  While more
convenient for proving security of commitment schemes, a notable flaw
of the $p_0+p_1\leq 1+\epsilon$ definition is that it does not rule
out the following situation. An adversary might, by some complex
measurement, either completely ruin its capacity to open the
commitment, or be able to open the bit of its choice. Then the total
probability of opening 0 and 1 sum to 1, but, conditioned on the
second outcome of this measurement, they sum to 2.  This is obviously
an undesirable property of a quantum bit-commitment scheme.

Non-interactive schemes that are secure according to
Definition~\ref{def:noninteractive} are binding in a stronger
sense. For instance, the above problem of the $p_0+p_1\leq 1+\epsilon$
definition does not hold for non-interactive schemes. If a scheme is
$\epsilon$-binding, then any state $\rho$ obtained by conditioning on
some measurement outcome must satisfy $P^A_0(\rho)+P_1^A(\rho)\leq
1+\epsilon$. If the total probability of opening 0 and 1 was any
higher, then the adversary could have prepared the state $\rho$ in the
first place, contradicting the fact that the protocol is
$\epsilon$-binding. It remains an open question how to accurately
describe the security of non-interactive commitment schemes that
satisfy Definition~\ref{def:noninteractive}.

\subsection{The General Reduction}

We want to reduce security against a $q$-QMB projective adversary to
the security against a non-adaptive adversary (which should be much
easier to show) by means of applying our general
adaptive-to-non-adaptive reduction. However,
Corollary~\ref{cor:forcedresult} does not apply directly; we need some
additional gadget, which is in the form of the following lemma. It
establishes that if there is a commit strategy for Alice so that the
cumulative probability of opening 0 and 1 exceeds 1 by a
non-negligible amount, then there is also a commit strategy for her so
that she can open 0 {\em with certainty} and 1 with still a
non-negligible
probability.

\begin{lemma}\label{lemma:openboth}
  Let $\rho\in \mathcal D(\hilbert_A\otimes \hilbert_B)$ and
  $\epsilon > 0$ be such that
  $P_0^A(\rho)+P_1^A(\rho) \geq 1+\epsilon$. Then, there exists 
  $\rho^0\in \mathcal D(\hilbert_A\otimes \hilbert_B)$ such that
  $P_0^A(\rho^0)=1$ and $P_1^A(\rho^0)\geq \epsilon^2$.
\end{lemma}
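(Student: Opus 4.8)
The plan is to recast both opening probabilities as the trace of a single projector against $\rho$, and then to extract the desired state from the spectral-norm inequality of Lemma~\ref{lem:normineq}. First I would fix projective opening measurements $\{\mathbb F_{y_0}\}_{y_0}$ and $\{\mathbb F_{y_1}\}_{y_1}$ on $A$ that attain $P_0^A(\rho)$ and $P_1^A(\rho)$ (these exist by compactness in finite dimension), and set $\Pi_b:=\sum_{y_b}\mathbb F_{y_b}\otimes \mathbb V_{y_b}$. Since the $\mathbb F_{y_b}$ are mutually orthogonal projectors on $A$ and each $\mathbb V_{y_b}$ is a projector on $B$, the terms $\mathbb F_{y_b}\otimes\mathbb V_{y_b}$ are mutually orthogonal projectors, so each $\Pi_b$ is an orthogonal projector on $\hilbert_A\otimes\hilbert_B$ with $\tr(\Pi_b\rho)=P_b^A(\rho)$. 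It is here that I use crucially that \emph{both} Alice's opening and Bob's verification are projective.

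Next I would show $\|\Pi_0\Pi_1\|\geq\epsilon$. As $\Pi_0+\Pi_1\geq 0$, for the density operator $\rho$ we have $1+\epsilon\leq P_0^A(\rho)+P_1^A(\rho)=\tr\bigl((\Pi_0+\Pi_1)\rho\bigr)\leq\|\Pi_0+\Pi_1\|$, and Lemma~\ref{lem:normineq} gives $\|\Pi_0+\Pi_1\|\leq 1+\|\Pi_0\Pi_1\|$; combining these yields $\|\Pi_0\Pi_1\|\geq\epsilon$.

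The final step reads the state directly off this norm bound. Let $\ket\phi$ be a unit vector attaining $\|\Pi_1\Pi_0\|=\|\Pi_0\Pi_1\|\geq\epsilon$; since $\Pi_1\Pi_0\ket\phi\neq 0$ forces $\Pi_0\ket\phi\neq 0$, and since projecting $\ket\phi$ into the image of $\Pi_0$ does not decrease $\|\Pi_1\Pi_0\ket\phi\|$, I may assume $\Pi_0\ket\phi=\ket\phi$, whence $\|\Pi_1\ket\phi\|=\|\Pi_1\Pi_0\ket\phi\|\geq\epsilon$. Taking $\rho^0:=\proj\phi$ then gives $P_0^A(\rho^0)\geq\tr(\Pi_0\proj\phi)=\|\Pi_0\ket\phi\|^2=1$, hence $P_0^A(\rho^0)=1$, and $P_1^A(\rho^0)\geq\tr(\Pi_1\proj\phi)=\|\Pi_1\ket\phi\|^2\geq\epsilon^2$.

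I expect the main subtlety to be the choice of $\rho^0$: the tempting route of letting $\rho^0$ be $\rho$ post-selected on a successful opening of $0$ only yields a bound that falls short of $\epsilon^2$ by a constant factor in the worst case. The resolution is to discard $\rho$ after harvesting $\|\Pi_0\Pi_1\|\geq\epsilon$ and instead build a fresh, perfectly $0$-opening pure state from the norm-maximizing vector, the one remaining technical point being that this vector can be taken inside the image of $\Pi_0$.
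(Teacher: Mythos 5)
Your proof is correct and is essentially the paper's own argument: the same projectors $\mathbb P_b=\sum_{y_b}\mathbb F_{y_b}\otimes \mathbb V_{y_b}$, the same chain $1+\epsilon \leq \tr\bigl((\mathbb P_0+\mathbb P_1)\rho\bigr) \leq \|\mathbb P_0+\mathbb P_1\| \leq 1+\|\mathbb P_1\mathbb P_0\|$ via Lemma~\ref{lem:normineq}, and the same choice $\rho^0=\proj{\phi_0}$ with $\ket{\phi_0}=\mathbb P_0\ket{\phi}/\|\mathbb P_0\ket{\phi}\|$. The only difference is presentational: where you reduce to the case $\Pi_0\ket{\phi}=\ket{\phi}$ by noting that projecting onto the image of $\Pi_0$ cannot decrease $\|\Pi_1\Pi_0\ket{\phi}\|$, the paper performs the equivalent normalization explicitly, bounding $\|\mathbb P_1\mathbb P_0\ket{\phi}\|^2/\|\mathbb P_0\ket{\phi}\|^2\geq\epsilon^2$ using $\|\mathbb P_0\ket{\phi}\|\leq 1$.
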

\begin{proof}
  Let $\{\mathbb F_{y_0}\}_{y_0}$ and $\{\mathbb G_{y_1}\}_{y_1}$ be
  the projective measurements maximizing $P_0^A(\rho)$ and
  $P_1^A(\rho)$, respectively. Define the projections onto the
  $0/1$-accepting subspaces as
  $$ \mathbb P_0:= \sum_{y_0} \mathbb F_{y_0}\otimes \mathbb
  V_{y_0} \text{ and } \mathbb P_1:= \sum_{y_1} \mathbb G_{y_1}\otimes \mathbb
  V_{y_1}\enspace .$$

  Since
  $\trace{(\mathbb P_0+\mathbb P_1)\rho}= P_0^A(\rho)+P_1^A(\rho) \geq
  1+\epsilon$,
  it follows that $\|\mathbb P_0+\mathbb P_1\| \geq 1+\epsilon$. From
  Lemma~\ref{lem:normineq}, we have that
  $$ 1+\| \mathbb P_1\mathbb P_0 \| \geq \|\mathbb P_0+\mathbb P_1\|
  \geq 1+\epsilon\enspace .$$
  Therefore there exists $\ket \phi$ such that
  $\| \mathbb P_1\mathbb P_0 \ket \phi\| \geq \epsilon$. Define
  $\ket {\phi_0}:= \mathbb P_0\ket \phi/\|\mathbb P_0\ket {\phi}\|$,
  which we claim has the required properties. The probability to open
  0 from $\ket {\phi_0}$ is $\|\mathbb P_0 \ket {\phi_0}\|^2=1$, and
  the probability to open 1 from $\ket {\phi_0}$ is
  $\|\mathbb P_1\mathbb P_0 \ket {\phi_0}\|^2=\|\mathbb P_1\mathbb P_0
  \ket {\phi}\|^2/\|\mathbb P_0\ket {\phi}\|^2 \geq \epsilon^2$.
\qed
\end{proof}
Now, we are ready to state and prove the general reduction. 
\begin{theorem}\label{thm:non-interactivebinding}
  If a non-interactive bit-commitment scheme is $\epsilon$-binding
  against non-adaptive adversaries, then it is
  $(2^{\frac 12q}\sqrt{\epsilon})$-binding against $q$-QMB projective
  adversaries.
\end{theorem}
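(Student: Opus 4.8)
The plan is to combine the gadget lemma (Lemma~\ref{lemma:openboth}) with the adaptive-to-non-adaptive reduction (Corollary~\ref{cor:forcedresult}), using the QMB bound $\dim(\hilbert_A)\leq 2^q$ to control the side information. The strategy is a proof by contradiction: suppose the scheme is \emph{not} $(2^{\frac12 q}\sqrt\epsilon)$-binding against $q$-QMB projective adversaries, and derive a contradiction with the assumed $\epsilon$-binding property against non-adaptive adversaries.

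First I would assume that there is a state $\rho_{AB}$ with $\dim(\hilbert_A)\leq 2^q$ and a message $x$ such that $P_0^A(\rho_{AB})+P_1^A(\rho_{AB}) > 1 + 2^q \epsilon$; call the excess $\epsilon' := 2^{q}\epsilon$, so that $P_0^A(\rho) + P_1^A(\rho) \geq 1 + \epsilon'$. By Lemma~\ref{lemma:openboth}, there then exists a state $\rho^0_{AB}$ (which still lives in the same registers, so $\dim(\hilbert_A)\leq 2^q$ is preserved) with $P_0^A(\rho^0)=1$ and $P_1^A(\rho^0)\geq (\epsilon')^2$. The point of passing to $\rho^0$ is that opening to $0$ now succeeds \emph{with certainty}, which lets us reinterpret the remaining task ``open to $1$'' as an instance of the abstract game from Section~\ref{sec:mainresult}: the family of POVMs is $\mathbf E = \{\{\mathbb V_{x,y_1}, \id - \mathbb V_{x,y_1}\}\}_{y_1}$ ranging over openings of $1$, and $P_1^A(\rho^0) = \succ(\rho^0_{AB}, \mathbf E)$ is exactly the adaptive success probability.

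Next I would apply Corollary~\ref{cor:forcedresult} together with Proposition~\ref{lem:imaxlesshzero} to get
\[
  (\epsilon')^2 \leq P_1^A(\rho^0) = \succ(\rho^0_{AB},\mathbf E) \leq 2^{\imaxacc(B;A)_{\rho^0}}\,\succ(\rho^0_B,\mathbf E) \leq 2^{H_0(A)}\,\succ(\rho^0_B,\mathbf E) \leq 2^q\,\succ(\rho^0_B,\mathbf E),
\]
where the final step uses $H_0(A)=\lg\rank{\rho^0_A}\leq\lg\dim(\hilbert_A)\leq q$. Now $\succ(\rho^0_B,\mathbf E)=\max_{y_1}\trace{\mathbb V_{x,y_1}\rho^0_B}=P_1^{NA}(\rho^0)$ is precisely the non-adaptive probability of opening $1$ from the marginal $\rho^0_B$, while $P_0^A(\rho^0)=1$ trivially gives $P_0^{NA}(\rho^0)=1$ as well (opening $0$ with certainty does not require the $A$ register, since the optimal projective measurement on $A$ can be replaced by its effect on the already-certain-to-accept state). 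Rearranging yields $P_0^{NA}(\rho^0)+P_1^{NA}(\rho^0) \geq 1 + (\epsilon')^2/2^q = 1 + 2^{2q}\epsilon^2/2^q = 1 + 2^q\epsilon^2 > 1+\epsilon$, contradicting non-adaptive $\epsilon$-binding (for $\epsilon<1$, which we may assume).

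The main obstacle I anticipate is the bookkeeping around the marginalization step: one must argue carefully that $P_0^A(\rho^0)=1$ forces $P_0^{NA}(\rho^0)=1$ so that \emph{both} summands survive when passing to the non-adaptive marginal $\rho^0_B$, since the corollary only directly controls the ``open to $1$'' side of the game. Getting the constants to line up ($(\epsilon')^2 = 2^{2q}\epsilon^2$ against the $2^q$ loss, leaving a clean $2^q\epsilon^2$ excess) is the delicate part, and one should double-check whether the cleanest route is the contradiction above or a direct forward bound; the contradiction framing makes the role of the gadget lemma most transparent, which is why I would take it.
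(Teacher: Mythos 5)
Your skeleton matches the paper's proof (Lemma~\ref{lemma:openboth}, then Corollary~\ref{cor:forcedresult} with $\imaxacc(B;A)_{\rho^0} \leq H_0(A) \leq q$), but there is a genuine gap at exactly the step you flagged: the claim that $P_0^A(\rho^0)=1$ ``trivially'' forces $P_0^{NA}(\rho^0)=1$ is false. $P_0^{NA}$ requires a \emph{single fixed} opening $y_0$ accepted with certainty on the marginal $\rho^0_B$, whereas the adaptive measurement achieving $P_0^A(\rho^0)=1$ may announce different $y_0$'s on different branches. Concretely, take $\rho^0_{AB} = \frac12 \proj{0}_A \otimes \sigma^{(a)} + \frac12 \proj{1}_A \otimes \sigma^{(b)}$ with $\trace{\mathbb V_{a}\sigma^{(a)}} = \trace{\mathbb V_{b}\sigma^{(b)}} = 1$ and $\mathbb V_a, \mathbb V_b$ nearly orthogonal: then $P_0^A = 1$ but $P_0^{NA} \approx \frac12$, so your intended contradiction with $P_0^{NA}+P_1^{NA} \leq 1+\epsilon$ never materializes. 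The paper never needs $P_0^{NA}(\rho^0)=1$; the missing idea is a convexity argument. Taking the optimal projective measurement $\{\mathbb F_{y_0}\}_{y_0}$ achieving $P_0^A(\rho^0)=1$, one writes $\rho^0_B = \sum_{y_0} \lambda_{y_0}\sigma_{y_0}$ where each conditional state satisfies $\trace{\mathbb V_{y_0}\sigma_{y_0}}=1$; applying the non-adaptive $\epsilon$-binding assumption to \emph{each} $\sigma_{y_0}$ separately (a non-adaptive Alice could have prepared $\sigma_{y_0}$ herself) gives $\trace{\mathbb V_{y_1}\sigma_{y_0}} \leq \epsilon$ for every $y_1$ opening $1$, and averaging yields $P_1^{NA}(\rho^0) \leq \epsilon$. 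That is the bound which feeds into Corollary~\ref{cor:forcedresult}: $\alpha^2 \leq P_1^A(\rho^0) \leq 2^{q} P_1^{NA}(\rho^0) \leq 2^{q}\epsilon$.

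Separately, your arithmetic at the contradiction threshold is broken. Negating $(2^{\frac12 q}\sqrt{\epsilon})$-binding gives an excess $\alpha > 2^{q/2}\sqrt{\epsilon}$, i.e.\ $\alpha^2 > 2^q\epsilon$, not an excess of $\epsilon' = 2^q\epsilon$ as you set it. With your choice, the final step $1 + 2^q\epsilon^2 > 1 + \epsilon$ requires $2^q\epsilon > 1$, which fails precisely in the regime where the theorem has content (small $\epsilon$, e.g.\ $q=0$ and $\epsilon = \frac12$ gives $2^q\epsilon^2 = \frac14 < \epsilon$); ``$\epsilon < 1$'' does not rescue it. The paper's direct formulation avoids this bookkeeping entirely: set $P_0^A(\rho)+P_1^A(\rho) = 1+\alpha$, derive $\alpha^2 \leq 2^q\epsilon$ as above, and conclude $\alpha \leq 2^{\frac12 q}\sqrt{\epsilon}$. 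Both issues are repairable within your framework, but as written the proposal does not establish the theorem.
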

\begin{proof}
  Let $\rho_{AB}\in \mathcal D(\hilbert_A\otimes \hilbert_B)$ be the
  joint state of Alice and Bob where $\dim(\hilbert_A)\leq 2^q$ and
  let $\alpha>0$ be such that the opening probabilities satisfy
  $P_0^A(\rho)+P_1^A(\rho)= 1+\alpha$. From
  Lemma~\ref{lemma:openboth}, we know that there exists
  $\rho_{AB}^0\in \mathcal D(\hilbert_A\otimes \hilbert_B)$ constructed
  from $\rho$ such that
  $$P_0^A(\rho^0)=1 \text{ and } P_1^A(\rho^0)\geq\alpha^2\enspace .$$
  We use Corollary~\ref{cor:forcedresult} and the assumption that the
  protocol is $\epsilon$-binding against non-adaptive adversaries to
  show that $\alpha$ cannot be too large.  Let
  $\{\mathbb F_{y_0}\}_{y_0}$ be the measurement that maximizes
  $P_0^A(\rho^0)$.  Let us consider Bob's reduced density operator of
  $\rho^0$:
  \[
    \rho^0_B = \trace[A]{\rho^0_{AB}}
    = \sum_{y_0}\trace[A]{(\mathbb F_{y_0}\otimes \id)\rho^0_{AB}}
    = \sum_{{y_0}}\lambda_{y_0} \sigma_{y_0}
  \]
  where for each ${y_0}$, it holds that
  $\trace{\mathbb V_{y_0} \sigma_{y_0}}=1$. This implies
  $\trace{\mathbb V_{y_1} \sigma_{y_0}}\leq \epsilon$ for every $y_1$
  that opens 1 from our assumption of the non-adaptive security of the
  commitment scheme. Then
  $$ P_1^{NA}(\rho^0_{AB})= \max_{y_1}\trace{\mathbb V_{y_1} \rho^0_{B}}
  = \max_{y_1}\sum_{y_0} \lambda_{y_0} \trace{\mathbb V_{y_1}
    \sigma_{y_0}} \leq \epsilon\enspace .$$
  Applying Corollary~\ref{cor:forcedresult} completes the proof:
  $$\alpha^2 \leq P_1^A(\rho^0)\leq
  2^{\imaxacc(B;A)_{\rho_0}}P_1^{NA}(\rho^0) \leq
  2^{\hmax{A}_{\rho_0}}\epsilon \leq 2^{q}\epsilon
  \enspace .$$\qed
\end{proof}

\subsection{Special Case: the \textsc{bcjl} Bit-Commitment Scheme}

In this subsection, we use the results of the previous section to
prove the security of the \textsc{bcjl} scheme in the bounded
storage model against projective measurement attacks.

The \textsc{bcjl} bit-commitment scheme was proposed in 1993 by
Brassard, Crépeau, Jozsa, and Langlois~\cite{366851}. They proposed to
hide the committed bit using a two-universal family of hash functions
applied on the codeword of an error correcting code and then send this
codeword through BB84 qubits. The idea behind this protocol is that
privacy amplification hides the committed bit while the error
correcting code makes it hard to change the value of this bit without
being detected. While their intuition was correct, their proof
ultimately was not, as shown by Mayers' impossibility result for bit
commitment~\cite{PhysRevLett.78.3414}.

The following scheme differs only slightly from the
original~\cite{366851}, this allows us to recycle some of the analysis
from Sect.~\ref{sec:onecc}.

\begin{figure}[h]
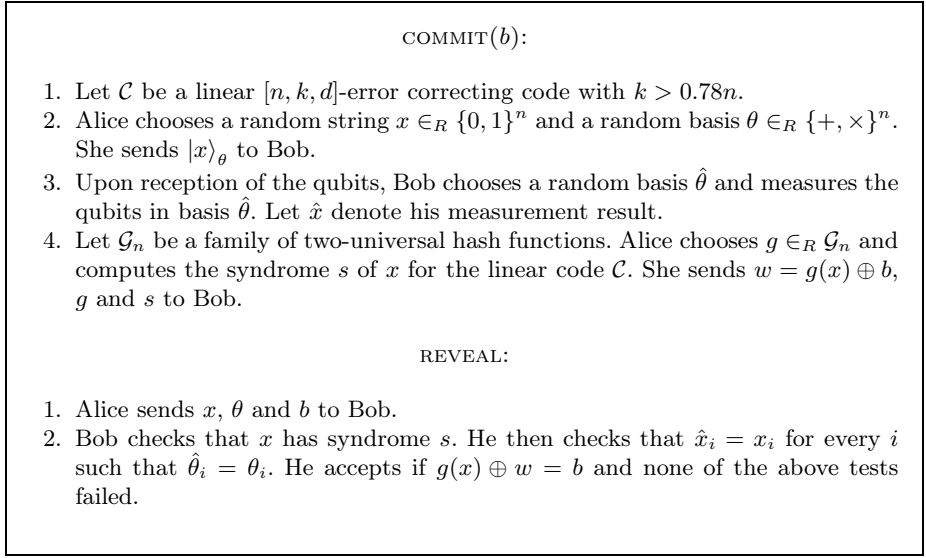

  \begin{framed}
    \begin{center}
      \textsc{commit($b$)}:
    \end{center}

    \begin{enumerate}
    \item Let $\mathcal C$ be a linear $[n,k,d]$-error correcting code
      with $k>0.78n$.
    \item Alice chooses a random string $x\in_R \bool^n$ and a
      random basis $\theta\in_R \{+,\times\}^n$. She sends $\ket
      x_{\theta}$ to Bob.
    \item Upon reception of the qubits, Bob chooses a random basis
      $\hat \theta$ and measures the qubits in basis $\hat
      \theta$. Let $\hat x$ denote his measurement result.
    \item Let $\mathcal G_n$ be a family of two-universal hash
      functions. Alice chooses $g\in_R \mathcal G_n$ and computes the
      syndrome $s$ of $x$ for the linear code $\mathcal C$. She sends
      $w=g(x)\oplus b$, $g$ and $s$ to Bob.
    \end{enumerate}
    \begin{center}
      \textsc{reveal}:
    \end{center}

    \begin{enumerate}
    \item Alice sends $x$, $\theta$ and $b$ to Bob.
    \item Bob checks that $x$ has syndrome $s$. He then checks that
      $\hat x_i=x_i$ for every $i$ such that $\hat \theta_i=
      \theta_i$. He accepts if $g(x)\oplus w= b$ and none of
      the above tests failed.
    \end{enumerate}
  \end{framed}
  \caption{The \textsc{bcjl} bit-commitment scheme}
\label{fig:bc}
\end{figure}

\begin{theorem}\label{thm:noisy-hiding}
  \textsc{bcjl} is statistically hiding as long as
  $0.22 - (1-k/n) \in \Omega(1)$.
\end{theorem}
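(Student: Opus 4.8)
The plan is to establish statistical hiding of \textsc{bcjl} by showing that Bob, after the commit phase, has negligible information about the committed bit $b$. Since $b$ is masked as $w = g(x)\oplus b$, it suffices to show that $g(x)$ is close to uniform from Bob's perspective, and this follows from privacy amplification (Theorem~\ref{thm:privacyamp}) provided we can lower-bound Bob's min-entropy in $x$. So the core of the argument is an uncertainty-relation-style bound: I would argue that a dishonest Bob, who receives the BB84 state $\ket{x}_\theta$ for random $x$ and random $\theta$ but is not told $\theta$, cannot guess $x$ too well.

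First I would quantify Bob's guessing probability on $x$ given only the qubits $\ket{x}_\theta$ (averaged over the unknown basis $\theta$). This is a standard BB84-type uncertainty relation: for each qubit, since $\theta_i$ is uniform and unknown, Bob's optimal measurement guesses $x_i$ with probability $\cos^2(\pi/8) = \gamma$, and the qubits are independent, so $\guess(X|B) \leq \gamma^n$, giving $\hmin{X|B} \geq n\lg(1/\gamma) = n\cdot 0.22\ldots$ (using $\lg(1/\cos^2(\pi/8))\approx 0.22$). This mirrors the $\gamma^N$ computation already used in the hiding proof of Theorem~\ref{thm:secbob1cc}. Next I would account for the side information Bob additionally receives in step~4, namely the syndrome $s$ of $x$, which consists of $n-k$ bits. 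By the chain rule for min-entropy, revealing $s$ costs at most $n-k$ bits of entropy, so Bob's min-entropy in $x$ given both the qubits and $s$ is at least $n\lg(1/\gamma) - (n-k) = n\bigl(\lg(1/\gamma) - (1-k/n)\bigr)$, which is $n\bigl(0.22 - (1-k/n)\bigr)$ up to the precise constant.

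Finally I would invoke privacy amplification: since $g$ is drawn from a two-universal family with binary output, Theorem~\ref{thm:privacyamp} bounds the trace distance between $\rho_{YGE}$ and $\frac{\id_Y}{2}\otimes\rho_{GE}$ (where $Y$ holds $g(x)$ and $E$ holds Bob's quantum state together with $s$ and $g$) by $\tfrac12\cdot 2^{-\frac12(\hmin{X|E}-1)}$. Plugging in the min-entropy bound gives a distance of order $2^{-\frac12 n(0.22 - (1-k/n))}$, and since $g(x)$ is near-uniform and independent of Bob's view, $w = g(x)\oplus b$ hides $b$: the two commitment states $\rho_0$ and $\rho_1$ are within twice this trace distance. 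Under the hypothesis $0.22 - (1-k/n) \in \Omega(1)$, the exponent grows linearly in $n$, so the distance is negligible and the scheme is statistically hiding.

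The main obstacle I anticipate is the first step, establishing the per-qubit guessing bound $\gamma = \cos^2(\pi/8)$ cleanly and making sure the independence argument across qubits yields the clean product $\gamma^n$ rather than just an additive entropy bound; one must be careful that Bob's optimal joint measurement does no better than attacking each qubit separately, which is where the min-entropy (rather than, say, Shannon-entropy) formulation and the structure of the BB84 encoding do the real work. Everything downstream (the syndrome chain rule and privacy amplification) is routine given that bound, and indeed parallels the already-sketched proof of Theorem~\ref{thm:secbob1cc}, so I expect the authors to handle the hiding proof quickly, likely again only as a sketch.
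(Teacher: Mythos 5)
Your proposal is correct and matches the paper's proof, which is given only as a one-line remark deferring to the proof of Theorem~\ref{thm:secbob1cc}: Bob's per-qubit uncertainty about $x_i$ in \textsc{bcjl} is the same as his uncertainty about $\theta_i$ in $\textsc{commit}^{\onecc}$ (guessing probability $\gamma=\cos^2(\pi/8)$ in both cases), after which the syndrome chain rule and privacy amplification go through verbatim. You also correctly drop the $2qN$ sampling loss that appears in Theorem~\ref{thm:secbob1cc}, since \textsc{bcjl} has no cut-and-choose phase, so the exponent becomes $n\bigl(\lg(1/\gamma)-(1-k/n)\bigr)\geq n\bigl(0.22-(1-k/n)\bigr)$ exactly as claimed.
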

The proof of Theorem~\ref{thm:noisy-hiding} is straightforward. It
follows the same approach as that of Theorem~\ref{thm:secbob1cc} by
noticing that Bob has the same uncertainty about each $x_i$ as he had
about $\theta_i$ in protocol $\textsc{commit}^\onecc$.

Instead of proving that \textsc{bcjl} is binding, we prove that an
equivalent scheme \textsc{bcjl$_\delta$} (see Fig.~\ref{fig:bcprime})
is binding. The \textsc{bcjl$_\delta$} scheme is a modified version of
\textsc{bcjl} in which Bob has unlimited quantum memory and stores the
qubits sent by Alice during the commit phase instead of measuring
them. The opening phase of \textsc{bcjl$_\delta$} is characterized by
a parameter $\delta$ which determines how close it is to the opening
phase of \textsc{bcjl}. The following lemma shows that the two
protocols are equivalent from Alice's point of view; if Alice can
cheat an honest Bob then she can cheat a Bob with unbounded quantum
computing capabilities.

\begin{figure}[h]
  \begin{framed}
    \begin{center}
      \textsc{commit$_\delta$}($b$):
    \end{center}

    \begin{enumerate}
    \item Let $\mathcal C$ be a linear $[n,k,d]$-error correcting code
      with $k>0.78n$.
    \item Alice chooses a random string $x\in_R \bool^n$ and a
      random basis $\theta\in_R \{+,\times\}^n$. She sends $\ket
      x_{\theta}$ to Bob.
    \item Bob stores all the qubits he received to measure them later.
    \item Let $\mathcal G_n$ be a family of two-universal hash
      functions. Alice chooses $g\in_R \mathcal G_n$ and computes the
      syndrome $s$ of $x$ for the linear code $\mathcal C$. She sends
      $w=g(x)\oplus b$, $g$ and $s$ to Bob.
    \end{enumerate}
    \begin{center}
      \textsc{reveal$_\delta$}:
    \end{center}

    \begin{enumerate}
    \item Alice sends $x$, $\theta$ and $b$ to Bob.
    \item Bob checks that $x$ has syndrome $s$. He then measures his
      stored qubits in basis $\theta$ to obtain $\hat x$. He accepts
      if $d(x,\hat x)\leq \delta n$ and $g(x)\oplus w=b$ and
      none of the above tests failed. \label{step:storedmeasure}
    \end{enumerate}
  \end{framed}
  
  \caption{The \textsc{bcjl$_\delta$} bit-commitment scheme.}
\label{fig:bcprime}
\end{figure}

\begin{lemma} \label{lemma:equiv} Let $\delta>0$. If
  \textsc{bcjl$_\delta$} is $\epsilon$-binding then
  \textsc{bcjl} is $(\epsilon+2\cdot 2^{-\delta n})$-binding.
\end{lemma}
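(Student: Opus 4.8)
The plan is to express the verifications of both schemes as operators acting on the \emph{same} quantum register\,---\,the $n$ qubits Alice sends during the commit phase\,---\,and then compare these operators directly. From Alice's viewpoint the commit phases of \textsc{bcjl} and \textsc{bcjl}$_\delta$ are identical, so a dishonest Alice produces the same joint state $\rho_{AB}$ with $B$ the register of sent qubits and with the same register $A$; in particular any bound $\dim(\hilbert_A)\le 2^q$ and any projectivity constraint on her opening measurement carries over unchanged. For a valid opening announcing $(x,\theta)$, the verification of \textsc{bcjl}$_\delta$ is the projector $\mathbb V^{\delta}_{x,\theta}=\sum_{\hat x:\,d(x,\hat x)\le \delta n}\proj{\hat x}_{\theta}$ onto the span of the strings close to $x$ in basis $\theta$.

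First I would compute the effective verification operator of \textsc{bcjl} on the un-measured qubits. In \textsc{bcjl} Bob measures qubit $i$ in a uniformly random basis $\hat\theta_i$ and, at opening time, only requires agreement ($\hat x_i=x_i$) on the positions where $\hat\theta_i=\theta_i$. Since Bob's measurement acts on $B$ while Alice's opening measurement $\{\mathbb F_{y_b}\}$ acts on $A$, the two commute, so the acceptance probability is unchanged if we defer Bob's measurement; averaging the acceptance operator over the uniform choice of $\hat\theta$ gives, qubit by qubit, $\tfrac12\proj{x_i}_{\theta_i}+\tfrac12\id=\proj{x_i}_{\theta_i}+\tfrac12\proj{\bar x_i}_{\theta_i}$, and hence $\mathbb V^{\textsc{bcjl}}_{x,\theta}:=\bigotimes_{i}\bigl(\proj{x_i}_{\theta_i}+\tfrac12\proj{\bar x_i}_{\theta_i}\bigr)=\sum_{\hat x}2^{-d(x,\hat x)}\proj{\hat x}_{\theta}$. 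This identifies $P_b^{A}$ for \textsc{bcjl} with $\max_{\{\mathbb F_{y_b}\}}\sum_{y_b}\trace{(\mathbb F_{y_b}\otimes \mathbb V^{\textsc{bcjl}}_{y_b})\rho_{AB}}$ over exactly the same family of states and the same valid-opening index sets as for \textsc{bcjl}$_\delta$ (the syndrome and hash checks only restrict which $(x,\theta)$ count as openings to $b$ and do not touch the quantum register).

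The technical heart is then the operator inequality $\mathbb V^{\textsc{bcjl}}_{x,\theta}\le \mathbb V^{\delta}_{x,\theta}+2^{-\delta n}\,\id$. Both sides are diagonal in the basis $\theta$, so it suffices to compare eigenvalues on each $\ket{\hat x}_{\theta}$: writing $d=d(x,\hat x)$, the left-hand eigenvalue is $2^{-d}$, while the right-hand eigenvalue is $1+2^{-\delta n}$ when $d\le\delta n$ and $2^{-\delta n}$ when $d>\delta n$. In the first case $2^{-d}\le 1$, and in the second $2^{-d}<2^{-\delta n}$, so the inequality holds termwise.

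Finally I would combine these. For any fixed opening measurement $\{\mathbb F_{y_b}\}$ the operator inequality gives $\sum_{y_b}\trace{(\mathbb F_{y_b}\otimes \mathbb V^{\textsc{bcjl}}_{y_b})\rho_{AB}}\le \sum_{y_b}\trace{(\mathbb F_{y_b}\otimes \mathbb V^{\delta}_{y_b})\rho_{AB}}+2^{-\delta n}\sum_{y_b}\trace{\mathbb F_{y_b}\rho_A}$, and since $\{\mathbb F_{y_b}\}$ is a (sub-)POVM the last sum is at most $1$. Maximizing over $\{\mathbb F_{y_b}\}$ yields $P_b^{A,\textsc{bcjl}}\le P_b^{A,\delta}+2^{-\delta n}$ for each $b$, and summing the two bounds together with the assumed $\epsilon$-binding $P_0^{A,\delta}+P_1^{A,\delta}\le 1+\epsilon$ of \textsc{bcjl}$_\delta$ gives $P_0^{A,\textsc{bcjl}}+P_1^{A,\textsc{bcjl}}\le 1+\epsilon+2\cdot 2^{-\delta n}$, as claimed. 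I expect the main obstacle to be the first step\,---\,correctly deriving the weighted operator $\mathbb V^{\textsc{bcjl}}_{x,\theta}$, in particular the $2^{-d(x,\hat x)}$ eigenvalues and the observation that deferring Bob's measurement is harmless\,---\,after which the operator inequality and the final bookkeeping are routine.
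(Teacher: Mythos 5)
Your proposal is correct and is essentially the paper's own argument recast in operator language: the paper likewise defers Bob's measurement to the reveal phase (so the outcome $\hat x$ is identically distributed in both schemes) and bounds by $2^{-\delta n}$ the probability that the random-subset check $x_T=\hat x_T$ passes when $d(x,\hat x)>\delta n$, which is exactly your eigenvalue comparison $2^{-d(x,\hat x)}\leq \mathbf{1}[d(x,\hat x)\leq\delta n]+2^{-\delta n}$ underlying the inequality $\mathbb V^{\textsc{bcjl}}_{x,\theta}\leq \mathbb V^{\delta}_{x,\theta}+2^{-\delta n}\,\id$. The difference is purely presentational: you make the averaged acceptance operator $\sum_{\hat x}2^{-d(x,\hat x)}\proj{\hat x}_\theta$ and the per-measurement maximization explicit, where the paper argues directly with coupled acceptance probabilities $p_b\leq p_b^\delta+2^{-\delta n}$.
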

\begin{proof}
  Let $(x,\theta)$ be an opening to 0. First notice that Bob's actions
  in \textsc{bcjl} are equivalent to holding onto his state until the
  opening procedure, measuring in basis $\theta$ and verifying $x_T=
  \hat x_T$ for a randomly chosen sample $T\subseteq [n]$. From this
  point of view, Bob's measurement result is identically distributed
  in both protocols and we can speak of $\hat x$ without ambiguity.
  If $d(x,\hat x)> \delta n$, then the probability that $x_i= \hat
  x_i$ for all $i\in T$ is at most $2^{-\delta n}$.  Therefore, if Bob
  rejects in \textsc{reveal$_\delta$} with measurement outcome $\hat
  x$, then the probability that he rejects in \textsc{reveal} with the
  same outcome is at least $1-2^{-\delta n}$.  If we let $p_0$ denote
  Bob's accepting probability in the original protocol and
  $p_0^\delta$ in the modified protocol, we have $p_0 \leq p_0^\delta
  + 2^{-\delta n}$. Since the same holds for openings to 1, we have
  $$p_0+p_1\leq p_0^\delta+ p_1^\delta + 2\cdot 2^{-\delta n} \leq
  1+\epsilon + 2\cdot 2^{-\delta n} \enspace. $$ \qed
\end{proof}

The following proposition establishes the security of \textsc{bcjl$_\delta$} in
the non-adaptive setting. Its proof is straightforward and can be
found in Appendix~\ref{sec:na-analysis}.
\begin{proposition}\label{lem:probNA}
  \textsc{bcjl$_\delta$} is $2^{-d/2+\delta n +h(\delta)n}$-binding
  against non-adaptive adversaries.
\end{proposition}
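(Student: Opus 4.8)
The plan is to express the non-adaptive binding quantity through the verification projectors and then control the cross term using Lemma~\ref{lem:normineq}. An admissible opening of a bit $b$ is a pair $(x,\theta)$ with $x$ of syndrome $s$ and $g(x)\oplus w=b$, and the corresponding verification projects onto the span of the ball states, i.e.\ $\mathbb V_{x,\theta}=\sum_{x'\in\Ball(x)}\proj{x'}_{\theta}$. Writing $\Pi_0=\mathbb V_{x_0,\theta_0}$ and $\Pi_1=\mathbb V_{x_1,\theta_1}$ for openings attaining $P^{NA}_0(\rho_{AB})$ and $P^{NA}_1(\rho_{AB})$, I would use that for \emph{any} $\rho_B$ one has $P^{NA}_0+P^{NA}_1=\trace{(\Pi_0+\Pi_1)\rho_B}\le\|\Pi_0+\Pi_1\|\le 1+\|\Pi_0\Pi_1\|$, the last inequality being Lemma~\ref{lem:normineq}. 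It therefore suffices to prove $\|\Pi_0\Pi_1\|\le 2^{-d/2+\delta n+h(\delta)n}$ for every admissible pair of openings, a bound that no longer involves $\rho_B$.

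First I would record the distance fact driving the argument: since $x_0$ and $x_1$ both have syndrome $s$ while $g(x_0)\neq g(x_1)$, their difference $x_0\oplus x_1$ is a nonzero codeword, so $d(x_0,x_1)\ge d$. Next, writing $\Pi_b=W_bW_b^\dagger$ for the isometry $W_b$ whose columns are the orthonormal states $\{\ket{x'}_{\theta_b}\}_{x'\in\Ball(x_b)}$, one gets $\|\Pi_0\Pi_1\|=\|W_0(W_0^\dagger W_1)W_1^\dagger\|\le\|M\|$ with $M:=W_0^\dagger W_1$, the matrix of overlaps $M_{x',x''}=\langle x'|_{\theta_0}\,|x''\rangle_{\theta_1}$ indexed by $x'\in\Ball(x_0)$ and $x''\in\Ball(x_1)$. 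The main obstacle is exactly that the two openings may announce different bases $\theta_0\neq\theta_1$, so $\Pi_0$ and $\Pi_1$ are not simultaneously diagonal and $M$ is genuinely nonzero; this is where the B92/BB84 overlap structure must be exploited. Letting $\bar A=\{i:(\theta_0)_i\neq(\theta_1)_i\}$ denote the positions where the announced bases disagree, each single-qubit factor equals $0$ or $1$ on the agreeing positions and has magnitude $2^{-1/2}$ on $\bar A$, so $|M_{x',x''}|=2^{-|\bar A|/2}$ when $x'$ and $x''$ agree outside $\bar A$, and $|M_{x',x''}|=0$ otherwise.

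The crux is to convert the code distance into a lower bound on $|\bar A|$. A nonzero entry $M_{x',x''}$ forces $x'$ and $x''$ to agree off $\bar A$, hence $d(x',x'')\le|\bar A|$; on the other hand the triangle inequality gives $d(x',x'')\ge d(x_0,x_1)-d(x_0,x')-d(x_1,x'')\ge d-2\delta n$. Thus either $M=0$ (and we are done) or $|\bar A|\ge d-2\delta n$, so every nonzero entry satisfies $|M_{x',x''}|\le 2^{-(d-2\delta n)/2}=2^{-d/2+\delta n}$. Finally I would bound the spectral norm by the Hilbert--Schmidt norm: $M$ has at most $|\Ball(x_0)|\cdot|\Ball(x_1)|\le 2^{2h(\delta)n}$ entries, each of magnitude at most $2^{-d/2+\delta n}$, so $\|M\|\le\|M\|_{\mathrm{HS}}\le 2^{h(\delta)n}\cdot 2^{-d/2+\delta n}=2^{-d/2+\delta n+h(\delta)n}$ (a Schur test on the row and column sums yields the same bound and would serve equally well). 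Combined with the reduction of the first paragraph this gives $P^{NA}_0+P^{NA}_1\le 1+2^{-d/2+\delta n+h(\delta)n}$, which is the claimed binding bound.
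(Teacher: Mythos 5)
Your proof is correct and takes essentially the same route as the paper's: the identical reduction $P_0^{NA}+P_1^{NA}\le \|\Pi_0+\Pi_1\|\le 1+\|\Pi_0\Pi_1\|$ via Lemma~\ref{lem:normineq}, the same distance fact that two admissible openings announce syndrome-$s$ strings at distance at least $d$ (hence $d(z,z')\ge d-2\delta n$ across the balls), and the same combination of the per-pair overlap bound $2^{-d/2+\delta n}$ with the ball size $|\Ball(x)|\le 2^{h(\delta)n}$. The only difference is that where the paper imports the norm bound $\|\Pi_0\Pi_1\|\le \max_{z,z'}|\bra{z}_{\theta_0}\ket{z'}_{\theta_1}|\sqrt{|\Ball(x_0)||\Ball(x_1)|}$ from the techniques of~\cite{bfgs11}, you derive it self-containedly via the Gram matrix $M=W_0^\dagger W_1$ and its Hilbert--Schmidt norm, which is a welcome (but minor) elaboration rather than a different argument.
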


Since the bit-commitment scheme \textsc{bcjl$_\delta$} is
non-interactive, it directly follows from
Theorem~\ref{thm:non-interactivebinding} and Proposition~\ref{lem:probNA}
that \textsc{bcjl$_\delta$} is
$2^{\frac 12(q-d/2 +\delta n + h(\delta)n)}$-binding against $q$-QMB
projective adversaries. Combining the above with
Lemma~\ref{lemma:equiv}, we have the following statement for the
\textsc{bcjl} scheme.

\begin{theorem}
  The \textsc{bcjl} bit-commitment scheme is
  $(2^{\frac 12(q-d/2 +\delta n + h(\delta)n)}+2\cdot 2^{-\delta
    n})$-binding against $q$-QMB projective adversaries.
\end{theorem}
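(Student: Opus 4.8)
The plan is to compose the three reductions already available and simply propagate the error parameter through them; no fresh quantum-information estimate is needed, since all the analytic content lives in Proposition~\ref{lem:probNA}, Theorem~\ref{thm:non-interactivebinding}, and Lemma~\ref{lemma:equiv}. The target scheme is \textsc{bcjl}, but we reach it through its delayed-measurement variant \textsc{bcjl$_\delta$}, for which the non-adaptive bound is already known.

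First I would note that \textsc{bcjl$_\delta$} is a non-interactive scheme with projective verification, so Theorem~\ref{thm:non-interactivebinding} applies to it directly. Substituting the non-adaptive error $\epsilon = 2^{-d/2 + \delta n + h(\delta) n}$ supplied by Proposition~\ref{lem:probNA} into that reduction, the $q$-QMB binding error becomes $2^{\frac12 q}\sqrt{\epsilon}$. Expanding the square root, $2^{\frac12 q}\sqrt{\epsilon} = 2^{\frac12 q}\, 2^{\frac12(-d/2 + \delta n + h(\delta) n)} = 2^{\frac12(q - d/2 + \delta n + h(\delta) n)}$, so \textsc{bcjl$_\delta$} is $2^{\frac12(q - d/2 + \delta n + h(\delta) n)}$-binding against $q$-QMB projective adversaries.

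Next I would apply Lemma~\ref{lemma:equiv}, which turns $\epsilon'$-binding of \textsc{bcjl$_\delta$} into $(\epsilon' + 2\cdot 2^{-\delta n})$-binding of \textsc{bcjl}. Taking $\epsilon' = 2^{\frac12(q - d/2 + \delta n + h(\delta) n)}$ from the previous paragraph yields exactly the claimed bound $2^{\frac12(q - d/2 + \delta n + h(\delta) n)} + 2\cdot 2^{-\delta n}$, completing the argument.

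The only point demanding a little care\,---\,and the closest thing to a difficulty in an otherwise routine composition\,---\,is checking that the adversary class stays fixed at $q$-QMB projective along every link of the chain. This is immediate for Theorem~\ref{thm:non-interactivebinding}, which is stated for precisely that class, and it also holds for Lemma~\ref{lemma:equiv}: its proof relates the accepting probabilities $p_b$ and $p_b^\delta$ for a fixed commit strategy through a coupling of Bob's delayed and immediate measurements that never acts on Alice's register, so the maximization defining the binding error over $q$-QMB projective strategies is preserved verbatim. Hence the error transfers for exactly the adversary class named in the statement.
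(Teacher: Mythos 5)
Your proposal is correct and matches the paper's own argument exactly: the paper likewise obtains the bound by feeding the non-adaptive error of Proposition~\ref{lem:probNA} into Theorem~\ref{thm:non-interactivebinding} to get $2^{\frac12(q-d/2+\delta n+h(\delta)n)}$-binding for \textsc{bcjl$_\delta$}, and then applies Lemma~\ref{lemma:equiv} to transfer this to \textsc{bcjl} with the additive $2\cdot 2^{-\delta n}$ term. Your extra check that Lemma~\ref{lemma:equiv} preserves the $q$-QMB projective adversary class is a sound observation the paper leaves implicit.
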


\section*{Acknowledgments}
FD acknowledges the support of the Czech Science Foundation (GA ČR)
project no GA16-22211S and of the EU FP7 under grant agreement no
323970 (RAQUEL). LS is supported by Canada's NSERC discovery grant.

\appendix

\section{Additional proofs}
\label{sec:na-analysis}

\newtheorem{innercustomthm}{Proposition}
\newenvironment{customthm}[1]
  {\renewcommand\theinnercustomthm{#1}\innercustomthm}
  {\endinnercustomthm}

\begin{customthm}{\ref{prop:classical-side-info}}
For any state $\rho_{ZAB}$ with classical $Z$: $$\imaxacc(B;A|Z)_{\rho}
\leq \max_z \imaxacc(B;A)_{\rho^z} \leq H_0(A)_{\rho}\enspace .$$
\end{customthm}
\begin{proof}
  By assumption, $\rho_{ZAB}$ is of the form $\rho_{ZAB} = \sum_z
  P_Z(z)\proj{z} \otimes \rho^z_{AB}$. Let ${\cal M}_{ZA\to X}$ be a
  measurement on $Z$ and $A$. By linearity, and by definition of
  $\imaxacc$, we have that
  \begin{align*} {\cal M}(\rho_{ZAB}) &= \sum_z P_Z(z){\cal
      M}\bigl(\proj{z} \otimes \rho^z_{AB}\bigr) \\&\leq \sum_z P_Z(z)
    \cdot 2^{\imaxacc(B;A|Z)_{\proj{z} \otimes \rho^z}}{\cal
      N}^z\bigl(\proj{z} \otimes \rho^z_{B}\bigr)
\end{align*}
for suitably chosen measurements ${\cal N}^z_{Z\to X}$. Now, noting
that $\imaxacc(B;A|Z)_{\proj{z} \otimes \rho^z} =
\imaxacc(B;A)_{\rho^z}$, and that there exists a fixed measurement
${\cal N}_{Z\to X}$ so that \linebreak${\cal N}^z(\proj{z}) = {\cal
  N}(\proj{z})$ for all $z$, it follows that
$$
{\cal M}(\rho_{ZAB}) \leq 2^{\max_z \imaxacc(B;A)_{\rho^z}} {\cal
  N}(\rho_{ZB}) \enspace ,
$$
which implies the first claimed inequality. The second inequality follows immediately by observing that $\imaxacc(B;A)_{\rho^z} \leq H_0(A)_{\rho^z} \leq H_0(A)_{\rho}$. \qed
\end{proof}

\begin{customthm}{\ref{prop:cptpmaps}}
  Let $\mathcal E_{AB\rightarrow A'B'}$ be a CPTP map of the form
  $\mathcal E= \mathcal E^A\otimes \mathcal
  E^B$. Then $$\imaxacc(B';A')_{\mathcal E(\rho)} \leq
  \imaxacc(B;A)_\rho\enspace .$$
\end{customthm}
\begin{proof}
  Since the CPTP map $\mathcal E^B$ commutes with any measurement
  applied on Alice's register, it cannot increase the maximal
  accessible information.

  To show that the CPTP map $\mathcal E^A$ cannot increase $\imaxacc$,
  it suffices to show that for every measurement $\mathcal M$ on
  register $A$, the CPTP map $\mathcal M\circ \mathcal E^A$ is also a
  measurement. Let $\{E_k\}_k$ be the Kraus operators associated with
  $\mathcal E^A$ and let $\{F_x\}_x$ be the POVM operators describing
the measurement $\mathcal M$. Then, the positive operators
$F'_x :=\sum_k E_k^\dagger F_x E_k$ describe a POVM $\mathcal M'$,
and
$$ \mathcal M\circ \mathcal E^A (\rho)= \mathcal M'(\rho)\leq
2^{\imaxacc(B;A)_\rho}\sigma_X\otimes \rho_B$$
by the definition of $\imaxacc(B;A)_\rho$ for some normalized
$\sigma_X$. \qed
\end{proof}

\begin{customthm}{\ref{lem:probNA}}
  Protocol \textsc{bcjl$_\delta$} is
  $2^{-d/2+\delta n +h(\delta)n}$-binding against non-adaptive
  adversaries.
\end{customthm}
\begin{proof}
  Let $\rho_{AB}\in \mathcal D(\hilbert_A\otimes \hilbert_B)$ be the
  joint state of Alice and Bob and let $\mathbb V^\delta_{x,\theta}:=
  \sum_{z\in\Ball(x)} \proj z_\theta$ be the projective measurement
  corresponding to Bob's verification procedure in protocol
  \textsc{bcjl$_\delta$} if Alice announced $(x,\theta)$. Using
  Lemma~\ref{lem:normineq}, we have that for any two distinct openings
  $(x,\theta)$ and $(x',\theta')$,
  \begin{align*}
    \trace{ \mathbb
      V^\delta_{x,\theta}\rho_B} +\trace{\mathbb
    V^\delta_{x',\theta'} \rho_B}
    &=     \trace{ (\mathbb
      V^\delta_{x,\theta}+\mathbb
      V^\delta_{x',\theta'} )\rho_B}
    \\&\leq ||\mathbb V^\delta_{x,\theta}+
        \mathbb V^\delta_{x',\theta'}||
    \\&\leq 1+||\mathbb V^\delta_{x,\theta}
        \mathbb V^\delta_{x',\theta'}||\enspace .
  \end{align*}
  Using techniques from~\cite{bfgs11}, we can show that
  $$||\mathbb V^\delta_{x,\theta} \mathbb V^\delta_{x',\theta'}||\leq
  \max_{\substack{z\in\Ball(x)\\ z'\in\Ball(x')}} |\bra
  z_{\theta}\ket{z'}_{\theta'}|\sqrt{|\Ball(x)||\Ball(x')|}\enspace
  .$$ Using the fact that $d(z,z')\geq d-2\delta n$ for $z\in
  \Ball(x)$ and $z'\in \Ball(x')$ for any two strings $x$ and $x'$
  with the same syndrome, and the fact that $|\Ball(x)|\leq
  2^{h(\delta)n}$, it follows that when maximizing over openings to 0
  and 1, we obtain
  $$P_0^{NA}(\rho_{AB})+ P_1^{NA}(\rho_{AB}) \leq
  1+2^{-d/2+\delta n+h(\delta)n}\enspace .$$ \qed
\end{proof}

\section{UC-Completeness of $\onecc$}
\label{sec:ucsec}

\subsection{The UC Model}

In order to show that a scheme securely implements a given
functionality $\sf F$ in the universally composable (UC) model, one
has to show that for any \emph{adversary} that attacks the scheme by
corrupting participants, there exists a \emph{simulator} $\siml$ that
instead attacks the functionality, but is indistinguishable from the
adversary from an outside observer's perspective. More precisely, one
considers an \emph{environment} $\env$ that interacts with the
adversary in the \emph{real} model where the scheme is executed, or
with $\siml$ in the \emph{ideal} model where the functionality $\sf F$
is executed, and it provides input to and obtains output from the
uncorrupt players (see Fig.~\ref{fig:realidealmodel}). The scheme is
said to {\em statistically quantum-UC-emulate} the functionality if
the environment cannot distinguish the real from the ideal model with
non-negligible probability. For a more detailed description of the
quantum UC framework, we refer to~\cite{fkszz13,u09}.

\begin{figure}[h]
  \begin{center}
    \begin{tikzpicture}[node distance=2cm, semithick]\node[] (alice) {$A$}; \node[right of=alice,
      xshift=2cm] (bob) {$B$}; \node[right of=bob] (env) {$\env$};
      \node[right of=alice, yshift=-0.5cm] (1cc) {$\onecc$};

      \path (alice) edge[bend left=15, <->] (bob) (bob) edge[bend
      left=10, <->] (1cc) (1cc) edge[bend left=10, <->] (alice);

      \path (env.west)+(0,0.1) edge[<-] node[above] {\tiny committed}
      ($(bob.east)+(0,0.1)$) (env.west)+(0,-0.1) edge[<-] node[below]
      {\tiny (open,$b$)} ($(bob.east)+(0,-0.1)$);
      
      \draw[<->, rounded corners] (alice.north) -| +(0,1) -| (env);
      \end{tikzpicture}

\vspace{2ex}

      \begin{tikzpicture}[node distance=2cm, semithick]\node[] (alice)
        {$\siml$}; \node[right of=alice] (bc) {\bc}; \node[right
        of=bc] (bob) {$\env$};

        \path (alice.east)+(0,0.1) edge[->] node[above] {\tiny
          (commit,$b$)} ($(bc.west)+(0,0.1)$) (alice.east)+(0,-0.1)
        edge[->] node[below] {\tiny open} ($(bc.west)+(0,-0.1)$)
        (bob.west)+(0,0.1) edge[<-] node[above] {\tiny committed}
        ($(bc.east)+(0,0.1)$) (bob.west)+(0,-0.1) edge[<-] node[below]
        {\tiny (open,$b$)} ($(bc.east)+(0,-0.1)$);
     
        \draw[<->, rounded corners] (alice.north) -| +(0,1) -| (bob);
    \end{tikzpicture}\vspace{-3ex}
  \end{center}
  \caption{The real model (top) and the ideal model (bottom) for
    protocol $\textsc{bc}^\onecc$ and functionality $\bc$, respectively, with a dishonest Alice. 
    $\textsc{bc}^\onecc$ statistically quantum-UC-emulates $\bc$
    (against dishonest Alice) if the two models are indistinguishable for $\env$.}
  \label{fig:realidealmodel}
\end{figure}
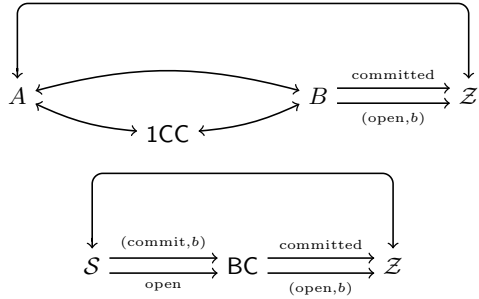

Most UC security proofs follow a similar mold.  $\siml$ internally
runs a copy of the adversary, and it simulates the actions and
interactions of the honest party, and of functionalities that are
possibly used as subroutines in the scheme. $\siml$ must look like the
real model adversary to the environment $\env$, so it forwards any
message it receives from $\env$ to (its internal execution of) the
adversary and vice versa.  Furthermore, from the interaction with the
adversary, it extracts the input(s) it has to provide to $\sf F$ (see
Fig.~\ref{fig:typicalsim}).

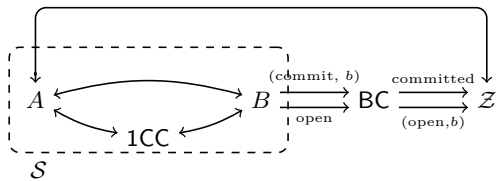
\begin{figure}[h]
  \begin{center}
    \begin{tikzpicture}[node distance=1.5cm, semithick] \node[]
      (alice) {$A$}; \node[right of=alice, xshift=1.5cm] (bob) {$B$};
      \node[right of=alice, yshift=-0.5cm] (1cc)
      {$\onecc$};\node[right of=bob] (bc) {$\bc$}; \node[right of=bc]
      (env) {$\env$};

      \node[minimum height=1.4cm, draw, dashed, rounded corners,
      fit=(alice) (bob), label={210:$\siml$}] {};

      \path (alice) edge[bend left=15, <->] (bob) (bob) edge[bend
      left=10, <->] (1cc) (1cc) edge[bend left=10, <->] (alice);
      \draw (bob.east)+(0,0.1) edge[->] node[above] {\tiny(commit,
        $b$)} ($(bc.west)+(0,0.1)$);

      \path (bob.east)+(0,-0.1) edge[->] node[below] {\tiny open}
      ($(bc.west) +(0,-0.1)$)

      (env.west)+(0,0.1) edge[<-] node[above] {\tiny committed}
      ($(bc.east)+(0,0.1)$)

      (env.west)+(0,-0.1) edge[<-] node[below] {\tiny (open,$b$)}
      ($(bc.east)+(0,-0.1)$);

      \draw[<->, rounded corners] (alice.north) -| +(0,1) -| (env);
    \end{tikzpicture}\vspace{-3ex}
  \end{center}
  \caption{The standard way for constructing $\siml$: run dishonest
    Alice internally and simulate honest Bob and the calls to the
    functionality $\onecc$, and extract the input to $\bc$.}
  \label{fig:typicalsim}
\end{figure}

In all our proofs below, the honest party is simulated by $\siml$ by
running it {\em honestly}, up to possible small modifications that are
unnoticeable to the adversary, and that do not affect the (simulated)
honest party's output. As such, in our proofs below, for showing
indistinguishability of the real and the ideal model, it is sufficient
to argue that, in the ideal model, the output of the simulated honest
party equals what $\sf F$ outputs to $\env$ upon the input that is
provided by $\siml$.

\subsection{UC Security of $\ot$ from $\onecc$}

As explained in Section~\ref{sec:UC}, our protocol
$\textsc{bc}^\onecc$ does not seem to satisfy the UC security
definition in case of a corrupted verifier Bob. As such, we cannot
conclude UC security of the standard $\bc$-based $\ot$
scheme~\cite{bbcs91,c94} with $\bc$ instantiated by
$\textsc{bc}^{\onecc}$. Instead, we show UC security of $\ot$ from
$\onecc$ by means of the following strategy.

First, we show UC security of {\normalfont $\textsc{bc}^\onecc$}
against a corrupted committer Alice (Proposition~\ref{prop:bc}).
Then, we show that $\bc$ and $\onecc$ together imply $\twocc$
(actually, a variation of $\twocc$ that gives Alice the option to
abort) by means of a straightforward protocol
(Proposition~\ref{prop:2cc}), and we recall that $\twocc$ implies
$\ot$ by means of the protocol $\textsc{ot}^\twocc$
from~\cite{fkszz13}. Instantiating the underlying functionality $\bc$
by $\textsc{bc}^{\onecc}$ then gives us a protocol
$\textsc{ot}^\onecc$ with UC security against a corrupted receiver
(Lemma~\ref{lemma:otdr}).  Finally, it is rather straightforward to
prove UC security of $\textsc{ot}^\onecc$ against a corrupted sender
directly (Lemma~\ref{lemma:otds}).

\begin{proposition}\label{prop:bc}
  Protocol {\normalfont $\textsc{bc}^\onecc$} statistically quantum-UC-emulates
  {\normalfont $\bc$} against corrupted committer Alice.
\end{proposition}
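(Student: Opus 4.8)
The plan is to build a simulator $\siml$ following the standard template of Fig.~\ref{fig:typicalsim}: $\siml$ runs the dishonest committer Alice internally, faithfully relaying all messages between Alice and the environment $\env$, while itself playing the roles of the honest receiver Bob and of the $\onecc$ functionality. Because $\siml$ controls $\onecc$, at the end of the commit phase it holds the entire classical view $view_{Bob,\onecc}$, namely Alice's $\onecc$-inputs $\theta$, the checked set $t$, and the transmitted values $g,s,w$. I would then have $\siml$ compute $b^* := c(view_{Bob,\onecc})$, using the extraction function $c$ whose existence was established when proving universal extractability (Definition~\ref{def:binding1cc}), and send $(\text{commit},b^*)$ to the ideal functionality $\bc$. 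In the reveal phase, $\siml$ runs Alice's opening against its simulated Bob; if the simulated Bob accepts an opening to a bit $b$, then $\siml$ forwards the $\text{open}$ command to $\bc$, and otherwise it does nothing.

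Next I would argue indistinguishability. Since $\siml$ executes Bob and $\onecc$ exactly as in the real protocol, the joint distribution of Alice's view and of Bob's internal state (including all abort events) is identical in the real and ideal executions up to the moment Bob produces his output. As noted in the description of the UC model, it therefore suffices to check that in the ideal model the output the simulated Bob would hand to $\env$ agrees with what $\bc$ outputs to $\env$ on the inputs supplied by $\siml$. After the commit phase both produce the message ``committed'', so the only possible discrepancy is in the opening phase: the real Bob outputs $(\text{open},b)$ precisely when he accepts Alice's opening to $b$, whereas $\bc$ outputs $(\text{open},b^*)$ exactly when $\siml$ forwards $\text{open}$, i.e. exactly when the simulated Bob accepts. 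Hence the two models diverge only on the event that the simulated Bob accepts an opening to the bit $b = \bar c = 1-b^*$.

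The core of the argument is to bound this bad event. By construction $b^* = c(view_{Bob,\onecc})$, so a successful opening to $\bar c$ is precisely the \emph{wrong} unveiling that universal extractability rules out: $\epsilon$-universal-extractability of $\textsc{commit}_{N,q,\tau,r}^{\onecc}$ guarantees that for any pure commit-and-open strategy, Alice unveils $1-c(view_{Bob,\onecc})$ with probability at most $\epsilon$, which is negligible for our parameter choices. Consequently the real and ideal executions are $\epsilon$-close from the viewpoint of $\env$, which gives statistical quantum-UC-emulation.

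The main obstacle I anticipate is reducing an \emph{arbitrary} adversarial strategy to the pure commit-and-open strategy for which Definition~\ref{def:binding1cc} is stated. This is handled by the usual purification argument: any measurement performed by the internal Alice (together with $\env$) can be deferred, so without loss of generality Alice's commit and open operations may be taken to be unitaries on a purified register with the classical openings produced by a single final measurement, and the extraction bound then applies verbatim. A secondary point to verify is simply that $\siml$ really does have access to all of $view_{Bob,\onecc}$ at the end of the commit phase, which it does precisely because it simulates both Bob and $\onecc$, so that $b^*$ is well-defined and efficiently computable by $\siml$ in the ideal execution.
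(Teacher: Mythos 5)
Your proposal is correct and follows essentially the same route as the paper: the identical simulator (run Alice internally, honestly simulate Bob and $\onecc$, exploit $\siml$'s view of Alice's $\onecc$-inputs to extract the committed bit, which the paper instantiates concretely as $b = g(\theta')\oplus w$ with $\theta'$ the syndrome-$s$ string closest to $\theta_{\bar t}$), with indistinguishability reduced to the negligible probability that Alice successfully unveils $1-c(view_{Bob,\onecc})$. The only differences are cosmetic: the paper has $\siml$ input ``abort'' into $\bc$ when the simulated Bob rejects (rather than doing nothing, which you should mirror so the ideal output matches a rejecting real Bob), and your explicit appeal to the universal-extractability theorem plus the purification remark is, if anything, slightly more careful than the paper's terse citation of Lemma~\ref{prop:NA1cc}.
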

\begin{proof}
  The construction of $\siml$ follows the paradigm outlined
  above. $\siml$ runs dishonest Alice internally, and it simulates
  honest Bob and $\onecc$ by running them honestly. Note that $\siml$
  gets to see Alice's inputs to $\onecc$.  Once Alice announces $g,w$
  and $s$ at the end of the commit phase, $\siml$ computes
  $b=g(\theta')\oplus w$, where $\theta'$ is the string of syndrome
  $s$ closest to the stored $\theta_{\overline t}$, and inputs
  ``(commit, $b$)'' into the $\bc$ functionality.  Finally, when
  corrupted Alice opens her commitment, $\siml$ inputs ``open'' into
  $\bc$ if Bob accepted the opening, and inputs ``abort'' if Bob
  aborted.

  It now follows immediately from Lemma~\ref{prop:NA1cc} that the bit
  $b'$ output by the simulated Bob equals the bit $b$ computed by
  $\siml$ and input to $\bc$, except with negligible probability. As
  such, real and ideal model are statistically indistinguishable.
\qed
\end{proof}

\begin{figure}[h]
  \begin{framed}
    \begin{center}
    \end{center}
    {\bf Parties}: The sender Alice and the receiver Bob.\\
    {\bf Inputs}: Alice receives $s_0,s_1\in \bool$ and Bob receives
    $c\in \bool$.
    \begin{enumerate}
    \item Alice inputs ``(commit, $s_0$)'' in $\bc$, Bob receives
      ``committed''.
    \item\label{item:CC} Alice and Bob invoke $\onecc$ with respective inputs $s_1$
      and $c$.
    \item If Alice receives $c=1$ from $\onecc$, she sends ``open'' to
      $\bc$. Bob receives $s_0$ from $\onecc$ and $s_1$ from $\bc$.
    \item Alice outputs $c$, Bob outputs $(s_0,s_1)$ if $c=1$ and
      $\bot$ if $c=0$. Bob outputs ``abort'' if $c=1$ but Alice
      refuses 
      to open her commitment.
    \end{enumerate}
\end{framed}
\caption{Protocol $\textsc{2cc}^{\bc,\onecc}$.}
\label{fig:2cc}
\end{figure}

Consider the candidate $2$-bit cut-and-choose protocol
$\textsc{2cc}^{\bc,\onecc}$ from Fig.~\ref{fig:2cc}.  This protocol
does not implement the full-fledged $\twocc$ functionality, but a
variation $\twocc'$ that gives the sender the option to abort after it
sees the receiver's input $c$. This is because in the protocol the
sender can refuse to open its commitments (or try to cheat when
opening them so that the receiver will reject).  In that case, the
receiver will only learn one of the receiver's two inputs. This will
not influence the security of the resulting $\ot$ scheme since
aborting in any instance of $\twocc'$ will stop the protocol.

Formally, $\twocc'$ is described as follows: it first waits for inputs
$(s_0,s_1)$ from Alice and $c$ from Bob. Upon reception of both
inputs, it sends $c$ to Alice. If $c=0$, it sends $\bot$ to Bob. If
$c=1$, it waits for response ``abort'' or ``continue'' from Alice. On
input ``continue'', $\twocc'$ outputs $(s_0,s_1)$ to Bob and on input
``abort'', it outputs ``abort''.

\begin{proposition}\label{prop:2cc}
  Protocol {\normalfont $\textsc{2cc}^{\bc,\onecc}$} statistically
  quantum-UC-emulates {\normalfont $\twocc'$}. 
\end{proposition}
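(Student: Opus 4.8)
The plan is to exhibit, for each single-party corruption, a simulator $\siml$ attacking the ideal functionality $\twocc'$ that is perfectly indistinguishable from the real attack; the all-honest and all-corrupted cases are trivial. I work in the $(\bc,\onecc)$-hybrid model, so $\siml$ simulates both ideal subroutines and in particular sees every input fed into them. Following the paradigm of Section~\ref{sec:ucsec}, $\siml$ will run the dishonest party internally and simulate the honest party together with the two functionalities by running them honestly; it then suffices to check that the internally simulated honest party's output coincides with what $\twocc'$ delivers on the input extracted by $\siml$.

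\textbf{Corrupted Alice.} Here the honest party is Bob. Since $\siml$ controls the simulated $\bc$, I would read off $s_0$ from Alice's commit message in step~1, and since it controls the simulated $\onecc$, read off $s_1$ from Alice's input in step~\ref{item:CC}; $\siml$ then inputs $(s_0,s_1)$ to $\twocc'$. The functionality returns the receiver's bit $c$, which $\siml$ hands to the internal Alice through the simulated $\onecc$ (exactly what $\onecc$ reveals to the sender), so Alice's view is as in the real protocol. If $c=1$ and the internal Alice sends ``open'' to the simulated $\bc$, then $\siml$ sends ``continue'' to $\twocc'$; if $c=1$ but Alice refuses to open, $\siml$ sends ``abort''; if $c=0$ nothing further is sent. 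Comparing outputs, the simulated honest Bob outputs $\bot$ when $c=0$, outputs $(s_0,s_1)$ when $c=1$ and Alice opened, and outputs ``abort'' when $c=1$ but Alice did not open, which is exactly what $\twocc'$ delivers to Bob. Hence the two models coincide.

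\textbf{Corrupted Bob.} Now the honest party is Alice, and the delicate point is that $\siml$ must simulate her commit phase before it knows $(s_0,s_1)$. This is possible because the simulated $\bc$ sends only the content-independent message ``committed'' during the commit step, so $\siml$ can run step~1 faithfully with no knowledge of $s_0$. In step~\ref{item:CC} I would have $\siml$ extract Bob's choice bit $c$ from his input to the simulated $\onecc$ and forward $c$ to $\twocc'$. If $c=0$, then $\twocc'$ returns $\bot$ to Bob and $\siml$ completes the simulation by having $\onecc$ output $\bot$ and leaving the commitment unopened, needing no knowledge of $(s_0,s_1)$. If $c=1$, the honest Alice answers ``continue'', so $\twocc'$ now returns $(s_0,s_1)$ to Bob; armed with these, $\siml$ has $\onecc$ output $w=s_1$ and opens the simulated $\bc$ to $s_0$, reproducing Bob's real-world view exactly. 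Since Alice's only output is $c$, which matches, and Bob's view is reproduced perfectly, the models are indistinguishable.

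\textbf{Main obstacle.} The only non-routine step is this corrupted-Bob case: $\siml$ has to produce a convincing commit transcript without the committed value and then, precisely when $c=1$, obtain $(s_0,s_1)$ from $\twocc'$ in time to open the commitment consistently. This works because (i) the ideal $\bc$ reveals nothing at commit time and (ii) the staged interaction of $\twocc'$ (first $c$ to the sender, then the conditional release of $(s_0,s_1)$) matches the protocol's message flow, so the extraction of $c$ and the late delivery of $(s_0,s_1)$ slot in exactly where the protocol needs them. Everything else reduces to matching the branches of the protocol against the output rules of $\twocc'$, yielding perfect and hence statistical emulation.
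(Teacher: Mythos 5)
Your proof is correct and follows essentially the same approach as the paper: the same simulator for corrupted Alice (extracting $s_0$ and $s_1$ from her inputs to the simulated $\bc$ and $\onecc$, then matching the open/refuse branches to \emph{continue}/\emph{abort} in $\twocc'$). The corrupted-Bob case, which the paper dismisses as ``similarly straightforward,'' you spell out explicitly, and your observation that the staged interaction of $\twocc'$ delivers $(s_0,s_1)$ exactly when the simulator must equivocate the commitment is precisely the reason the paper's claim goes through.
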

\begin{proof}
  We first consider a corrupted sender Alice.  $\siml$ simulates Bob,
  $\bc$ and $\onecc$ by running them honestly. After
  step~\ref{item:CC}, when $\siml$ has learned Alice's respective
  inputs $s_0$ and $s_1$ to $\bc$ and $\onecc$, it inputs $(s_0,s_1)$
  into the functionality $\twocc'$.  After receiving $c$ from the
  $\twocc'$, $\siml$ makes Bob input $c$ into the $\onecc$. If $c=0$
  then the simulated Bob and $\twocc'$ both output $\bot$. If $c=1$
  then Alice is supposed to open her commitment. If she refuses then
  $\siml$ inputs ``abort'' into $\twocc'$, and the simulated Bob and
  $\twocc'$ both output ``abort''. Otherwise, i.e., if Alice opens the
  commitment (to $s_0$), $\siml$ inputs ``continue'', and the
  simulated Bob and $\twocc'$ both output $(s_0,s_1)$.  This proves
  the claim for a corrupted sender Alice.  Security against a
  corrupted receiver Bob is similarly straightforward.  \qed
\end{proof}

\begin{corollary}\label{cor:2cc}
  Protocol {\normalfont $\textsc{2cc}^\onecc$}, obtained by replacing each instance
  of {\normalfont $\bc$} by {\normalfont $\textsc{bc}^\onecc$}, statistically quantum-UC-emulates
  {\normalfont $\twocc'$} against corrupted sender.
\end{corollary}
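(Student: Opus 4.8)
The plan is to obtain this corollary directly from the quantum UC composition theorem, using Propositions~\ref{prop:bc} and~\ref{prop:2cc} as the two ingredients, together with transitivity of UC-emulation. By definition, $\textsc{2cc}^\onecc$ is nothing but the hybrid protocol $\textsc{2cc}^{\bc,\onecc}$ with every invocation of the ideal functionality $\bc$ replaced by an execution of the subprotocol $\textsc{bc}^\onecc$. The composition theorem then tells us that, since $\textsc{bc}^\onecc$ UC-emulates $\bc$, the protocol $\textsc{2cc}^\onecc$ UC-emulates the hybrid $\textsc{2cc}^{\bc,\onecc}$; combining this with Proposition~\ref{prop:2cc} (which gives that $\textsc{2cc}^{\bc,\onecc}$ emulates $\twocc'$) and invoking transitivity of the emulation relation yields that $\textsc{2cc}^\onecc$ emulates $\twocc'$. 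Note that the resulting protocol still lives in the $\onecc$-hybrid model, with separate $\onecc$ instances coming from step~\ref{item:CC} of $\textsc{2cc}^{\bc,\onecc}$ and from inside each copy of $\textsc{bc}^\onecc$.

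The crucial point to verify is that the corruption setting we care about is compatible with the \emph{one-sided} guarantee of Proposition~\ref{prop:bc}, which only establishes security of $\textsc{bc}^\onecc$ against a corrupted \emph{committer}. Inspecting $\textsc{2cc}^{\bc,\onecc}$ in Fig.~\ref{fig:2cc}, the only use of $\bc$ is in step~1, where the committer is the sender Alice and the receiver is Bob. Hence, for a corrupted sender Alice (with honest Bob), every replaced $\bc$-instance has a corrupted committer and an honest receiver\,---\,precisely the setting covered by Proposition~\ref{prop:bc}. Consequently the composition step is valid in exactly the corruption pattern we need, even though $\textsc{bc}^\onecc$ is not known to be UC-secure against a corrupted receiver, and Proposition~\ref{prop:2cc} already supplies full security of the hybrid against a corrupted sender.

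Concretely, the simulator against a corrupted sender is assembled by taking the simulator of Proposition~\ref{prop:2cc} for $\textsc{2cc}^{\bc,\onecc}$ and plugging into it, for the $\bc$-instance, the committer-simulator of Proposition~\ref{prop:bc}; indistinguishability follows from a standard two-step hybrid argument, with each step justified by one of the two propositions. The main obstacle is purely one of bookkeeping rather than estimation: one must carefully track, through the composition, that a corrupted sender of $\twocc'$ induces \emph{only} the corrupted-committer pattern on the inner $\bc$, so that the restricted security of $\textsc{bc}^\onecc$ is exactly what the composition argument consumes. Once this role-mapping is made precise, no further quantitative bound is required and the statement follows immediately.
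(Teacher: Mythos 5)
Your proposal is correct and takes essentially the same approach as the paper: the paper's own (much terser) proof likewise composes Proposition~\ref{prop:bc} with Proposition~\ref{prop:2cc}, noting that the sender of $\textsc{2cc}^{\bc,\onecc}$ is the committer of the single $\bc$-instance, so the one-sided corrupted-committer guarantee of $\textsc{bc}^\onecc$ is exactly what the replacement step needs. Your write-up only makes explicit the role-mapping and the two-step hybrid/transitivity bookkeeping that the paper leaves implicit.
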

\begin{proof}
  Since $\textsc{bc}^\onecc$ statistically quantum UC-emulates $\bc$
  against malicious committer, and since the sender in
  $\textsc{2cc}^{\bc,\onecc}$ is the committer of $\bc$, we can
  replace $\bc$ with $\textsc{bc}^\onecc$ in protocol
  $\textsc{2cc}^{\bc,\onecc}$ and still maintain UC-security against
  corrupted sender.\qed
\end{proof}

\begin{figure}[h]
  \begin{framed}
    \begin{center}
    \end{center}
    {\bf Parameters}: A family $\mathcal F=\{\bool^n\rightarrow
    \bool^\ell\}$ of universal hash functions.\\
    {\bf Parties}: The sender Alice and the receiver Bob.\\
    {\bf Inputs}: Alice receives $s_0,s_1\in \bool^\ell$ and Bob
    receives $c\in \bool$.
    \begin{enumerate}
    \item Alice chooses $x^A\in_R\{0,1\}^n$ and $\theta^A\in_R
      \{+,\times\}^{n}$ and sends the state $\ket {x^A}_{\theta^A}$ to
      Bob.
    \item\label{st:measure} Upon reception, Bob chooses $\theta^B\in_R\{+,\times\}^n$
      and measures the received state in basis $ \theta^B$. Lets
      $x^B$ denote the measurement outcome. 
    \item For $i=1\dots n$, do \label{st:checking}
      \begin{enumerate}
      \item\label{step:commit} Alice and Bob perform protocol
        $\textsc{commit}^\onecc$ with Bob as the sender and input
        $\theta^B_i$.
      \item\label{step:CC} Alice chooses a selection bit $t_i\in_R\bool$ and they
        invoke an instance of $\onecc$ with Bob as the sender and
        inputs $t_i$ and $x^B_i$.
      \item\label{step:reveal} Whenever $t_i=1$, Bob opens the $i$th
        commitment using protocol $\textsc{reveal}^\onecc$.
      \end{enumerate}
    \item If for some $i$ s.t. $t_i=1$, $\theta_i^A=\theta_i^B$, but
      $x^B_i \neq x^A_i$, Alice aborts. Bob aborts if $t_i=1$ for more
      than $3n/5$ positions. Let $\hat x^A$ (resp. $\hat \theta^A,
      \hat x^B,\hat \theta^B$) be the restriction of $x^A$ (resp. $
      \theta^A, x^B, \theta^B$) to the indices $i$ for which $t_i=0$.
    \item\label{step:partition} Alice sends $\hat \theta^A$ to Bob. Bob constructs sets
      $I_c=\{i\mid \hat \theta^A_i = \hat \theta^B_i\}$ and
      $I_{1-c}=\{i\mid \hat \theta^A_i \neq \hat \theta^B_i\}$ then
      sends $(I_0, I_1)$ to Alice.
    \item Alice chooses $f\in_R \mathcal F$, computes $m_i=s_i\oplus
      f(\hat x^A_{I_i})$ for $i=0,1$ and sends $(f,m_0,m_1)$ to Bob.
    \item Bob outputs $s=m_c\oplus f(\hat x^B_{I_c})$.
    \end{enumerate}
\end{framed}
\caption{Protocol $\textsc{ot}^{\onecc}$.}
\label{fig:ot1cc}
\end{figure}

\begin{lemma}\label{lemma:otdr}
  Protocol {\normalfont $\textsc{ot}^\onecc$} statistically quantum UC-emulates
  {\normalfont $\ot$} for corrupted receiver.
\end{lemma}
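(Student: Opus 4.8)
The plan is to exhibit a simulator $\siml$ following the standard paradigm of Fig.~\ref{fig:typicalsim}: $\siml$ runs the corrupted receiver Bob internally and plays the honest sender Alice together with the ideal functionality $\onecc$ and the commitment functionality realised by $\textsc{bc}^\onecc$, while in the ideal world it interacts with $\ot$. Two things must be achieved. First, $\siml$ has to extract a well-defined choice bit $c$ to hand to $\ot$. Second, it must produce a transcript\,---\,in particular the final message $(f,m_0,m_1)$\,---\,that is statistically indistinguishable from the real one, even though it learns only $s_c$, and not $s_{1-c}$, from $\ot$. Handling of aborts is immediate: whenever the simulated honest Alice would abort, $\siml$ refuses to deliver to $\ot$, and this is perfectly faithful.

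For the extraction I would proceed as follows. Since $\siml$ simulates $\onecc$ honestly, it directly observes all of Bob's inputs $x^B_i$; and since Bob is the committer in every invocation of $\textsc{commit}^\onecc$, Proposition~\ref{prop:bc} (equivalently, the universal extractability of Definition~\ref{def:binding1cc}) lets $\siml$ extract the committed bases $\hat\theta^B$ from the classical information held by the simulated Bob and $\onecc$. Knowing the bases $\hat\theta^A$ it chose as Alice, $\siml$ forms the ``good'' set $G=\{i\in\bar t : \hat\theta^A_i=\hat\theta^B_i\}$ and its complement $\bar G$. When Bob announces the partition $(I_0,I_1)$, $\siml$ sets $c$ to be the index for which $I_{1-c}$ has the larger intersection with $\bar G$, inputs $c$ into $\ot$, receives $s_c$, sets $m_c:=s_c\oplus f(\hat x^A_{I_c})$ using the value $\hat x^A$ it generated, and sets $m_{1-c}$ to a uniformly random string. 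Note that honestly one has $I_c=G$ and $I_{1-c}=\bar G$, so this extraction recovers the intended choice bit.

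Correctness rests on two points. The first is that $I_{1-c}$ carries enough of $\bar G$: since $\{I_0,I_1\}$ partitions $\bar t$, we have $|I_0\cap\bar G|+|I_1\cap\bar G|=|\bar G|$, so the larger part satisfies $|I_{1-c}\cap\bar G|\geq|\bar G|/2$, which is linear in $n$ because Bob committed to his bases before learning $\hat\theta^A$ (so $|\bar G|$ is a constant fraction of $|\bar t|$ except with exponentially small probability, by a Hoeffding-type counting argument). The second, and crux, is a min-entropy bound: conditioned on the simulated Bob's entire view, $\hmin{\hat x^A_{I_{1-c}}}$ is at least (roughly) $|I_{1-c}\cap\bar G|$. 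Granting this, the standard multi-bit-output version of privacy amplification (cf.\ Theorem~\ref{thm:privacyamp}) shows that $f(\hat x^A_{I_{1-c}})$ is statistically close to uniform and independent of Bob's view; hence in the real protocol $m_{1-c}=s_{1-c}\oplus f(\hat x^A_{I_{1-c}})$ is close to uniform for \emph{either} value of $s_{1-c}$, matching $\siml$'s random $m_{1-c}$. The message $m_c$ is produced exactly as in the real execution, so it is distributed identically, and indistinguishability follows by a triangle inequality over the sampling error, the privacy-amplification error, and the extraction-failure probability.

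The hard part will be establishing the min-entropy bound. The difficulty is that a malicious Bob chooses both his committed bases and his reported values $x^B_i$, possibly adaptively and dishonestly, and only the checked positions ($t_i=1$) are tested for consistency. I would argue it by combining the binding/extractability of the basis commitments\,---\,which pins Bob down to the extracted bases $\hat\theta^B$\,---\,with a quantum sampling argument in the style of Bouman--Fehr on the checked subset: passing the checks forces Bob's joint state on the unchecked positions to be close to one in which, on every position of $\bar G$, his register is (up to a small error) measured in the basis conjugate to Alice's encoding, so by the uncertainty relation for conjugate single-qubit bases his information about the corresponding bit $x^A_i$ is negligible (guessing probability at most $\tfrac12$ per such position). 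Summing the per-position uncertainty over $I_{1-c}\cap\bar G$ yields the required linear lower bound on $\hmin{\hat x^A_{I_{1-c}}}$ given Bob's quantum side information, which together with the first point closes the argument.
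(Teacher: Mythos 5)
Your overall plan---a direct simulator that extracts $c$ from the committed bases and fakes $m_{1-c}$ via privacy amplification---is the classic BBCS-style dishonest-receiver analysis, and it is a genuinely different route from the paper's, which is purely modular: the paper observes that steps~\ref{step:commit}--\ref{step:reveal} of $\textsc{ot}^\onecc$ are \emph{literally} protocol $\textsc{2cc}^\onecc$ with the roles swapped, invokes Corollary~\ref{cor:2cc} and UC composition to replace them by a single call to $\twocc'$, and then notes that the resulting protocol is the $\twocc$-based OT of~\cite{fkszz13} up to the abort option (handled by observing that an adversary aborting inside a subroutine is indistinguishable from one aborting right after it). Everything you flag as ``the hard part'' is thereby imported wholesale from the analysis of~\cite{fkszz13} rather than re-proved.

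As written, though, your proposal has a genuine gap exactly at that hard part, and the sketch you give for it would fail. First, ``guessing probability at most $\tfrac12$ per position, summed over $I_{1-c}\cap\bar G$'' is not a valid derivation of a bound on $\hmin{\hat X^A_{I_{1-c}}|E}$: a dishonest Bob need not measure at all---he can keep the received qubits coherent and entangled across positions---and conditional min-entropy against quantum side information is not additive over positions, so position-wise uncertainty statements do not compose into the required global bound. What is needed is a global argument in the Bouman--Fehr sampling framework, as carried out in~\cite{fkszz13,bouman-fehr}: conditioned on passing the checks, the \emph{joint} post-check state is close to one consistent with the extracted bases, from which a linear (smooth) min-entropy bound on the whole substring follows. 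Second, at unchecked positions the basis commitments are never opened, so binding/extractability by itself pins nothing down there; the link between passing the sampled checks and Bob's ignorance at unchecked basis-mismatched positions is precisely the nontrivial content you would have to re-derive. Third, invoking Proposition~\ref{prop:bc} (or Definition~\ref{def:binding1cc}) \emph{inside} your simulator requires a hybrid/composition step, since extractability was established for $\textsc{bc}^\onecc$ as a stand-alone emulation of $\bc$---this is the very step the paper makes explicit by first substituting $\twocc'$. So your approach can in principle be completed, but only by reconstructing the entire OT analysis of~\cite{fkszz13}; the paper's reduction exists precisely to avoid doing that.
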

\begin{proof}
  Note that steps~\ref{step:commit} through~\ref{step:reveal} of
  protocol $\textsc{ot}^\onecc$ are identical to protocol
  $\textsc{2cc}^\onecc$ defined above with Bob as the sender and Alice
  as the receiver. Since $\textsc{2cc}^\onecc$ statistically
  quantum-UC-emulates $\twocc'$ against corrupted sender, we may
  replace steps~\ref{step:commit}-\ref{step:reveal} by a single call
  to $\twocc'$ with Bob as the sender and Alice as the receiver, and
  analyze the security of this protocol instead. The only difference
  between this protocol and the $\twocc$-based oblivious-transfer
  protocol from~\cite{fkszz13} is that the former uses $\twocc'$
  instead. However, this change does not affect UC-security since any
  adversary that aborts during one of the $\textsc{2cc}^\onecc$
  subroutines is indistinguishable from an adversary that aborts right
  after the same subroutine. It directly follows from the analysis
  of~\cite{fkszz13}, that protocol $\textsc{ot}^\onecc$ statistically
  quantum-UC-emulates $\ot$ against corrupted receiver. \qed
\end{proof}

\begin{lemma}\label{lemma:otds}
  Protocol {\normalfont $\textsc{ot}^\onecc$} statistically quantum UC-emulates
  {\normalfont $\ot$} for corrupted sender.
\end{lemma}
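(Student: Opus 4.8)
The plan is to build a simulator $\siml$ in the usual way (Section~\ref{sec:UC}): $\siml$ runs the dishonest sender Alice internally and simulates the honest receiver Bob together with the subroutines $\onecc$ and $\textsc{commit}^\onecc$, running them honestly except for \emph{when} and \emph{in which basis} Bob measures. Concretely, $\siml$ picks a random $\theta^B$, commits to the bits $\theta^B_i$ honestly, and during the checking loop (step~\ref{st:checking}) it measures the $i$th qubit in basis $\theta^B_i$, reveals $x^B_i$, and opens the $i$th commitment, but only for the positions with $t_i=1$ that Alice asks to see; since $\siml$ controls $\onecc$ this is all it needs. On the untested positions it defers the measurement. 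After Alice announces $\hat\theta^A$ in step~\ref{step:partition}, $\siml$ forms the matching set $S=\{i\mid \hat\theta^A_i=\hat\theta^B_i\}$ over the untested positions, sends $(I_0,I_1):=(S,\bar S)$ to Alice, and then measures \emph{all} untested qubits in Alice's announced basis $\hat\theta^A$ to obtain a string $\tilde x$.

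The extraction step is then immediate: when Alice sends $(f,m_0,m_1)$, the simulator computes $s_0:=m_0\oplus f(\tilde x_{I_0})$ and $s_1:=m_1\oplus f(\tilde x_{I_1})$ and inputs $(s_0,s_1)$ into the ideal $\ot$, which delivers $s_c$ to the honest receiver. Aborts are simulated exactly as honest Bob would produce them (in particular Bob's abort when $t_i=1$ on more than $3n/5$ positions), so aborting runs behave identically in both models.

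It then remains to argue indistinguishability, which I would split into the transcript and the receiver's output. For the transcript, the only channel that could reveal $\theta^B$ to Alice is the commitments of step~\ref{step:commit}; these are never opened on the untested positions, so by the hiding property of $\textsc{bc}^\onecc$ (Theorem~\ref{thm:secbob1cc}, with a union bound over the $n$ commitments) the untested bases $\hat\theta^B$, and hence the set $S$, are statistically hidden and look uniformly random to Alice. Consequently the pair $(S,\bar S)$ that $\siml$ sends is statistically indistinguishable from the pair that honest Bob sends for \emph{either} value of $c$, and the deferral of the untested measurements is invisible because their outcomes never enter the transcript. For the receiver's output, note that in the real world Bob outputs $m_c\oplus f(\hat x^B_{I_c})$, and that $I_c$ is by construction the matching set, on which Bob measures in the very basis $\hat\theta^B=\hat\theta^A$ that $\siml$ uses; hence $\hat x^B_{I_c}$ and $\tilde x_{I_c}$ have the same distribution, matching the value $s_c=m_c\oplus f(\tilde x_{I_c})$ output by the ideal functionality.

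The step I expect to be the main obstacle is reconciling the fact that $\siml$ does not know the honest receiver's input $c$: it labels the matching set as $I_0$, whereas honest Bob labels it as $I_c$. This is settled by the symmetry that the hiding of $\theta^B$ confers on the matching partition — since $S$ is uniform from $\env$'s viewpoint, the substitution $S\leftrightarrow\bar S$ is distribution-preserving and couples the two models so that both the transcript $(I_0,I_1)$ and the delivered output coincide. Because $\siml$ extracts and forwards \emph{both} messages, the ideal functionality's choice of $s_c$ then reproduces exactly what the honest receiver computes in the real execution. Collecting the hiding error of Theorem~\ref{thm:secbob1cc} yields the claimed statistical indistinguishability.
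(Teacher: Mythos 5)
Your simulator is essentially the paper's: it defers honest Bob's measurement, ``rushes'' the tested positions by exploiting control of $\onecc$, answers the partition message without knowing $c$, and finally measures all remaining qubits in Alice's basis $\hat\theta^A$ to extract \emph{both} $s_0,s_1$ and feed them to the ideal $\ot$, so the delivered $s_c$ coincides with what honest Bob would output. The only immaterial deviations are that the paper's $\siml$ sends a uniformly random partition rather than your matching-set-labeled-$I_0$ partition, and that you make explicit the hiding/union-bound argument for why the partition message is indistinguishable, which the paper's terser proof leaves implicit.
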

\begin{proof}
  Let Alice be the corrupted sender and Bob the honest
  receiver. $\siml$ simulates Bob and $\onecc$ by running them
  honestly, {\em except} that Bob does {\em not} measure the received
  state in step~\ref{st:measure} but stores it, and in
  step~\ref{step:CC}, whenever Alice inputs $t_i=1$ into $\onecc$,
  $\siml$ ``rushes'' and measures the $i$th qubit in basis
  $\theta^B_i$ and inputs the outcome $x^B_i$ in the
  $\onecc$. Furthermore, in step~\ref{step:partition}, $\siml$ replies
  to Alice with a random partition $(I_0,I_1)$.  At the end of the
  protocol, $\siml$ measures the remaining qubits in Alice's basis
  $\hat \theta^A$ to obtain $\hat x^B$, computes $s_i=m_i\oplus f(\hat
  x^B_{I_i})$ for $i = 0,1$, and sends $(s_0, s_1)$ to the ideal $\ot$
  functionality.

  The output of $\ot$, i.e., $s_c$, coincides with the string that a
  fully honest Bob would have output; hence, we have
  indistinguishability between the real and the ideal model.
  \qed
\end{proof}

\begin{theorem}
  {\normalfont $\onecc$} is statistically quantum UC-complete.
\end{theorem}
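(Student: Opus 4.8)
The plan is to chain the two preceding lemmas into a full UC-emulation of $\ot$ by the protocol $\textsc{ot}^\onecc$, and then to lift this to completeness using the known universality of oblivious transfer together with the quantum UC composition theorem.

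First I would combine Lemma~\ref{lemma:otdr} and Lemma~\ref{lemma:otds}. The former establishes that $\textsc{ot}^\onecc$ statistically quantum-UC-emulates $\ot$ against a corrupted receiver, and the latter against a corrupted sender. Since in this two-party setting the corruption patterns that require a simulator are exactly the corruption of one of the two players (the all-honest case reduces to correctness, which is immediate from the construction, and there is nothing to protect when both are corrupt), these two lemmas jointly show that $\textsc{ot}^\onecc$ statistically quantum-UC-emulates $\ot$ in the $\onecc$-hybrid model.

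Next I would invoke the completeness of oblivious transfer: $\ot$ is universal for secure two-party computation~\cite{k88}, and this universality carries over to the quantum UC framework~\cite{u09}, so every two-party functionality $\sf F$ admits a protocol $\pi^\ot$ that statistically quantum-UC-emulates $\sf F$ in the $\ot$-hybrid model. The key step is then the universal composition theorem of the quantum UC framework~\cite{u09}: composing $\pi^\ot$ with $\textsc{ot}^\onecc$, i.e.\ replacing every ideal call to $\ot$ in $\pi^\ot$ by an invocation of $\textsc{ot}^\onecc$, yields a protocol in the $\onecc$-hybrid model that statistically quantum-UC-emulates $\sf F$. As $\sf F$ is arbitrary, this establishes that $\onecc$ is statistically quantum UC-complete.

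Most of the technical work has already been carried out in the earlier lemmas, so the remaining points are framework-level rather than computational. The steps I would take most care over are verifying that the quantum UC composition theorem applies to the statistical (information-theoretic) security notions used here, and checking that $\textsc{ot}^\onecc$ runs in a number of rounds independent of the security parameter, so that replacing the polynomially many $\ot$-instances invoked by $\pi^\ot$ accumulates only a negligible total simulation error. This accounting of the composed error is the only part I expect to require genuine attention.
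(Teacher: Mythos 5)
Your proposal is correct and follows essentially the same route as the paper: combine Lemmas~\ref{lemma:otdr} and~\ref{lemma:otds} to conclude that $\textsc{ot}^\onecc$ statistically quantum-UC-emulates $\ot$, then invoke the known quantum-UC-completeness of $\ot$ (via the composition theorem of~\cite{u09}) to conclude completeness of $\onecc$. The paper states this in two sentences; your additional care about the statistical composition theorem and error accumulation is sound but not a departure from its argument.
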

\begin{proof}
  We have shown that $\textsc{ot}^\onecc$ statistically
  quantum-UC-emulates $\ot$. Since $\ot$ is quantum-UC-complete, we
  conclude that $\onecc$ is also quantum-UC-complete.\qed
\end{proof}

\bibliographystyle{plain}
\bibliography{big}

\end{document}